\theoremstyle{thmstyleone}%
\newtheorem{theorem}{Theorem}% 
\newtheorem{definition}{Definition}%
\newtheorem{lemma}{Lemma}
\newtheorem{observation}{Observation}
\newtheorem{corollary}{Corollary}
\DeclareMathOperator{\troot}{root}
\DeclareMathOperator{\bdt}{BDT}
\DeclareMathOperator{\obdt}{OBDT}
\DeclareMathOperator{\bd}{BD}
\DeclareMathOperator{\edges}{edges}
\newcommand{\cupdot}{\mathbin{\mathaccent\cdot\cup}}
\newcommand{\mathbff}[1]{\boldsymbol{\mathbf{#1}}}
\begin{document}

\title{Categorizing Merge Tree Edit Distances by Stability using Minimal Vertex Perturbation}

\author{Florian Wetzels and Christoph Garth,\\University of Kaiserslautern-Landau}

\maketitle

\begin{abstract}
\noindent
This paper introduces a novel stability measure for edit distances between merge trees of piecewise linear scalar fields.
We apply the new measure to various metrics introduced recently in the field of scalar field comparison in scientific visualization.
While previous stability measures are unable to capture the fine-grained hierarchy of the considered distances, we obtain a classification of stability that fits the efficiency of current implementations and quality of practical results.
Our results induce several open questions regarding the lacking theoretical analysis of such practical distances.
\end{abstract}

\section{Introduction}

In scientific visualizations and data analysis, comparative analysis of scalar fields through distance measures on topological descriptors has received increasing interest in recent years~\cite{surveyComparison2021,DBLP:journals/corr/abs-2110-05631}.
Many distances for various abstractions have been defined, reaching from persistence diagrams~\cite{Cohen-Steiner2007,edelsbrunner09,vidal_vis19,interleaving_distance} to contour trees~\cite{DBLP:journals/cgf/LohfinkWLWG20,DBLP:journals/tvcg/ThomasN11} and Reeb graphs~\cite{DBLP:conf/3dor/BauerFL16,DBLP:journals/entcs/FabioL12,DBLP:journals/dcg/FabioL16}, or even geometry-incorporating abstractions such as extremum graphs~\cite{DBLP:conf/apvis/NarayananTN15,ThomasN13}.
The merge tree abstraction~\cite{DBLP:journals/tvcg/BollenTL23,morozov14,intrinsicMTdistance,Yan_geometry_aware,oesterling_time_varying_mergetrees,BeketayevYMWH14,DBLP:journals/corr/abs-2404-05879} has received particular interest, specifically so-called edit distances~\cite{DBLP:journals/cgf/SaikiaSW14,DBLP:journals/tvcg/SridharamurthyM20,wetzels2022branch,DBLP:journals/tvcg/PontVDT22,taming} between them.
Such merge tree edit distances have been applied successfully for deriving novel visualization and analysis techniques, e.g.\ for interpolation~\cite{DBLP:journals/tvcg/PontVDT22,DBLP:journals/tvcg/WetzelsPTG24,DBLP:journals/tvcg/PontVT23}, embedding~\cite{DBLP:journals/tvcg/PontT24} or summarization~\cite{DBLP:journals/tvcg/PontVDT22,wetzels2022path,DBLP:journals/cgf/LohfinkWLWG20,DBLP:journals/tvcg/WetzelsPTG24,10.2312:evs.20241069} of ensembles and time series.
Furthermore, edit distances come in a large variety of constrained forms, i.e.\ by restricting the allowed operations.
Those usually trade computational complexity for expressiveness of the distance, making them a powerful tool in practice.
While there is a plethora of different merge tree edit distances (often 
including 
% with
publicly available 
% open-source
implementations~\cite{DBLP:journals/tvcg/TiernyFLGM18,DBLP:journals/tvcg/PontVDT22,wetzels2022path,taming}) with vastly diverging complexity and expressiveness, theoretic results on their quality are sparse.
One very important property of such distances defined on abstractions is stability: do small changes in the domain only induce small changes in the abstraction, and specifically small distances in the applied metric?
Though some stability criteria have been proposed for merge tree edit distances~\cite{DBLP:journals/cgf/SaikiaSW14,taming} and tested experimentally~\cite{taming,wetzels2022path,wetzels2022branch}, formal theoretical analyses of stability remain underexplored.
This stands in contrast to other types of metrics such as interleaving distances~\cite{morozov14,interleaving_distance} or related ones~\cite{DBLP:journals/tvcg/BollenTL23}, as well as other abstractions~\cite{DBLP:conf/3dor/BauerFL16,DBLP:journals/entcs/FabioL12,DBLP:journals/focm/BauerLM21}, for which a formal study of stability has seen more interest.
To fill in this gap, we propose a novel stability measure that overcomes limitations of established ones when working with merge tree edit distances.

\begin{figure}
    \centering
    \resizebox{0.6\linewidth}{!}{
    \Large
    \begin{tikzpicture}[xscale=1,yscale=1,
    box1/.style = {draw,red,ultra thick,inner sep=0.2pt,rounded corners=15pt},
    box2/.style = {draw,red,ultra thick,inner sep=0.2pt,rounded corners=15pt},
    box3/.style = {draw,red,ultra thick,inner sep=5pt,rounded corners=16pt}]
    
    \node[circle] at (0, 0) (sc) {\bfseries SC};
    \node[circle] at (-2, 2) (vs) {\bfseries VS};
    \node[circle] at (2, 2) (es) {\bfseries ES};
    \node[circle] at (0, 4) (hs) {\bfseries HS};
    
    \node[blue] at (1, 1) (bdt) {\small $\mathbff{\theta(n^2)}$};
    \node[green!70!black] at (-1, 1) (clas) {\small $\mathbff{\theta(n^2)}$};
    \node[red] at (0, 2.1) (bdi) {\small $\mathbff{\mathcal{O}(n^4)}$};
    \node[black] at (0, 3.2) (bdi) {\small NP-c};

    \node[fill=blue!80,label={[right,yshift=-4pt,xshift=3pt]BDT-based}] at (3.5, 4.1-1.2) (bdt_l) {};
    \node[fill=green!80!black,label={[right,yshift=-4pt,xshift=3pt]Classic constrained}] at (3.5, 3.5-1.2) (clas_l) {};
    \node[fill=red,label={[right,yshift=-4pt,xshift=3pt]Branch/Path Mapping}] at (3.5, 2.9-1.2) (bdi_l) {};
\node[fill=black,label={[right,yshift=-4pt,xshift=3pt]Deformation-based}] at (3.5, 2.3-1.2) (bdt_l) {};

    % \draw [rounded corners,red,ultra thick,inner sep=11pt,rounded corners=13pt] ($ (sc) + (0,-0.85) $)--($ (vs) + (-0.85,0) $)--($ (vs) + (0,0.85) $)--($ (es) + (0,0.85) $)--($ (es) + (0.85,0) $)--cycle;
    \draw [red,ultra thick] ($ (sc) + (-0.42,-0.42) $) arc(225:315:0.59) --($ (es) + (0.42,-0.42) $) arc(-45:90:0.59) -- ($ (vs) + (0,0.59) $) arc(90:225:0.59) -- cycle;
    \node[box1,green!80!black,rotate fit=45,fit=(sc)(vs)] {};
    \node[box2,blue!80,rotate fit=45,fit=(sc)(es)] {};
    % \node[tri,red,fit=(sc)(es)(vs)] {};
    \node[box3,black,rotate fit=45,fit=(sc)(es)(hs)] {};
    
    \end{tikzpicture}
    }
    \caption{Illustrative visualization of the stability-based hierarchy of the different edit distances with their complexity annotated as well. A class of edit distance encloses a type of perturbation if it is stable against them.}
    \label{fig:hierarchy}
\end{figure}
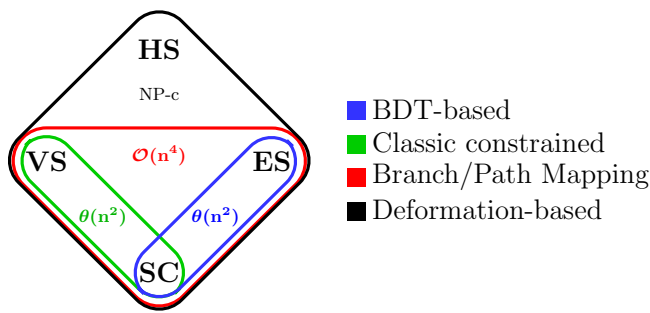

\smallskip
\noindent
\textbf{Fine-grained Stability.}
The varying complexity and expressiveness of merge tree edit distances appear to be closely linked to their stability properties.
A fine-grained understanding of stability is essential — not only to assess their practical limitations but also to inform the design of new methods that strike a balance between robustness and efficiency.
In many cases, enforcing stronger stability comes at the cost of increased computational complexity, making certain distances impractical for large-scale applications.
Understanding these trade-offs in detail is crucial for developing methods that remain both robust and practically viable.

For example, Wetzels et al.\ experimentally observed that a constrained variant of the so-called deformation-based edit distance exhibits instabilities in the presence of saddle swaps, while its unconstrained counterpart does not~\cite{taming}.
This insight directly led to the development of a heuristic algorithm~\cite{DBLP:journals/corr/abs-2501-05529} capable of handling saddle swaps in a controlled manner, improving result quality while maintaining feasible runtime performance.
More broadly, the development of deformation-based edit distances~\cite{wetzels2022branch,wetzels2022path,taming} has also been driven by such insights into the behavior in specific situations.

However, these findings remain largely experimental, lacking a formal theoretical foundation.
To address this, we introduce a new stability measure that examines the behavior of edit distances under minimal perturbations of single vertices.
This finer level of analysis reveals the specific conditions under which a given distance remains stable, offering deeper insight into its suitability for different types of data.
By making stability an integral part of method development, we aim to provide a principled foundation for designing more effective and efficient distance measures.

\smallskip
\noindent
\textbf{Max vs Sum.}
Another limitations of established stability measures that the new model overcomes is their inherent incompatibility with edit distances:
The $L_\infty$-norm measures the \emph{maximum} deviation between two data points, whereas edit distances typically measure the \emph{sum} of all deviations.
This makes formal stability for edit distances unrealistic\footnote{Excluding distances such as reeb graph edit distances that break the typical edit distance framework by taking the maximum deviation, though in a hidden manner.}.
By considering perturbations of single vertices of the scalar field, we overcome this issue without changing the basic paradigm of the merge tree distance.
This is important for applications: distances based on the sum have been proven to be powerful tools, in particular for visualization purposes.
This can be seen in the popularity of the Wasserstein distance for persistence diagrams or the aforementioned visualization methods derived from edit distances.
In bottleneck distances, practically relevant changes can often be overshadowed by the bottleneck, limiting them in certain applications.

\smallskip
\noindent
\textbf{Contribution.}
To be more precise, we define minimal vertex perturbations on piecewise linear scalar fields (Section~\ref{sec:transformations}) and provide a classification of those into four types: simple changes (SC), edge splits (ES), vertical branch swaps (VS) and horizontal branch swaps (HS).
We then analyze which distances are stable against which types of perturbations (Section~\ref{sec:perturbation_stability}).
As our main result, we obtain a hierarchy of stability (see Figure~\ref{fig:hierarchy}) on the distances which nicely fits their complexity and the running times of existing implementations: algorithms able to handle \emph{either} vertical swaps \emph{or} edge splits have quadratic running time, handling both currently requires quartic time, whereas handling horizontal instabilities leads to NP-hardness.
% The hierarchy is illustrated in Figure~\ref{fig:hierarchy}.

Our findings on minimal vertex perturbations also transfer to more general data perturbations, as discussed in Section~\ref{sec:sequence_stability}.
The stability properties derived here have been called \emph{local} or \emph{finite} stability by Pegoraro in previous works~\cite{pegoraro2024finitelystableeditdistance}.

Our results raise several open questions, discussed in Section~\ref{sec:conclusion} along with a refined hierarchy and future work.
Overall, this paper provides a foundational framework for a more fine-grained view on stability of topological descriptors necessary to capture the nuances of practically applicable distance measures.

\smallskip
\noindent
\textbf{Note on practical relevance.}
While stability against minimal vertex perturbation appears as a weak property at first glance, the provided hierarchy indeed matches the performance in practical settings.
In scientific visualization, persistence-based topological simplification~\cite{DBLP:conf/focs/EdelsbrunnerLZ00} is a well-established preprocessing step for noisy data.
In such a scenario, stability against insertion of small actual features (as opposed to noise) is practically more relevant and instabilities have been repeatedly shown to inhibit interpretability of results~\cite{wetzels2022branch,wetzels2022path,taming,DBLP:journals/tvcg/WetzelsPTG24,wetzels24evolution}, even without the presence of noise.
Figure~\ref{fig:tosca} shows the impact of stability on result quality on an established benchmark dataset.

\begin{figure}[t!]
    \centering
    \includegraphics[width=\linewidth]{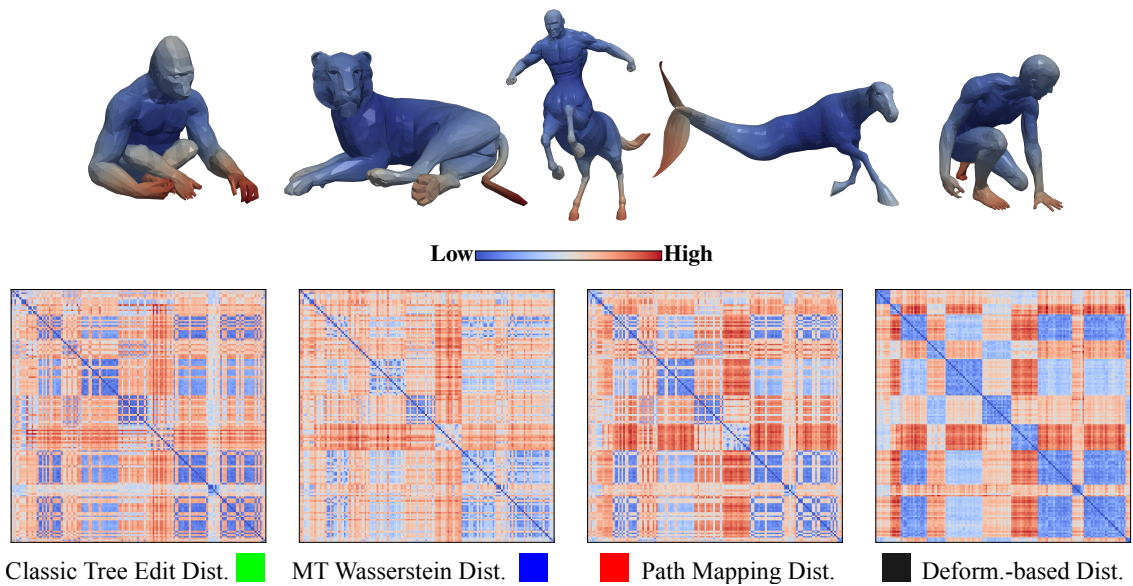}
    \caption{Illustration of the considered stability properties in practice: The top row shows members of the TOSCA shape matching ensemble~\cite{DBLP:series/mcs/BronsteinBK09}, an established benchmark dataset for distance measures on topological descriptors~\cite{DBLP:journals/tvcg/SridharamurthyM20,DBLP:journals/tvcg/WetzelsPTG24}. The 3D meshes describing various animal shapes in differing poses have an average geodesic field~\cite{HilagaSKK01} attached, visualized through the cool-warm color map. The bottom row shows heatmaps (same color map) for distance matrices based on various metrics. Entry $(i,j)$ represents the distance between the $i$-th and $j$-th member of the ensemble, which is ordered lexicographically by shape and pose. We observe that more stable metrics yield ``cleaner'' matrices, i.e.\ clearly separated clusters representing a single shape each. Note how off-diagonal areas of low distance show high similarity between e.g.\ different human shapes and the gorilla. Colors next to distances indicate the types from Figure~\ref{fig:hierarchy}. All matrices were computed using the existing implementations provided with the corresponding papers~\cite{taming,wetzels2022path,DBLP:journals/tvcg/PontVDT22}.}
    \label{fig:tosca}
\end{figure}

\section{Preliminaries}
\label{sec:preliminaries}

In this section, we recap definitions of basic notations and topological concepts needed in this work.
For a proper introduction into computational topology, see~\cite{edelsbrunner09}.
A piecewise linear scalar field is a simplicial complex $\mathbb{X}$ (see~\cite{edelsbrunner09} for a definition) together with an injective function defined on its vertices, $f: V(\mathbb{X}) \rightarrow \mathbb{R}$
(we use $V(G),E(G)$ for the vertices and edges of any graph or complex).
% Note that we use $V(G),E(G)$ for the vertices and edges of any graph or complex.
Next, we define merge trees  of scalar fields.
For ease of notation, we define merge trees as split trees, but they can be easily adapted for join trees.
First, we introduce notations for trees in general.
We consider trees using parent edges, i.e.\ an unordered tree $T$ is a connected, directed graph without any undirected cycles and a unique sink, denoted $\troot(T)$.
We denote the children of a node $v$ by $C_T(v)$, its degree by $\deg_T(v) = |C_T(v)|$ and the maximum degree of all nodes in $T$ by $\deg(T)$.
An \emph{ordered} tree is an unordered tree together with a partial ordering for the children of each node.
We write $T_1 \subseteq T_2$ if $T_1$ is a subgraph of $T_2$, i.e.\ $V(T_1) \subseteq V(T_2)$ and $E(T_1) \subseteq E(T_2)$.
We write $T_1 \subset T_2$ if $T_1 \subseteq T_2$ and one of the two inclusions is strict, i.e.\ either $V(T_1) \subset V(T_2)$ or $E(T_1) \subset E(T_2)$.
On ordered trees, the ordering needs to be consistent for the subtree property to hold.

Given a piecewise linear scalar field $\mathbb{X},f$, we define its augmented merge tree $\mathcal{T}_f$ as follows. 
The nodes of $\mathcal{T}_f$ are the vertices of $\mathbb{X}$ (or the preimage of $f$), i.e.\ $V(\mathcal{T}_f) = V(\mathbb{X})$.
There is an edge $(y,x) \in E(\mathcal{T}_f)$ if and only if $y$ is the lowest vertex in some connected component $Y$ of the superlevel set $f^{-1}((f(x),\infty))$ and $\{x,y'\} \in \mathbb{X}$ for some $y' \in Y$.

The root of $\mathcal{T}_f$ is called a minimum node.
Any other vertex $v \in \mathbb{X}$ with $\deg_{\mathcal{T}_f}(v) = 1$ is called a regular vertex.
If $\deg_{\mathcal{T}_f}(v) = 0$, it is called a maximum vertex, and a saddle if $\deg_{\mathcal{T}_f}(v) > 1$.
Note that we treat local minima of the scalar field as regular vertices (with the global minimum as an exception), as they behave equivalently in the context of augmented split trees.

The augmented merge tree $\mathcal{T}_f$ can be reduced to the unaugmented merge tree $T_f$ by pruning all regular vertices.
Here, pruning a regular vertex $y$ means replacing edges $(x,y),(y,z)$ by the edge $(x,z)$ and then removing $y$.
If the underlying space $|\mathbb{X}|$ of the simplicial complex is a $d$-manifold with $d \geq 2$, $T_f$ is an unordered, node-labeled tree with $\deg_{T_f}(\troot(T_f))=1$ and  $\deg_{T_f}(v)\neq1$ for all $v \in V(T_f)$ with $v \neq \troot(T_f)$.
In~\cite{wetzels2022branch,wetzels2022path}, such trees are called \emph{abstract merge trees}.
For the remainder of this paper, we assume such scalar fields.

In practical applications, merge trees are often considered as their \emph{branch decompositions}.
A \emph{path} of length $k \geq 2$ in a tree $T$ is a sequence of vertices $p=v_1 ... v_k \in V(T)^k$ with $(v_{i},v_{i-1}) \in E(T)$ for all $2 \leq i \leq k$. 
We denote the edges within a path $p$ by $\edges(p) \coloneqq \{(v_{i},v_{i-1}) \bigm| 2 \leq i \leq k\}$.
A \emph{branch} of $T$ is a path that ends in a leaf.
A branch $b=b_1 ... b_k$ is called a parent branch of another branch $a=a_1 ... a_l$ (and $a$ a child branch of $b$) if $a_1 = b_i$ for some $1 < i < k$.
A set of branches $B=\{B_1,...,B_k\}$ of a merge tree $T$ is called a branch decomposition of $T$ if $\{\edges(B_1),...,\edges(B_k)\}$ is a partition of $E(T)$.
Every branch decomposition $B$ of an abstract merge tree $T$ contains exactly one branch $b \in B$ with $\troot(T) \in b$: the \emph{main branch} of $B$.
The persistence-based branch decomposition or simply \emph{the} branch decomposition of $T_f$, denoted $\bd(T_f)$, is derived by the elder rule~\cite{edelsbrunner09}: we track branches starting from maxima; when merging multiple branches at a saddle, we always continue the longest branch and stop the others.
For a branch $b=b_1\dots b_k \in \bd(T_f)$, we define $\mathcal{B}_f(v) = b$ if $v=b_i$, $i>1$.
The values $f(b_1)$ and $f(b_k)$ are called death and birth values of $b$.
The \emph{branch decomposition tree} of a merge tree $T_f$, $\bdt(T_f)$, is the unordered tree structure defined by the persistence-based branch decomposition $\bd(T_f)$ as vertex set and the parent-child relation of branches.
The starting nodes of child branches in a branch decomposition induce a partial ordering $<_{\text{B}}$ on the children in $\bdt(T_f)$ as follows.
Let $a = a_1\dots a_k$, $b = b_1\dots b_l$ be child branches of $c = c_1 \dots c_m$, s.t.\ we have $a_1 = c_i$ and $b_1 = c_j$.
We have $a <_{\text{B}} b$ if $i < j$.
The \emph{ordered} branch decomposition tree of $T_f$, denoted $\obdt(T_f)$ is the ordered tree defined by $\bdt(T_f)$ and $<_{\text{B}}$.

\subsection*{Edit Distances}
Given a set of edit operations transforming a labeled tree into another one, tree edit distances define the distance between two trees $T_1,T_2$ as the total costs of a cost-optimal edit sequence transforming $T_1$ into $T_2$.
They
have been introduced by Zhang for both ordered~\cite{DBLP:journals/siamcomp/ZhangS89} and unordered rooted trees~\cite{DBLP:journals/ipl/ZhangSS92}, using node insertion, deletion and relabel as edit operations.
Due to high computational complexity, they often come in constrained forms, allowing deletions and insertions only in specific cases, see~\cite{treeEditSurvey} for an overview.
Many applications to merge trees exist, using specific tree structures, labels of nodes and cost functions on edit operations.
Most of them use the same edit operation set as the original distance by Zhang, to which we refer as \emph{classic edit distances}.
In contrast, Wetzels and Garth formulated a new set of edit operations tailored to merge trees specifically, the \emph{deformation-based edit distance}~\cite{wetzels2022path,taming}.
An almost identical set of edit operations has also been proposed by Pegoraro and Secchi~\cite{pegoraro2024functionaldatarepresentationmerge,pegoraro2024finitelystableeditdistance}.
Note that while we ignore methods that are called edit distance but break the typical framework, like the Reeb graph edit distance or the merge tree matching distance, the deformation-based edit operations are closely related to those.

The \textbf{classic set of edit operations} (see Figure~\ref{fig:edit_operations_classic}) on node-labeled trees is defined as follows.
A node deletion removes a node $v$ as a child of $p$ and replaces each edge $(c,v)$ by $(c,p)$ for all $c \in C_T(v)$.
A node insertion in the inverse operation.
A node relabel changes the label $\ell(v)$ of a single node $v$.
On ordered trees, all edit operations preserve the ordering of the children.
The cost of a relabel operation changing the label of a node or edge from $\ell_1$ to $\ell_2$ is denoted by $c(\ell_1,\ell_2)$.
The cost of an insertion of a node or edge with label $\ell_2$ is denoted by the same cost function, where the first label is replaced by an empty symbol representing a non-existing node or edge, $c(\bot,\ell_2)$, a deletion symmetrically, $c(\ell_1,\bot)$.
The label and cost function and the empty symbol can be chosen freely depending on the application.

\begin{figure}
 \centering 
 \resizebox{0.9\linewidth}{!}{
 \begin{tikzpicture}[yscale=0.8]
 \definecolor{lightred}{rgb}{1,0.6,0.6}
 
 % tree 1
 \begin{scope}[shift={(0,0)}]
 
 \node[draw,circle,fill=gray] at (0, 4) (root) {\bfseries A};
 \node[draw,circle,fill=cyan] at (-2, 2) (n1) {\bfseries B};
 \node[draw,circle,fill=green!90!black] at (0, 2) (n2) {\bfseries C};
 \node[draw,circle,fill=orange] at (2, 2) (n3) {\bfseries D};
 \node[draw,circle,fill=gray] at (-1, 0) (c1) {\bfseries E};
 \node[draw,circle,fill=gray] at (1, 0) (c2) {\bfseries F};
 
 \draw[gray,ultra thick] (root) -- (n1);
 \draw[gray,ultra thick] (root) -- (n2);
 \draw[gray,ultra thick] (root) -- (n3);
 \draw[gray,ultra thick] (n2) -- (c1);
 \draw[gray,ultra thick] (n2) -- (c2);
     
 \end{scope}
 
 % tree 2
 \begin{scope}[shift={(6,0)}]
 
 \node[draw,circle,fill=gray] at (0, 4) (root) {\bfseries A};
 \node[draw,circle,fill=cyan] at (-2, 2) (n1) {\bfseries B};
 \node[draw,circle,fill=orange] at (2, 2) (n3) {\bfseries D};
 \node[draw,circle,fill=gray] at (-0.75, 2) (c1) {\bfseries E};
 \node[draw,circle,fill=gray] at (0.75, 2) (c2) {\bfseries F};
 
 \draw[gray,ultra thick] (root) -- (n1);
 \draw[gray,ultra thick] (root) -- (n3);
 \draw[gray,ultra thick] (root) -- (c1);
 \draw[gray,ultra thick] (root) -- (c2);
     
 \end{scope}
 
 % tree 3
 \begin{scope}[shift={(12,0)}]
 
 \node[draw,circle,fill=gray] at (0, 4) (root) {\bfseries A};
 \node[draw,circle,fill=cyan] at (-2, 2) (n1) {\bfseries B};
 \node[draw,circle,fill=orange] at (2, 2) (n3) {\bfseries D};
 \node[draw,circle,fill=gray] at (-0.75, 2) (c1) {\bfseries E};
 \node[draw,circle,fill=cyan] at (0.75, 2) (c2) {\bfseries F};
 
 \draw[gray,ultra thick] (root) -- (n1);
 \draw[gray,ultra thick] (root) -- (n3);
 \draw[gray,ultra thick] (root) -- (c1);
 \draw[gray,ultra thick] (root) -- (c2);
     
 \end{scope}
 
 % tree 4
 \begin{scope}[shift={(18,0)}]
 
 \node[draw,circle,fill=gray] at (0, 4) (root) {\bfseries A};
 \node[draw,circle,fill=cyan] at (-2, 2) (n1) {\bfseries B};
 \node[draw,circle,fill=orange] at (2, 2) (n3) {\bfseries D};
 \node[draw,circle,fill=gray] at (-0.75, 2) (c1) {\bfseries E};
 \node[draw,circle,fill=cyan] at (0.75, 2) (c2) {\bfseries F};
 \node[draw,circle,fill=gray] at (0.75, 0) (c3) {\bfseries G};
 
 \draw[gray,ultra thick] (root) -- (n1);
 \draw[gray,ultra thick] (root) -- (n3);
 \draw[gray,ultra thick] (root) -- (c1);
 \draw[gray,ultra thick] (root) -- (c2);
 \draw[gray,ultra thick] (c2) -- (c3);
     
 \end{scope}
 
 % operation labels
 
 \draw[->,ultra thick] (1.5,3.7) -- (4.5,3.7) node [midway,above] {\bfseries \large Delete C};
 % \draw[->,ultra thick] (1.5,4) -- (4.5,4) node [midway,below] {\bfseries \large cost = $c(\ell(C),\bot)$};
 
 \draw[->,ultra thick] (7.5,3.7) -- (10.5,3.7) node [midway,above] {\bfseries \large Relabel F};
 % \draw[->,ultra thick] (7.5,4) -- (10.5,4) node [midway,below] {\bfseries \large cost = $c(\ell(C),\ell$};
 
 \draw[->,ultra thick] (13.5,3.7) -- (16.5,3.7) node [midway,above] {\bfseries \large Insert G};
 % \draw[->,ultra thick] (13.5,4) -- (16.5,4) node [midway,below] {\bfseries \large cost = 1.5};

 \end{tikzpicture}
 }
 \caption{Illustration of classic edit operations. Labels shown as colors.}
 \label{fig:edit_operations_classic}
\end{figure}
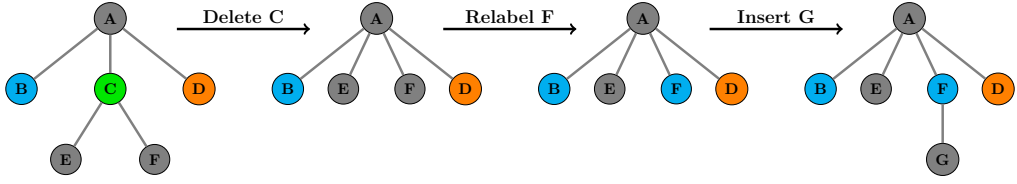

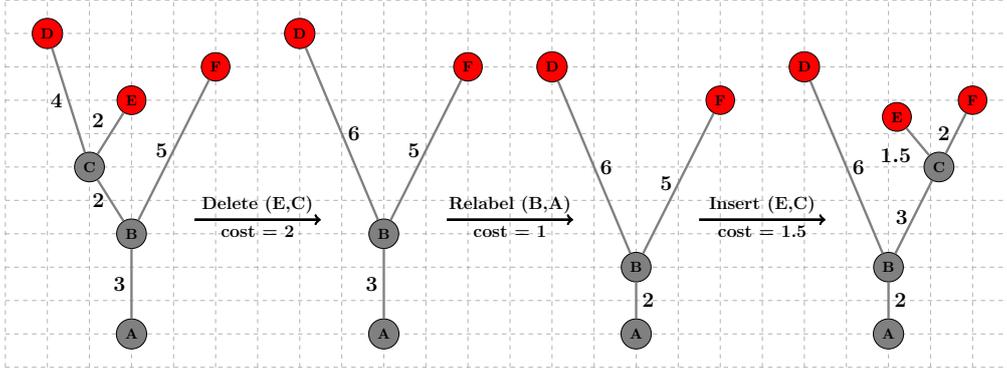
\begin{figure}
 \centering 
 \resizebox{0.9\linewidth}{!}{
 \begin{tikzpicture}[yscale=0.8]
 \definecolor{lightred}{rgb}{1,0.6,0.6}
 \draw[help lines, color=gray!60, dashed] (-3,0) grid (21,11);
 
 % tree 1
 
 \node[draw,circle,fill=gray] at (0, 1) (root) {\bfseries A};
 \node[draw,circle,fill=gray] at (0, 4) (s1) {\bfseries B};
 \node[draw,circle,fill=gray] at (-1, 6) (s2) {\bfseries C};
 
 \node[draw,circle,fill=red] at (-2, 10) (m1) {\bfseries D};
 \node[draw,circle,fill=red] at (0, 8) (m2) {\bfseries E};
 \node[draw,circle,fill=red] at (2, 9) (m3) {\bfseries F};
 
 \draw[gray,ultra thick] (root) -- (s1) node [midway,left,black] {\Large$\mathbff{3}$};
 \draw[gray,ultra thick] (s1) -- (s2) node [midway,left,black] {\Large$\mathbff{2}$};
 \draw[gray,ultra thick] (s2) -- (m1) node [midway,left,black] {\Large$\mathbff{4}$};
 \draw[gray,ultra thick] (s2) -- (m2) node [midway,above left=0.5pt,black] {\Large$\mathbff{2}$};
 \draw[gray,ultra thick] (s1) -- (m3) node [midway,left,black] {\Large$\mathbff{5}$};
 
 % tree 2
 
 \node[draw,circle,fill=gray] at (6+0, 1) (root') {\bfseries A};
 \node[draw,circle,fill=gray] at (6+0, 4) (s1') {\bfseries B};
 
 \node[draw,circle,fill=red] at (6-2, 10) (m1') {\bfseries D};
 \node[draw,circle,fill=red] at (6+2, 9) (m3') {\bfseries F};
 
 \draw[gray,ultra thick] (root') -- (s1') node [midway,left,black] {\Large$\mathbff{3}$};
 \draw[gray,ultra thick] (s1') -- (m1') node [midway,right,black] {\Large$\mathbff{6}$};
 \draw[gray,ultra thick] (s1') -- (m3') node [midway,left,black] {\Large$\mathbff{5}$};
 
 % tree 3
 
 \node[draw,circle,fill=gray] at (12+0, 1) (root'') {\bfseries A};
 \node[draw,circle,fill=gray] at (12+0, 3) (s1'') {\bfseries B};
 
 \node[draw,circle,fill=red] at (12-2, 9) (m1'') {\bfseries D};
 \node[draw,circle,fill=red] at (12+2, 8) (m3'') {\bfseries F};
 
 \draw[gray,ultra thick] (root'') -- (s1'') node [midway,right,black] {\Large$\mathbff{2}$};
 \draw[gray,ultra thick] (s1'') -- (m1'') node [midway,right,black] {\Large$\mathbff{6}$};
 \draw[gray,ultra thick] (s1'') -- (m3'') node [midway,left,black] {\Large$\mathbff{5}$};
 
 % tree 4
 
 \node[draw,circle,fill=gray] at (18+0, 1) (root'') {\bfseries A};
 \node[draw,circle,fill=gray] at (18+0, 3) (s1'') {\bfseries B};
 \node[draw,circle,fill=gray] at (18+1.2, 6) (s2'') {\bfseries C};
 
 \node[draw,circle,fill=red] at (18-2, 9) (m1'') {\bfseries D};
 \node[draw,circle,fill=red] at (18+0.2, 7.5) (m2'') {\bfseries E};
 \node[draw,circle,fill=red] at (18+2, 8) (m3'') {\bfseries F};
 
 \draw[gray,ultra thick] (root'') -- (s1'') node [midway,right,black] {\Large$\mathbff{2}$};
 \draw[gray,ultra thick] (s1'') -- (s2'') node [midway,left,black] {\Large$\mathbff{3}$};
 \draw[gray,ultra thick] (s1'') -- (m1'') node [midway,right,black] {\Large$\mathbff{6}$};
 \draw[gray,ultra thick] (s2'') -- (m2'') node [midway,below left=0.5pt,black] {\Large$\mathbff{1.5}$};
 \draw[gray,ultra thick] (s2'') -- (m3'') node [midway,left,black] {\Large$\mathbff{2}$};
 
 % operation labels
 
 \draw[->,ultra thick] (1.5,4.42) -- (4.5,4.42) node [midway,above] {\bfseries \large Delete (E,C)};
 \draw[->,ultra thick] (1.5,4.42) -- (4.5,4.42) node [midway,below] {\bfseries \large cost = 2};
 
 \draw[->,ultra thick] (7.5,4.42) -- (10.5,4.42) node [midway,above] {\bfseries \large Relabel (B,A)};
 \draw[->,ultra thick] (7.5,4.42) -- (10.5,4.42) node [midway,below] {\bfseries \large cost = 1};
 
 \draw[->,ultra thick] (13.5,4.42) -- (16.5,4.42) node [midway,above] {\bfseries \large Insert (E,C)};
 \draw[->,ultra thick] (13.5,4.42) -- (16.5,4.42) node [midway,below] {\bfseries \large cost = 1.5};

 \end{tikzpicture}
 }
 \caption{Illustration of deformation-based operations. The coordinate grid indicates edge lengths.}
 \label{fig:edit_operations_deformation}
\end{figure}

\textbf{Deformation-based edit operations} (see Figure~\ref{fig:edit_operations_deformation}) are defined on edge-labeled trees (requiring numerical labels and s specific cost function, usually representing edge length/persistence) are defined as follows.
An edge deletion contracts an edge $(v,p)$, meaning that the edge and $v$ are removed from the tree and each edge $(c,v)$ is replaced by $(c,p)$.
If the remaining node $p$ has degree one afterwards, it is pruned, i.e.\ we replace the edges $(p,p'),(c,p)$ by $(c,p')$.
The new label of $(c,p')$ is the sum of the labels of $(p,p'),(c,p)$.
Edge insertions are the inverse of deletions and relabels again simply change a single label of an edge.

Given a sequence of edit operations $s = s_1 \dots s_k$, we write $T_1 \xrightarrow{\scriptscriptstyle s} T_2$ if $s$ transforms $T_1$ into $T_2$.
The cost of an edit sequence is the sum of all operation costs: $c(s) = \sum_{i=1}^k c(s_i)$.
An edit distance $\delta$ is defined as the cost of an optimal valid edit sequence:
\[ \delta(T_1,T_2) = \min \{ c(s) | T_1 \xrightarrow{\scriptscriptstyle s} T_2, s \text{ is a valid edit sequence} \}. \]
Validity is typically derived from constraints on the edit operations, e.g.\ the one-degree edit distance~\cite{DBLP:journals/ipl/Selkow77,treeEditSurvey} allows insertions/deletions only on leaves, or equivalent edit \emph{mappings}.
We now define the various merge tree edit distances from previous works, organized into categories, in the model described above:

\begin{itemize}
    \item \textbf{BDT-based methods}: three methods based on the usual branch decomposition. The Merge Tree Wasserstein distance~\cite{DBLP:journals/tvcg/PontVDT22}, denoted $\delta_W$, is the classic degree-one edit distance applied to \emph{unordered} BDTs. Nodes in the BDT are labeled by their birth and death values (Figure~\ref{fig:merge_tree_labels}d). Empty branches are labeled with diagonal points and the cost function is the typical Wasserstein metric~\cite{edelsbrunner09}, meaning the Euclidean distance in the birth-death space. Similarly, the extended branch decomposition graph method by Saikia, Seidel and Weinkauf~\cite{DBLP:journals/cgf/SaikiaSW14}, denoted $\delta_X$, uses the same constrained distance, but \emph{ordered} BDTs.
    % and different labels. 
    The Sridharamurthy distance~\cite{DBLP:journals/tvcg/SridharamurthyM20}, denoted $\delta_S$, is the classic constrained edit distance on merge trees directly using branch labels on merge tree nodes (Figure~\ref{fig:merge_tree_labels}e). The latter two will only be discussed in Appendix~\ref{sec:other_dists} and~\ref{sec:perturbation_stability_app} but behave very similar to $\delta_W$.
    \item \textbf{Classic tree edit distances}: two methods adapted from the contour tree alignment distance by Lohfink et al.~\cite{DBLP:journals/cgf/LohfinkWLWG20}. We consider the classic one-degree distance applied to merge trees (for easier comparison with other methods, the original work uses tree alignments), denoted $\delta_L$, where each node is labeled by the scalar distance to its parent (Figure~\ref{fig:merge_tree_labels}c). The root node is labeled with some blank label enforcing the two roots always being mapped. The empty symbol is~$0$ and we use the cost function $c(\ell_1,\ell_2) = |\ell_1 - \ell_2|$. The unconstrained variant, denoted $\delta_G$, uses the classic general tree edit distance on the same trees and the same cost function.
    \item \textbf{Deformation-based edit distances}: The deformation-based edit distance~\cite{taming}, denoted $\delta_E$, is the unconstrained edit distance using the deformation-based edit operations. Edges are labeled by their scalar range (Figure~\ref{fig:merge_tree_labels}b), i.e.\ $\ell((x,y)) = |f(x)-f(y)|$, the empty symbol is~$0$. The cost function is defined as $c(\ell_1,\ell_2) = |\ell_1-\ell_2|$. The path mapping distance\cite{wetzels2022path}, denoted $\delta_P$, is the one-degree variant of $\delta_E$. 
    \item \textbf{The branch mapping distance}~\cite{wetzels2022branch}, denoted $\delta_B$, is a 
    % closely 
    related merge tree-tailored edit distance defined through mappings between branches of two merge trees, see Appendix~\ref{sec:other_dists}. 
\end{itemize}

\begin{figure}[t]
 \centering 
 \resizebox{\linewidth}{!}{
 \begin{tikzpicture}[yscale=0.8]
 \definecolor{lightred}{rgb}{1,0.6,0.6}
 
 % tree a
 \begin{scope}[shift={(0,0)}]
 \draw[help lines, color=gray!70, dashed] (-3,-1) grid (3,11);
 
 \node[draw,circle,fill=gray,label={[right=6pt]\large$\mathbff{0}$}] at (0, 0) (root) {\bfseries A};
 \node[draw,circle,fill=gray,label={[above]\large$\mathbff{2}$}] at (0, 2) (s1) {\bfseries B};
 \node[draw,circle,fill=gray,label=left:{\large$\mathbff{4}$}] at (-1, 4) (s2) {\bfseries C};
 \node[draw,circle,fill=gray,label=right:{\large$\mathbff{6}$}] at (0.5, 6) (s3) {\bfseries D};
 
 \node[draw,circle,fill=red,label={[right=6pt]\large$\mathbff{10}$}] at (-2.5, 10) (m1) {\bfseries E};
 \node[draw,circle,fill=red,label={[left=6pt]\large$\mathbff{5}$}] at (2.5, 5) (m2) {\bfseries F};
 \node[draw,circle,fill=red,label={[above]\large$\mathbff{8}$}] at (-1, 8) (m3) {\bfseries G};
 \node[draw,circle,fill=red,label={[above]\large$\mathbff{8.5}$}] at (0.5, 8.5) (m4) {\bfseries H};
 \node[draw,circle,fill=red,label={[above]\large$\mathbff{9}$}] at (2, 9) (m5) {\bfseries I};
 
 \draw[gray,ultra thick] (root) -- (s1);
 \draw[gray,ultra thick] (s1) -- (s2);
 \draw[gray,ultra thick] (s2) -- (m1);
 \draw[gray,ultra thick] (s2) -- (s3);
 \draw[gray,ultra thick] (s3) -- (m3);
 \draw[gray,ultra thick] (s3) -- (m4);
 \draw[gray,ultra thick] (s3) -- (m5);
 \draw[gray,ultra thick] (s1) -- (m2);
     
 \end{scope}
 
 % tree b
 \begin{scope}[shift={(8,0)}]
 \draw[help lines, color=gray!70, dashed] (-3,-1) grid (3,11);
 
 \node[draw,circle,fill=gray] at (0, 0) (root) {\bfseries A};
 \node[draw,circle,fill=gray] at (0, 2) (s1) {\bfseries B};
 \node[draw,circle,fill=gray] at (-1, 4) (s2) {\bfseries C};
 \node[draw,circle,fill=gray] at (0.5, 6) (s3) {\bfseries D};
 
 \node[draw,circle,fill=red] at (-2.5, 10) (m1) {\bfseries E};
 \node[draw,circle,fill=red] at (2.5, 5) (m2) {\bfseries F};
 \node[draw,circle,fill=red] at (-1, 8) (m3) {\bfseries G};
 \node[draw,circle,fill=red] at (0.5, 8.5) (m4) {\bfseries H};
 \node[draw,circle,fill=red] at (2, 9) (m5) {\bfseries I};
 
 \draw[gray,ultra thick] (root) -- (s1) node [midway,right,black] {\large$\mathbff{2}$};
 \draw[gray,ultra thick] (s1) -- (s2) node [midway,above,sloped,black] {\large$\mathbff{2}$};
 \draw[gray,ultra thick] (s2) -- (m1) node [midway,left,black] {\large$\mathbff{6}$};
 \draw[gray,ultra thick] (s2) -- (s3) node [midway,below,sloped,black] {\large$\mathbff{2}$};
 \draw[gray,ultra thick] (s3) -- (m3) node [midway,below,sloped,black] {\large$\mathbff{2}$};
 \draw[gray,ultra thick] (s3) -- (m4) node [midway,above,sloped,black] {\large$\mathbff{2.5}$};
 \draw[gray,ultra thick] (s3) -- (m5) node [midway,below,sloped,black] {\large$\mathbff{3}$};
 \draw[gray,ultra thick] (s1) -- (m2) node [midway,above,sloped,black] {\large$\mathbff{2}$};
     
 \end{scope}
 
 % tree c
 \begin{scope}[shift={(16,0)}]
 \draw[help lines, color=gray!70, dashed] (-3,-1) grid (3,11);
 
 \node[draw,circle,fill=gray,label={[right=6pt]\large$\mathbff{\infty}$}] at (0, 0) (root) {\bfseries A};
 \node[draw,circle,fill=gray,label={[above]\large$\mathbff{2}$}] at (0, 2) (s1) {\bfseries B};
 \node[draw,circle,fill=gray,label=left:{\large$\mathbff{2}$}] at (-1, 4) (s2) {\bfseries C};
 \node[draw,circle,fill=gray,label=right:{\large$\mathbff{2}$}] at (0.5, 6) (s3) {\bfseries D};
 
 \node[draw,circle,fill=red,label={[right=6pt]\large$\mathbff{6}$}] at (-2.5, 10) (m1) {\bfseries E};
 \node[draw,circle,fill=red,label={[left=6pt]\large$\mathbff{3}$}] at (2.5, 5) (m2) {\bfseries F};
 \node[draw,circle,fill=red,label={[above]\large$\mathbff{2}$}] at (-1, 8) (m3) {\bfseries G};
 \node[draw,circle,fill=red,label={[above]\large$\mathbff{2.5}$}] at (0.5, 8.5) (m4) {\bfseries H};
 \node[draw,circle,fill=red,label={[above]\large$\mathbff{3}$}] at (2, 9) (m5) {\bfseries I};
 
 \draw[gray,ultra thick] (root) -- (s1);
 \draw[gray,ultra thick] (s1) -- (s2);
 \draw[gray,ultra thick] (s2) -- (m1);
 \draw[gray,ultra thick] (s2) -- (s3);
 \draw[gray,ultra thick] (s3) -- (m3);
 \draw[gray,ultra thick] (s3) -- (m4);
 \draw[gray,ultra thick] (s3) -- (m5);
 \draw[gray,ultra thick] (s1) -- (m2);
     
 \end{scope}
 
 % tree d
 \begin{scope}[shift={(3,-14+3)}]
 
 \node[draw,very thick,circle,label=above:{\large$\mathbff{(10,0)}$}] at (0, 4) (m1) {\bfseries E-A};
 \node[draw,very thick,circle,label=below:{\large$\mathbff{(5,2)}$}] at (-2, 2) (m2) {\bfseries F-B};
 \node[draw,very thick,circle,label=below:{\large$\mathbff{(8,6)}$}] at (1, 0) (m3) {\bfseries G-D};
 \node[draw,very thick,circle,label=below:{\large$\mathbff{(8.5,6)}$}] at (3, 0) (m4) {\bfseries H-D};
 \node[draw,very thick,circle,label=above:{\large$\mathbff{(9,4)}$}] at (2, 2) (m5) {\bfseries I-C};
 
 \draw[ultra thick] (m1) -- (m2);
 \draw[ultra thick] (m1) -- (m5);
 \draw[ultra thick] (m5) -- (m3);
 \draw[ultra thick] (m5) -- (m4);
     
 \end{scope}
 
 % tree e
 \begin{scope}[shift={(11,-14)}]
 \draw[help lines, color=gray!70, dashed] (-3,-1) grid (3,11);
 
 \node[draw,circle,fill=gray,label={[right=6pt]\large$\mathbff{(10,0)}$}] at (0, 0) (root) {\bfseries A};
 \node[draw,circle,fill=gray,label=right:{\large$\mathbff{(5,2)}$}] at (0, 2) (s1) {\bfseries B};
 \node[draw,circle,fill=gray,label=left:{\large$\mathbff{(9,4)}$}] at (-1, 4) (s2) {\bfseries C};
 \node[draw,circle,fill=gray,label=left:{\large$\mathbff{(8.5,6)}$}] at (0.5, 6) (s3) {\bfseries D};
 
 \node[draw,circle,fill=red,label={[right=6pt]\large$\mathbff{(10,0)}$}] at (-2.5, 10) (m1) {\bfseries E};
 \node[draw,circle,fill=red,label=above:{\large$\mathbff{(5,2)}$}] at (2.5, 5) (m2) {\bfseries F};
 \node[draw,circle,fill=red,label={[above]\large$\mathbff{(8,6)}$}] at (-1, 8) (m3) {\bfseries G};
 \node[draw,circle,fill=red,label={[above]\large$\mathbff{(8.5,6)}$}] at (0.5, 8.5) (m4) {\bfseries H};
 \node[draw,circle,fill=red,label={[above]\large$\mathbff{(9,4)}$}] at (2, 9) (m5) {\bfseries I};
 
 \draw[gray,ultra thick] (root) -- (s1);
 \draw[gray,ultra thick] (s1) -- (s2);
 \draw[gray,ultra thick] (s2) -- (m1);
 \draw[gray,ultra thick] (s2) -- (s3);
 \draw[gray,ultra thick] (s3) -- (m3);
 \draw[gray,ultra thick] (s3) -- (m4);
 \draw[gray,ultra thick] (s3) -- (m5);
 \draw[gray,ultra thick] (s1) -- (m2);
     
 \end{scope}

 % \node[] at (0,-1.5) (l1) {\Large\bfseries (a)};
 \node[] at (0,-1.5) (la) {\Large\bfseries (a) Original Merge Tree};
 \node[] at (8,-1.5) (lb) {\Large\bfseries (b) Labeled Tree for $\mathbff{\delta_E}$\&$\mathbff{\delta_P}$};
 \node[] at (16,-1.5) (lc) {\Large\bfseries (c) Labeled Tree for $\mathbff{\delta_G}$\&$\mathbff{\delta_L}$};
 \node[] at (3,-14-1.5) (ld) {\Large\bfseries (d) Labeled BDT for $\mathbff{\delta_W}$\&$\mathbff{\delta_X}$};
 \node[] at (11,-14-1.5) (le) {\Large\bfseries (e) Labeled Tree for $\mathbff{\delta_S}$};

 \end{tikzpicture}
 }
 \caption{Examples for label functions used by different distances.}
 \label{fig:merge_tree_labels}
\end{figure}
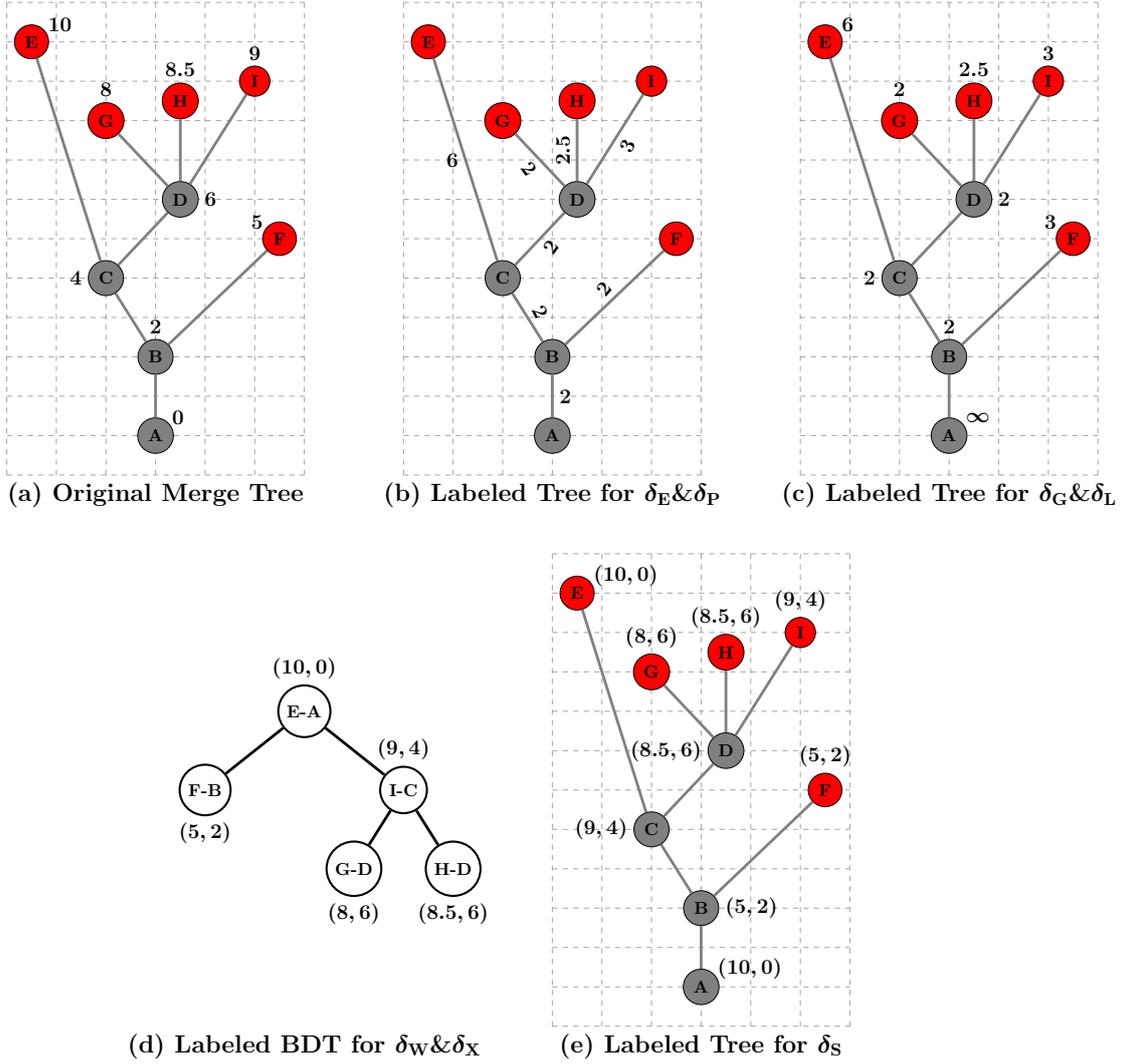

\section{Minimal Vertex Perturbations}
\label{sec:transformations}

Consider a piecewise linear scalar field $\mathbb{X},f$.
Let $\mathbb{X},f'$ be a perturbation of $f$.
Let $F,F' : V(\mathbb{X}) \rightarrow \{1,\dots,n\}$ be the corresponding vertex orderings by ascending scalar value.
% $f,f'$.
We say that the two fields differ by a \emph{minimal vertex perturbation}, if exactly one vertex is perturbed in a minimal way: it swaps in the ordering with at most one other vertex.

\begin{definition}
    Two scalar fields $X,f$ and $X,f'$ differ by a minimal vertex perturbation if
    \begin{itemize}
        \item there is exactly one vertex $v$ such that $f(v) \neq f'(v)$ and $f(u) = f'(u)$ for all $u \neq v$,
        \item we either have $F = F'$ or there is exactly one vertex $u$ with $u \neq v$ such that $|F(v) - F(u)| = 1$, $F(v) = F'(u)$, and $F(u) = F'(v)$, and $F(w) = F'(w)$ for all $w$ with $w \neq u$ and $w \neq v$.
    \end{itemize}
\end{definition}

\noindent
To study the effects of such minimal vertex perturbation on the merge tree $T_f$ and stability of edit distances, we proceed as follows.
First, we study in which ways the \emph{augmented} merge tree $\mathcal{T}_f$ can be affected.
Then, we consider all possible induced changes on the \emph{unaugmented} merge tree $T_f$ and its branch decomposition $\obdt(T_f)$, which we categorize into one of four types.
Our case distinction on the augmented merge tree is reminiscent of the one presented in the works of Oesterling et al.~\cite{oesterling_time_varying_mergetrees}.
For simplicity reasons, we only consider the case where the scalar value of a vertex is increased.
The case where a scalar value is decreased is completely symmetrical.
In Appendix~\ref{sec:sequence_stability}, we then consider the more general case of arbitrary perturbations.
Before considering specific effects of a perturbation, we begin with introducing the four types of merge tree changes.

\medskip
\noindent
\textbf{Classification of merge tree changes.}
Two kinds of changes can happen between the two unaugmented merge trees $T_f$ and $T_{f'}$.
First, it is possible that the size or the connectivity differs, meaning $E(T_f) \neq E(T_{f'})$.
Second, it is possible that the corresponding branch decomposition changes (together or without structural changes), meaning $\obdt(T_f) \neq \obdt(T_{f'})$.
% Note that as long as no transposition in the ordering happens, a perturbation can neither induce a change in the tree structure, nor in the branch decomposition.
Both types of changes can appear in an easy fashion or a more difficult one.
By this, we mean whether the two trees are in a subgraph relation or not.
If $T_f \subseteq T_{f'}$ or $\obdt(T_f) \subseteq \obdt(T_{f'})$, then we can derive one from the other by simply appending a leaf.
Such changes are usually easy to handle by edit distances, even constrained ones.
Thus, our classification is based on exactly this case distinction.

The first case we consider is the one where no significant changes happen for both structures.
As described above, by this we mean $T_f \subseteq T_{f'}$ or $T_{f'} \subseteq T_f$, and $\obdt(T_f) \subseteq \obdt(T_{f'})$ or $\obdt(T_{f'}) \subseteq \obdt(T_f)$.
% In such cases, 
We say that $T_f$ and $T_{f'}$ differ by simple changes.
% Furthermore, we say that $T_f$ and $T_g$ differ by a simple change if the previous condition holds after renaming a single vertex, i.e.\ it holds for $T_f$ and $T_{g'}$ where $V(T_{g'}) = (V(T_g) \setminus \{x\}) \cupdot \{y\}$.
% The latter case is needed to include situations where no structural changes happen but simply the specific vertex representing a critical node changes.

The first type of actual change we consider is a \emph{vertical branch swap}, the name being derived from the well-established term \emph{vertical instability}.
Intuitively, a vertical swap happens when the main branch of the merge tree switches from one maximum to another, but nothing else.
Of course, we also want to capture the case where this behavior happens for a specific subtree: any changes in the priority of branches should be considered.
In such cases, only the branch decomposition changes, but the tree structure remains unchanged.
Formally, we describe this behavior as follows:
two merge trees $T_f$ and $T_{f'}$ differ by a vertical branch swap, if $\obdt(T_f) \not \subseteq \obdt(T_{f'})$ and $\obdt(T_{f'}) \not \subseteq \obdt(T_f)$, but $T_f \subseteq T_{f'}$ or $T_{f'} \subseteq T_f$.

The next type we call \emph{edge splits}.
Edge splits describe a behavior that is captured well by distances based on branch decomposition, but not by classic tree edit distances.
As the name suggests, they intuitively describe changes where an edge in a merge tree is split by a new saddle vertex and a new subtree is added branching from the new saddle.
Formally, two merge trees $T_f$ and $T_{f'}$ differ by an edge split if $T_f \not \subseteq T_{f'}$ and $T_{f'} \not \subseteq T_f$, but $\obdt(T_f) \subseteq \obdt(T_{f'})$ or $\obdt(T_{f'}) \subseteq \obdt(T_f)$.

The last type is a \emph{horizontal branch swap}, the name again being derived from the well-established term \emph{horizontal instability}.
Intuitively, a horizontal swap happens, when the priorities of branches stay equal, but their nesting changes.
They are also often called saddle swaps, as this happens if saddles pass each other in the tree structure.
Usually, two types are considered: either only the \emph{ordered} branch decomposition tree changes or the \emph{unordered} branch decomposition tree does so too.
Only in the latter case the nesting of branches truly changes.
Formally, two merge trees $T_f$ and $T_{f'}$ differ by an ordered horizontal swap if  $T_f \not \subseteq T_{f'}$ and $T_{f'} \not \subseteq T_f$, as well as $\obdt(T_f) \not \subseteq \obdt(T_{f'})$ and $\obdt(T_{f'}) \not \subseteq \obdt(T_f)$.
They differ by an unordered horizontal swap, if $T_f \not \subseteq T_{f'}$ and $T_{f'} \not \subseteq T_f$, as well as $\bdt(T_f) \not \subseteq \bdt(T_{f'})$ and $\bdt(T_{f'}) \not \subseteq \bdt(T_f)$.
For our classification, we focus on ordered horizontal swaps.
When considering unordered horizontal swaps, the situation gets a bit more complicated, which we discuss in Section~\ref{sec:conclusion}.

We usually consider the classification above \emph{up to isomorphism}: the subgraph relations only need to hold under renaming of the vertices.
Note that the classification is indeed complete: any minimal vertex perturbation is of one of the four types.

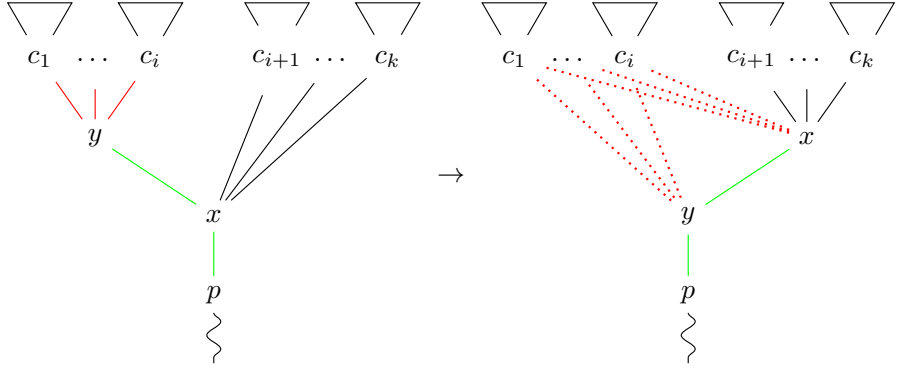
\begin{figure}
 \centering 
 \resizebox{0.8\linewidth}{!}{
 \begin{tikzpicture}
 % \tikzset{decoration={snake,amplitude=.4mm,segment length=2mm,
 %                       post length=0mm,pre length=0mm}}
 \tikzset{decoration={snake}}
 % \tikzset{decoration={coil}}

 \begin{scope}[shift={(-3,0)}]
 
 \node[] at (0, 0) (p) {$p$};
 \node[] at (0, -1) (p_) {};
 \node[] at (0, 1) (x) {$x$};
 \node[] at (-1.5, 2) (y) {$y$};
 
 \node[draw,isosceles triangle,isosceles triangle apex angle=60,rotate=270,anchor=apex,minimum size=20pt]
         at (-2.2, 3) (c1t) {};
 \node[circle, fill=white]
         at (-2.2, 3) (c1) {$c_1$};
 \node[draw,isosceles triangle,isosceles triangle apex angle=60,rotate=270,anchor=apex,minimum size=20pt]
         at (-0.8, 3) (cit) {};
 \node[circle, fill=white]
         at (-0.8, 3) (ci) {$c_{i}$};
 \node[circle]
         at (-1.5, 3) (cd) {$\dots$};
         
 \node[draw,isosceles triangle,isosceles triangle apex angle=60,rotate=270,anchor=apex,minimum size=20pt]
         at (0.8, 3) (ciit) {};
 \node[circle, fill=white]
         at (0.8, 3) (cii) {$c_{i+1}$};
 \node[draw,isosceles triangle,isosceles triangle apex angle=60,rotate=270,anchor=apex,minimum size=20pt]
         at (2.2, 3) (c2t) {};
 \node[circle, fill=white]
         at (2.2, 3) (c2) {$c_{k}$};
 \node[circle]
         at (1.5, 3) (cdd) {$\dots$};
 
 \draw[decorate] (p) -- (p_);
 \draw[green] (p) -- (x);
 \draw[green] (x) -- (y);
 \draw[red] (y) -- (c1);
 \draw[red] (y) -- (ci);
 \draw[red] (y) -- (cd);
 \draw[black] (x) -- (cii);
 \draw[black] (x) -- (c2);
 \draw[black] (x) -- (cdd);

 \end{scope}

 \node[] at (0,1.5) (arr) {$\rightarrow$};

 \begin{scope}[shift={(3,0)}]
 
 \node[] at (0, 0) (p) {$p$};
 \node[] at (0, -1) (p_) {};
 \node[] at (1.5, 2) (x) {$x$};
 \node[] at (0, 1) (y) {$y$};
 
 \node[draw,isosceles triangle,isosceles triangle apex angle=60,rotate=270,anchor=apex,minimum size=20pt]
         at (-2.2, 3) (c1t) {};
 \node[circle, fill=white]
         at (-2.2, 3) (c1) {$c_1$};
 \node[draw,isosceles triangle,isosceles triangle apex angle=60,rotate=270,anchor=apex,minimum size=20pt]
         at (-0.8, 3) (cit) {};
 \node[circle, fill=white]
         at (-0.8, 3) (ci) {$c_{i}$};
 \node[circle]
         at (-1.5, 3) (cd) {$\dots$};
         
 \node[draw,isosceles triangle,isosceles triangle apex angle=60,rotate=270,anchor=apex,minimum size=20pt]
         at (0.8, 3) (ciit) {};
 \node[circle, fill=white]
         at (0.8, 3) (cii) {$c_{i+1}$};
 \node[draw,isosceles triangle,isosceles triangle apex angle=60,rotate=270,anchor=apex,minimum size=20pt]
         at (2.2, 3) (c2t) {};
 \node[circle, fill=white]
         at (2.2, 3) (c2) {$c_{k}$};
 \node[circle]
         at (1.5, 3) (cdd) {$\dots$};
 
 \draw[decorate] (p) -- (p_);
 \draw[green] (p) -- (y);
 \draw[green] (x) -- (y);
 \draw[red,dotted,thick] (x) -- (c1);
 \draw[red,dotted,thick] (x) -- (ci);
 \draw[red,dotted,thick] (x) -- (cd);
 \draw[red,dotted,thick] (y) -- (c1);
 \draw[red,dotted,thick] (y) -- (ci);
 \draw[red,dotted,thick] (y) -- (cd);
 \draw[black] (x) -- (cii);
 \draw[black] (x) -- (c2);
 \draw[black] (x) -- (cdd);

 \end{scope}

 \end{tikzpicture}
 }
 \caption{Illustration of a minimal vertex perturbation in an augmented merge tree according to Observation~\ref{obs:no_edge}-\ref{obs:children_y}: all green edges change in a fixed manner, all black edges and subtrees remain unchanged, and all red edges need to be considered in each case specifically.}
 \label{fig:perturbation_edge_changes}
\end{figure}

\subsection*{Structural Changes in the Augmented Merge Tree}

We now study which changes in augmented merge trees $\mathcal{T}_f,\mathcal{T}_{f'}$ can be induced by minimal vertex perturbation between $f$ and $f'$ where $x$ passes $y$.
In~\cite{oesterling_time_varying_mergetrees} it was observed that if $x$ and $y$ are not neighbors in $\mathcal{T}_f$, then $\mathcal{T}_f$ and $\mathcal{T}_{f'}$ are structurally equal, and that the transposition only affects edges incident to $x$ or $y$: for all $v,u$ with $x \neq v \neq y$, $x \neq u \neq y$, and $u$ lowest in $f^{-1}((f(v),\infty)) = f'^{-1}((f'(v),\infty))$, $u$ stays the lowest vertex.
\begin{observation}
\label{obs:no_edge}
    If $(y,x) \notin E(\mathcal{T}_f)$ then $E(\mathcal{T}_f) = E(\mathcal{T}_{f'})$.
\end{observation}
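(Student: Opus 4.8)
The plan is to work with the equivalent \emph{sweep} description of the augmented merge tree: process the vertices of $\mathbb{X}$ in order of decreasing scalar value, maintaining the connected components of the current superlevel set together with the lowest (equivalently, most recently inserted) vertex of each as its representative. When a vertex $a$ is inserted, it becomes adjacent to a set of already-present components; for each such component with representative $b$ we add the edge $(b,a)$ and then merge, making $a$ the new representative of the merged component. By the definition of $E(\mathcal{T}_f)$ this reproduces exactly the edges of the augmented merge tree, so it suffices to compare the edges produced by the $f$-sweep and the $f'$-sweep.

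Since the perturbation only transposes the consecutive pair $x,y$ in the ordering, the two sweeps differ only in the relative order of this single pair: in the $f$-sweep $y$ is inserted immediately before $x$, in the $f'$-sweep $x$ is inserted immediately before $y$, and every other insertion occurs in the same place with the same value. In particular, the set of vertices processed before the pair is reached equals $\{w \mid f(w) > f(y)\}$ in both sweeps (here one uses that no vertex value lies strictly between $f(y)$ and $f'(x)$), so the union-find state at that moment is identical. It therefore remains to show that inserting $x$ and $y$ in the two orders, starting from this common state, yields the same edges and leaves the same component partition with the same representatives; any subsequent insertion is then forced to be identical in both sweeps.

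This is where the hypothesis enters. Because $f(y)$ is the minimum value of $f^{-1}((f(x),\infty)) = f^{-1}([f(y),\infty))$, the vertex $y$ is automatically the lowest vertex of its component $Y$ in this superlevel set; hence the edge $(y,x)$ is present precisely when $x$ is adjacent to $Y$. The assumption $(y,x) \notin E(\mathcal{T}_f)$ thus states exactly that $x$ is \emph{not} adjacent to $Y$. Writing $Z_1,\dots,Z_r$ for the components of the common pre-pair state adjacent to $y$ (so $y$ merges them into $Y = \{y\} \cup \bigcup_i Z_i$) and letting the remaining components be the $W_j$, non-adjacency of $x$ to $Y$ forces every higher neighbor of $x$ to lie in some $W_j$ — never in a $Z_i$ and never equal to $y$. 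Consequently $x$ merges exactly the $W_j$ it is adjacent to and $y$ merges exactly $Z_1,\dots,Z_r$, independently of which of the two is inserted first, and the representatives of the $Z_i$ and $W_j$ (being built entirely from pre-pair vertices) are untouched in both sweeps. Hence both orders create the same edges at $x$ and at $y$ and end with the same two components $Y$ (representative $y$) and $\{x\}\cup\bigcup_{\text{merged}} W_j$ (representative $x$), all other components unchanged. Since the post-pair state coincides, all remaining edges agree and $E(\mathcal{T}_f) = E(\mathcal{T}_{f'})$.

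I expect the crux to be the independence argument of the last paragraph: one must verify that the two merge events genuinely operate on disjoint collections of components, which rests entirely on $x \not\sim Y$ (this also rules out $\{x,y\} \in \mathbb{X}$, so $y$ never occurs among the neighbors driving the merge at $x$). Everything else — that the pre-pair state is shared, that the untouched components keep their representatives, and that identical post-pair states propagate downward — is routine once this commutativity is established.
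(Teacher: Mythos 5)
Your proof is correct. It is worth noting how it relates to the paper's treatment: the paper does not actually prove Observation~1 itself --- it defers to the analysis of Oesterling et al.\ \cite{oesterling_time_varying_mergetrees}, adding only the one-line remark that for vertices $v,u \notin \{x,y\}$ the superlevel sets $f^{-1}((f(v),\infty)) = f'^{-1}((f'(v),\infty))$ coincide and lowest vertices stay lowest. Your sweep/union-find formulation is a self-contained proof of the same underlying fact, and it makes rigorous exactly the step the paper leaves implicit: that under the hypothesis $(y,x)\notin E(\mathcal{T}_f)$, the insertion of $x$ and the insertion of $y$ act on disjoint collections of pre-pair components (the $W_j$ versus the $Z_i$, with $\{x,y\}\notin\mathbb{X}$ ruled out as well), so the two insertion orders commute, yielding identical edges and identical post-pair states with the same representatives. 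The correctness of the identification of the pre-pair state (using that no vertex value separates $f(y)$ from $f'(x)$) and the propagation of the common state to all later insertions are handled properly. In short: same conceptual route as the cited source, but your version supplies the commutativity argument explicitly rather than by reference, which is a genuine gain in self-containedness.
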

\begin{observation}%[\cite{oesterling_time_varying_mergetrees}]
\label{obs:remaining_edges}
    For any $(u,v) \in E(\mathcal{T}_f)$ with $\{u,v\} \cap \{x,y\} = \emptyset$, we have $(u,v) \in E(\mathcal{T}_{f'})$.
\end{observation}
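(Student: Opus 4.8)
The plan is to verify the two defining conditions of an augmented merge tree edge directly for $f'$, exploiting that $f$ and $f'$ agree on every vertex except $x$ and that the only change in the vertex ordering is the transposition of the rank-adjacent pair $x,y$. Since $(u,v) \in E(\mathcal{T}_f)$ with $\{u,v\} \cap \{x,y\} = \emptyset$, the edge definition supplies a connected component $U$ of the superlevel set $S = f^{-1}((f(v),\infty))$ such that $u$ is the lowest vertex of $U$ and $\{v,u'\} \in \mathbb{X}$ for some $u' \in U$. I want to show that both facts persist for $f'$ with threshold $f'(v)$, which then yields $(u,v) \in E(\mathcal{T}_{f'})$.

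First I would show that the superlevel sets coincide. Because $v \neq x$, we have $f(v) = f'(v) =: t$. Any vertex $w \neq x$ satisfies $f(w) > t \iff f'(w) > t$ since $f(w) = f'(w)$. For $w = x$ the equivalence still holds: as $x,y$ are rank-adjacent and $v \notin \{x,y\}$, the vertex $v$ lies on the same side of both $x$ and $y$ in the ordering, and since $f'(x)$ carries the rank $F(y)$, the three statements $f(x) > t$, $f(y) > t$ and $f'(x) > t$ are all equivalent. Hence $S$ and $S' = f'^{-1}((t,\infty))$ have the same vertex set, and since $\mathbb{X}$ is unchanged they induce the same subcomplex and therefore the same connected components; in particular $U$ is a component of $S'$ as well.

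Next I would argue that $u$ remains the lowest vertex of $U$ under $f'$. For any $w \in U \setminus \{u\}$ with $w \notin \{x,y\}$ the ranks are unchanged, so $F'(u) = F(u) < F(w) = F'(w)$. If $w \in \{x,y\}$, then $u$ being $f$-lowest in $U$ gives $F(u) < F(w)$; combining $|F(x)-F(y)| = 1$ with the injectivity of $F$ and $u \notin \{x,y\}$ shows that $F(u)$ also lies below the partner rank, so after the transposition $F'(u) < F'(w)$ still holds. Thus $u$ is $f'$-lowest in $U$. Finally, the same $u' \in U$ witnesses $\{v,u'\} \in \mathbb{X}$, since neither the vertex set of $U$ nor $\mathbb{X}$ changed. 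Both edge conditions hold for $f'$, so $(u,v) \in E(\mathcal{T}_{f'})$.

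The main obstacle is the lowest-vertex step: I must rule out that the transposition of $x$ and $y$ demotes $u$ below one of them inside $U$. This is exactly where the \emph{minimality} of the perturbation is essential — the rank-adjacency $|F(x)-F(y)| = 1$ forces every other vertex to sit consistently above or below both $x$ and $y$, so no vertex can slip past $u$. A secondary, more routine point is justifying that the connected components of the piecewise linear superlevel set are determined by the induced subcomplex on the vertices exceeding $t$; this holds because that subcomplex, and hence the combinatorial connectivity, depends only on which vertices exceed $t$ and on the fixed complex $\mathbb{X}$, both of which are preserved.
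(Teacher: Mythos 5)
Your proposal is correct and follows essentially the same route as the paper: the paper justifies this observation (citing Oesterling et al.) by noting that for $v,u \notin \{x,y\}$ the superlevel sets $f^{-1}((f(v),\infty)) = f'^{-1}((f'(v),\infty))$ coincide and that $u$ stays the lowest vertex of its component, which is exactly the two-step argument you carry out. You merely supply the rank-adjacency details that the paper delegates to the citation.
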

Next, we consider the edges immediately involved in the transposition in the case where $x$ and $y$ are neighbors in $\mathcal{T}_f$.
%: $(y,x) \in E(\mathcal{T}_f)$.
First, we note that the edge between $x$ and $y$ is preserved (but flipped) and that $y$ inherits the parent $p$ of $x$~\cite{oesterling_time_varying_mergetrees}: we have $x$ lowest (w.r.t.\ $f$) in its component $C$ of $f^{-1}((f(p),\infty))$ and thus $y$ lowest (w.r.t.\ $f'$) in $C$, also a component of $f'^{-1}((f'(p),\infty))$.
Second, we consider the other children of $x$ in $\mathcal{T}_f$, meaning the vertices $z \neq y$ with $(z,x) \in E(\mathcal{T}_f)$.
Since $x$ only passes $y$, the connectivity of its superlevel set components not containing $y$ stays untouched.
Therefore, all distinct subtrees of $x$ (not containing~$y$) stay distinct subtrees and their roots stay children of $x$~\cite{oesterling_time_varying_mergetrees}.
\begin{observation}
\label{obs:flipped_edge}
    % We have $(x,y) \in E(\mathcal{T}_{f'})$, and $(y,p) \in E(\mathcal{T}_{f'})$ for the unique edge $(x,p) \in E(\mathcal{T}_{f})$.
    We have $(x,y) \in E(\mathcal{T}_{f'})$, and for $(x,p) \in E(\mathcal{T}_{f})$: $(y,p) \in E(\mathcal{T}_{f'})$.
\end{observation}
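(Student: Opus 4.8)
The plan is to verify both edge-existence claims directly from the defining condition of the augmented merge tree, exploiting that the perturbation changes only the value of $x$ and moves it exactly one rank past $y$. The guiding observation is that the relevant superlevel sets for $f$ and $f'$ are almost identical: since nothing lies strictly between $f(x)$ and $f(y)$ in either ordering, we have $f^{-1}((f(x),\infty)) = \{y\} \cup W$ and $f'^{-1}((f(y),\infty)) = \{x\} \cup W$, where $W = \{w : f(w) > f(y)\}$ carries the same induced subcomplex in both worlds. Thus the two superlevel sets differ only by exchanging $x$ and $y$, and the whole argument reduces to tracking this single swap through the component structure.

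For the flipped edge $(x,y) \in E(\mathcal{T}_{f'})$, I would first establish the lowness condition: because $x$ moves to the rank previously held by $y$, its new value $f'(x)$ is the smallest value exceeding $f(y)$ among all vertices, so $x$ is the globally lowest vertex of $f'^{-1}((f'(y),\infty))$ and hence lowest in whichever component $X'$ contains it. For the adjacency condition, I would transfer the witness from the hypothesis $(y,x)\in E(\mathcal{T}_f)$: that hypothesis supplies a vertex $y' \in Y$ with $\{x,y'\}\in\mathbb{X}$, where $Y$ is $y$'s component of $f^{-1}((f(x),\infty))$. Since the components of $W$ are identical in both worlds, and within $Y$ the vertex $y$ attaches only to components of $W$ adjacent to it, the component of $W$ containing $y'$ is adjacent to both $x$ and $y$; it therefore lies in $X'$ and yields the required neighbor of $y$ inside $X'$.

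For the inherited parent edge $(y,p) \in E(\mathcal{T}_{f'})$, I would note first that $f(p)=f'(p)$ (since $f(p)<f(x)<f(y)$ forces $p \notin \{x,y\}$) and that $x$ stays above $f(p)$, so $f^{-1}((f(p),\infty))$ and $f'^{-1}((f'(p),\infty))$ coincide as subcomplexes; in particular the component $C$ in which $x$ is lowest with respect to $f$ is preserved. I would then show $y \in C$, which follows since $Y \subseteq C$ (as $x$ is adjacent to $Y$ and both sit above $f(p)$), and that $y$ is the \emph{second} lowest vertex of $C$ under $f$, because $x$ and $y$ are consecutive in the global order; after the swap $x$ rises above $y$ while every other vertex of $C$ already exceeded $f(y)$, making $y$ the lowest vertex of $C$ under $f'$. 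The adjacency of $p$ to $C$ is inherited verbatim from $(x,p)\in E(\mathcal{T}_f)$, giving $(y,p)\in E(\mathcal{T}_{f'})$.

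The main obstacle is the adjacency bookkeeping in the second condition of the first claim: the hypothesis $(y,x)\in E(\mathcal{T}_f)$ does \emph{not} assert $\{x,y\}\in\mathbb{X}$, only that $x$ touches $y$'s superlevel component at some vertex $y'$. The care needed is to show that the component of $W$ containing $y'$ is simultaneously attached to $x$ (so that it falls inside $X'$) and to $y$ (so that it supplies the neighbor demanded by the edge definition); this rests on the fact that within $f^{-1}((f(x),\infty))$ the only vertex outside $W$ is $y$, so distinct components of $W$ can be joined only through $y$. Once this connectivity lemma is isolated, both edge conditions follow from the same swap argument.
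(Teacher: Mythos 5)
Your proof is correct, and it is considerably more self-contained than what the paper actually provides. For the parent-inheritance claim your argument coincides with the paper's: the paper's entire justification is one sentence (plus a citation to Oesterling et al.) stating that $x$ is lowest w.r.t.\ $f$ in its component $C$ of $f^{-1}((f(p),\infty))$, that $C$ is also a component of $f'^{-1}((f'(p),\infty))$, and that therefore $y$ is lowest in $C$ w.r.t.\ $f'$; you supply exactly the details this sentence suppresses ($y \in C$ because $Y \cup \{x\}$ is connected and lies above $f(p)$, $y$ is second-lowest in $C$ because $x,y$ are consecutive in the global ordering, and the adjacency of $p$ to $C$ is inherited verbatim). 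For the flipped edge $(x,y) \in E(\mathcal{T}_{f'})$ the paper gives no explicit argument beyond the citation, whereas you prove it from the definition; your key step --- that inside $Y$ the only vertex outside $W$ is $y$, so the component $W_1$ of $W$ containing the witness $y'$ must be attached to $y$ as well as to $x$, hence $W_1 \subseteq X'$ supplies the required neighbor of $y$ --- is precisely the connectivity fact that makes the claim true, and it is the genuinely nontrivial point, since the hypothesis $(y,x)\in E(\mathcal{T}_f)$ does not assert $\{x,y\}\in\mathbb{X}$. One degenerate case needs a half-sentence patch: if the witness $y'$ equals $y$ itself, then ``the component of $W$ containing $y'$'' is undefined; but in that case $\{x,y\}\in\mathbb{X}$, so $y$ is adjacent to $x \in X'$ directly and the adjacency condition holds immediately. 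With that addition your argument is complete.
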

\begin{observation}
\label{obs:children_x}
    For each $z \neq y$ with $(z,x) \in E(\mathcal{T}_{f})$, we have $(z,x) \in E(\mathcal{T}_{f'})$.
\end{observation}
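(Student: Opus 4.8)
The plan is to verify directly the two conditions of the edge definition for each such $z$ in the perturbed field, exploiting that $x$ only passes $y$ and that the complex $\mathbb{X}$ itself is untouched. First I would make precise the effect of the perturbation on the relevant superlevel set. Since $x$ and $y$ are adjacent in the ordering and $x$ increases past $y$, we have $f'(x) > f(x)$, $f'(w) = f(w)$ for all $w \neq x$, and the ranks satisfy $F'(x) = F(y)$. Reading off membership in the superlevel set purely from the orderings then yields the key identity
\[ f'^{-1}((f'(x),\infty)) = f^{-1}((f(x),\infty)) \setminus \{y\}, \]
i.e.\ the superlevel set above $x$ loses exactly the vertex $y$ and gains nothing.

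Next I would fix a child $z \neq y$ of $x$, so by the edge definition there is a connected component $Z$ of $f^{-1}((f(x),\infty))$ in which $z$ is lowest and with $\{x,z'\} \in \mathbb{X}$ for some $z' \in Z$. Because distinct children of $x$ are the lowest vertices of distinct components (a component has a unique lowest vertex by injectivity of $f$) and $z \neq y$, the vertex $y$ must lie in a different component $Y \neq Z$; in particular $y \notin Z$, and hence $Z \subseteq f'^{-1}((f'(x),\infty))$ by the identity above.

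The heart of the argument is to show that $Z$ survives as a \emph{full} connected component of the perturbed superlevel set. Here I would use that passing from $f^{-1}((f(x),\infty))$ to $f'^{-1}((f'(x),\infty))$ only deletes $y$: deleting a vertex can never connect two previously separated components, and it can only alter the internal connectivity of the single component containing it, namely $Y$. Since $Z \neq Y$, both the internal connectivity of $Z$ and its separation from the remaining components are preserved, so $Z$ is again a connected component of $f'^{-1}((f'(x),\infty))$. This is the step I expect to be the main obstacle, as it requires arguing carefully that the minimal perturbation realizes exactly the vertex deletion in the displayed identity and then invoking the monotonicity of connected components under vertex removal, rather than merely appealing to the intuition that ``subtrees not containing $y$ stay untouched'' (the black edges of Figure~\ref{fig:perturbation_edge_changes}).

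Finally I would check the two defining conditions for the edge $(z,x) \in E(\mathcal{T}_{f'})$. Since $f' = f$ on $Z$ (as $x \notin Z$) and $f$ is injective, $z$ remains the unique lowest vertex of the component $Z$ with respect to $f'$; and since $\mathbb{X}$ does not change, the simplex $\{x,z'\}$ with $z' \in Z$ still witnesses the required adjacency. Both conditions hold, so $(z,x) \in E(\mathcal{T}_{f'})$, which completes the observation and dovetails with Observation~\ref{obs:flipped_edge} handling the remaining edge incident to $x$.
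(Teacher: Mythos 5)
Your proof is correct and takes essentially the same route as the paper: the paper's entire justification for this observation is that ``since $x$ only passes $y$, the connectivity of its superlevel set components not containing $y$ stays untouched,'' so the distinct subtrees of $x$ not containing $y$ keep their roots as children of $x$ (deferring details to the cited work of Oesterling et al.). You simply make that argument rigorous, via the identity $f'^{-1}((f'(x),\infty)) = f^{-1}((f(x),\infty)) \setminus \{y\}$, the fact that deleting a vertex preserves every connected component not containing it, and an explicit verification of the two conditions in the edge definition of $\mathcal{T}_{f'}$.
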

The only remaining edges to consider are the children of $y$ in $\mathcal{T}_f$.
They can stay a child of $y$ or become a child of $x$, depending on the connectivity of $x$ in $\mathbb{X}$.
Let $z$ be a child of $y$, i.e.\ $(z,y) \in E(\mathcal{T}_{f})$ with $C$ being its component 
in the superlevel set $f^{-1}((f(y),\infty))$ (and also a distinct subtree of $y$ in $\mathcal{T}_{f}$).
Either there is an edge directly connecting $x$ to $C$ in $\mathbb{X}$ or not.
If there is one, $C$ becomes a distinct subtree of $x$, otherwise it stays a distinct subtree of $y$.
\begin{observation}
\label{obs:children_y}
    % For each $z$ with $(z,y) \in E(\mathcal{T}_{f})$, we either have $(z,x) \in E(\mathcal{T}_{f'})$ or $(z,y) \in E(\mathcal{T}_{f'})$.
    We either have $(z,x) \in E(\mathcal{T}_{f'})$ or $(z,y) \in E(\mathcal{T}_{f'})$ $\forall z$ with $(z,y) \in E(\mathcal{T}_{f})$.
\end{observation}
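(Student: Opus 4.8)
The plan is to read off the parent of each child $z$ of $y$ directly from the defining $\mathbb{X}$-adjacency condition for merge-tree edges, applied to $f'$ in place of $f$. The fact that drives the whole argument is that, since $x$ merely passes $y$ in the ordering, its new value $f'(x)$ lies strictly above $f(y)$ but below every other value exceeding $f(y)$; hence there is no vertex $w \neq x$ with $f(y) < f(w) \leq f'(x)$. This gap-freeness lets me compare superlevel sets of $f$ and $f'$ explicitly: $f'^{-1}((f(y),\infty)) = f^{-1}((f(y),\infty)) \cup \{x\}$, whereas $f'^{-1}((f'(x),\infty)) = f^{-1}((f(y),\infty))$ as vertex sets, with identical induced subcomplexes since $\mathbb{X}$ is unchanged. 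I would then fix $z$ with $(z,y) \in E(\mathcal{T}_f)$ and let $C$ be its component of $f^{-1}((f(y),\infty))$, so that $z$ is lowest in $C$ and $y$ is $\mathbb{X}$-adjacent to $C$; since $z$ keeps its value, $z$ is still the lowest vertex of $C$ under $f'$.

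Next I would split on whether $x$ is $\mathbb{X}$-adjacent to some vertex of $C$. If it is, I use $f'^{-1}((f'(x),\infty)) = f^{-1}((f(y),\infty))$: here $C$ is a component containing $z$ as its lowest vertex and $x$ is adjacent to $C$, so the edge condition yields $(z,x) \in E(\mathcal{T}_{f'})$. Intuitively, sweeping the threshold downward, $x$ enters the superlevel set exactly at $f'(x)$, immediately joins $C$, and becomes its new minimum, thereby terminating $z$'s branch at $x$.

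If $x$ is not adjacent to $C$, I instead work in $f'^{-1}((f'(y),\infty)) = f^{-1}((f(y),\infty)) \cup \{x\}$. Adding the vertex $x$, which is non-adjacent to $C$, can neither attach $x$ to $C$ nor bridge $C$ to any other component through $x$, so $C$ survives intact as a component of this set with $z$ still lowest; since $y$ remains $\mathbb{X}$-adjacent to $C$, the edge condition gives $(z,y) \in E(\mathcal{T}_{f'})$. As the two cases are exhaustive, every child $z$ of $y$ either becomes a child of $x$ or remains a child of $y$, which is exactly the claim.

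The main obstacle — indeed the only place needing care — is justifying that inserting $x$ into $f'^{-1}((f(y),\infty))$ does not perturb the component $C$ in an unintended way and that the two relevant threshold sets genuinely coincide. This rests entirely on the adjacency-of-swap fact that $f'(x)$ lands immediately above $f(y)$ with no third vertex in between to mediate an unexpected merge, so I would isolate this gap-freeness as the first step. A secondary point worth checking is that the literal edge definition pins down a \emph{unique} parent, so that exhibiting one valid candidate suffices; this holds because $z$ ceases to be the minimum of its component below its merge vertex, and hence the condition singles out precisely that vertex.
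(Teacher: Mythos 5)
Your proof is correct and follows essentially the same route as the paper: the paper also fixes the component $C$ of $z$ in $f^{-1}((f(y),\infty))$ and splits on whether $x$ has an $\mathbb{X}$-edge into $C$, concluding that $C$ becomes a distinct subtree of $x$ in the first case and stays one of $y$ otherwise. Your write-up merely makes explicit the superlevel-set identities $f'^{-1}((f(y),\infty)) = f^{-1}((f(y),\infty)) \cup \{x\}$ and $f'^{-1}((f'(x),\infty)) = f^{-1}((f(y),\infty))$, which the paper leaves implicit.
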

Figure~\ref{fig:perturbation_edge_changes} summarizes the conclusions above: we have to consider which children of $y$ become children of $x$ to derive all possible structural changes between $\mathcal{T}_{f}$ and $\mathcal{T}_{f'}$.

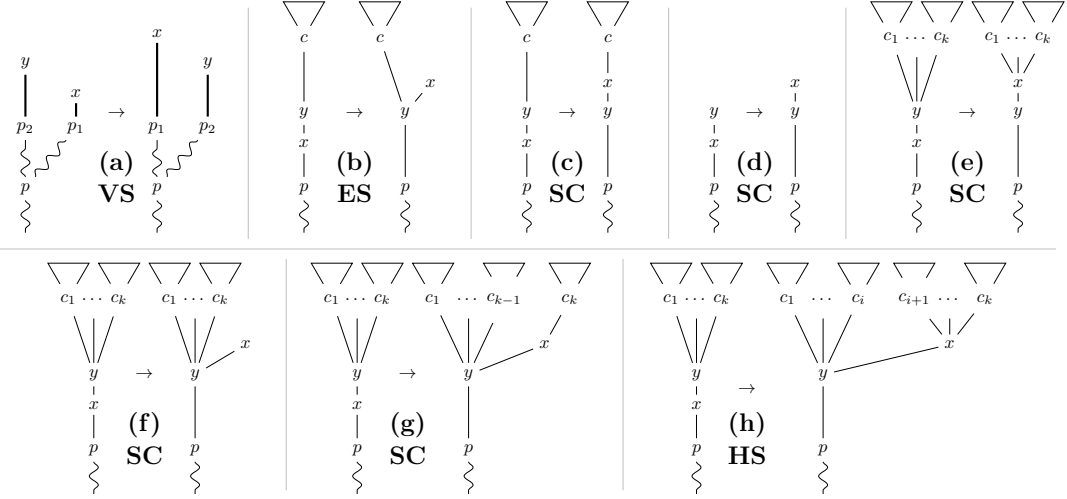
\begin{figure}
 \centering 
 \resizebox{\linewidth}{!}{
 \begin{tikzpicture}
 % \tikzset{decoration={snake,amplitude=.4mm,segment length=2mm,
 %                       post length=0mm,pre length=0mm}}
 \tikzset{decoration={snake}}
 % \tikzset{decoration={coil}}

 \begin{scope}[shift={(0,5.2)}]
 \node[draw=none,fill=none,circle] at (-2.5, -1) (dummy1) {};
 \node[draw=none,fill=none,circle] at (2.5, -1) (dummy2) {};
 \node[draw=none,fill=none,circle] at (-2.5, 4) (dummy3) {};
 \node[draw=none,fill=none,circle] at (2.5, 4) (dummy4) {};
 % \draw[] (dummy1) -- (dummy2);
 % \draw[] (dummy2) -- (dummy4);
 % \draw[] (dummy4) -- (dummy3);
 % \draw[] (dummy3) -- (dummy1);

 \begin{scope}[shift={(-0.8,0)}]
 
 \node[] at (0, 1.2) (p1) {$p_1$};
 \node[] at (-1, 1.2) (p2) {$p_2$};
 \node[] at (-1, 0) (p) {$p$};
 \node[] at (-1, -1) (p_) {};
 \node[] at (0, 1.9) (x) {$x$};
 \node[] at (-1, 2.5) (y) {$y$};
 
 \draw[very thick] (p2) -- (y);
 \draw[very thick] (p1) -- (x);
 \draw[decorate] (p) -- (p1);
 \draw[decorate] (p) -- (p2);
 \draw[decorate] (p) -- (p_);

 \end{scope}

 \node[] at (0,1.5) (arr) {$\rightarrow$};
 \node[] at (0,0.5) (l) {\Large\textbf{(a)}};
 \node[] at (0,-0.1) (l) {\Large\textbf{VS}};

 \begin{scope}[shift={(0.8,0)}]
 
 \node[] at (0, 1.2) (p1) {$p_1$};
 \node[] at (1, 1.2) (p2) {$p_2$};
 \node[] at (0, 0) (p) {$p$};
 \node[] at (0, -1) (p_) {};
 \node[] at (0, 3.1) (x) {$x$};
 \node[] at (1, 2.5) (y) {$y$};
 
 \draw[very thick] (p2) -- (y);
 \draw[very thick] (p1) -- (x);
 \draw[decorate] (p) -- (p1);
 \draw[decorate] (p) -- (p2);
 \draw[decorate] (p) -- (p_);

 \end{scope}
 \end{scope}
 %\label{fig:max_max}

 \begin{scope}[shift={(4.7,5.2)}]
 \node[draw=none,fill=none,circle] at (-2, -1) (dummy1) {};
 \node[draw=none,fill=none,circle] at (2, -1) (dummy2) {};
 \node[draw=none,fill=none,circle] at (-2, 4) (dummy3) {};
 \node[draw=none,fill=none,circle] at (2, 4) (dummy4) {};
 % \draw[] (dummy1) -- (dummy2);
 % \draw[] (dummy2) -- (dummy4);
 % \draw[] (dummy4) -- (dummy3);
 % \draw[] (dummy3) -- (dummy1);

 \begin{scope}[shift={(-1,0)}]
 
 \node[] at (0, 0) (p) {$p$};
 \node[] at (0, -1) (p_) {};
 \node[] at (0, 0.9) (x) {$x$};
 \node[] at (0, 1.5) (y) {$y$};
 \node[draw,isosceles triangle,isosceles triangle apex angle=60,rotate=270,anchor=apex,minimum size=20pt]
         at (0, 3) (ct) {};
 \node[circle, fill=white]
         at (0, 3) (c) {$c$};
 
 \draw[decorate] (p) -- (p_);
 \draw[] (p) -- (x);
 \draw[] (x) -- (y);
 \draw[] (y) -- (c);

 \end{scope}

 \node[] at (0,1.5) (arr) {$\rightarrow$};
 \node[] at (0,0.5) (l) {\Large\textbf{(b)}};
 \node[] at (0,-0.1) (l) {\Large\textbf{ES}};

 \begin{scope}[shift={(1,0)}]
 
 \node[] at (0, 0) (p) {$p$};
 \node[] at (0, -1) (p_) {};
 \node[] at (0.5, 2.1) (x) {$x$};
 \node[] at (0, 1.5) (y) {$y$};
 \node[draw,isosceles triangle,isosceles triangle apex angle=60,rotate=270,anchor=apex,minimum size=20pt]
         at (-0.5, 3) (ct) {};
 \node[circle, fill=white]
         at (-0.5, 3) (c) {$c$};
 
 \draw[decorate] (p) -- (p_);
 \draw[] (p) -- (y);
 \draw[] (x) -- (y);
 \draw[] (y) -- (c);

 \end{scope}
 \end{scope}
 \label{fig:reg_reg_with_change}

 \begin{scope}[shift={(8.9,5.2)}]
 \node[draw=none,fill=none,circle] at (-2, -1) (dummy1) {};
 \node[draw=none,fill=none,circle] at (2, -1) (dummy2) {};
 \node[draw=none,fill=none,circle] at (-2, 4) (dummy3) {};
 \node[draw=none,fill=none,circle] at (2, 4) (dummy4) {};
 % \draw[] (dummy1) -- (dummy2);
 % \draw[] (dummy2) -- (dummy4);
 % \draw[] (dummy4) -- (dummy3);
 % \draw[] (dummy3) -- (dummy1);

 \begin{scope}[shift={(-0.8,0)}]
 
 \node[] at (0, 0) (p) {$p$};
 \node[] at (0, -1) (p_) {};
 \node[] at (0, 0.9) (x) {$x$};
 \node[] at (0, 1.5) (y) {$y$};
 \node[draw,isosceles triangle,isosceles triangle apex angle=60,rotate=270,anchor=apex,minimum size=20pt]
         at (0, 3) (ct) {};
 \node[circle, fill=white]
         at (0, 3) (c) {$c$};
 
 \draw[decorate] (p) -- (p_);
 \draw[] (p) -- (x);
 \draw[] (x) -- (y);
 \draw[] (y) -- (c);

 \end{scope}

 \node[] at (0,1.5) (arr) {$\rightarrow$};
 \node[] at (0,0.5) (l) {\Large\textbf{(c)}};
 \node[] at (0,-0.1) (l) {\Large\textbf{SC}};

 \begin{scope}[shift={(0.8,0)}]
 
 \node[] at (0, 0) (p) {$p$};
 \node[] at (0, -1) (p_) {};
 \node[] at (0, 2.1) (x) {$x$};
 \node[] at (0, 1.5) (y) {$y$};
 \node[draw,isosceles triangle,isosceles triangle apex angle=60,rotate=270,anchor=apex,minimum size=20pt]
         at (0, 3) (ct) {};
 \node[circle, fill=white]
         at (0, 3) (c) {$c$};
 
 \draw[decorate] (p) -- (p_);
 \draw[] (p) -- (y);
 \draw[] (x) -- (y);
 \draw[] (x) -- (c);

 \end{scope}
 \end{scope}
 \label{fig:reg_reg_without_change}

 \begin{scope}[shift={(12.6,5.2)}]
 \node[draw=none,fill=none,circle] at (-1.5, -1) (dummy1) {};
 \node[draw=none,fill=none,circle] at (1.5, -1) (dummy2) {};
 \node[draw=none,fill=none,circle] at (-1.5, 4) (dummy3) {};
 \node[draw=none,fill=none,circle] at (1.5, 4) (dummy4) {};
 % \draw[] (dummy1) -- (dummy2);
 % \draw[] (dummy2) -- (dummy4);
 % \draw[] (dummy4) -- (dummy3);
 % \draw[] (dummy3) -- (dummy1);

 \begin{scope}[shift={(-0.8,0)}]
 
 \node[] at (0, 0) (p) {$p$};
 \node[] at (0, -1) (p_) {};
 \node[] at (0, 0.9) (x) {$x$};
 \node[] at (0, 1.5) (y) {$y$};
 
 \draw[decorate] (p) -- (p_);
 \draw[] (p) -- (x);
 \draw[] (x) -- (y);

 \end{scope}

 \node[] at (0,1.5) (arr) {$\rightarrow$};
 \node[] at (0,0.5) (l) {\Large\textbf{(d)}};
 \node[] at (0,-0.1) (l) {\Large\textbf{SC}};

 \begin{scope}[shift={(0.8,0)}]
 
 \node[] at (0, 0) (p) {$p$};
 \node[] at (0, -1) (p_) {};
 \node[] at (0, 2.1) (x) {$x$};
 \node[] at (0, 1.5) (y) {$y$};
 
 \draw[decorate] (p) -- (p_);
 \draw[] (p) -- (y);
 \draw[] (x) -- (y);

 \end{scope}
 \end{scope}
 \label{fig:reg_max}

 \begin{scope}[shift={(16.8,5.2)}]
 \node[draw=none,fill=none,circle] at (-2.5, -1) (dummy1) {};
 \node[draw=none,fill=none,circle] at (2.5, -1) (dummy2) {};
 \node[draw=none,fill=none,circle] at (-2.5, 4) (dummy3) {};
 \node[draw=none,fill=none,circle] at (2.5, 4) (dummy4) {};
 % \draw[] (dummy1) -- (dummy2);
 % \draw[] (dummy2) -- (dummy4);
 % \draw[] (dummy4) -- (dummy3);
 % \draw[] (dummy3) -- (dummy1);

 \begin{scope}[shift={(-1,0)}]
 
 \node[] at (0, 0) (p) {$p$};
 \node[] at (0, -1) (p_) {};
 \node[] at (0, 0.9) (x) {$x$};
 \node[] at (0, 1.5) (y) {$y$};
 \node[draw,isosceles triangle,isosceles triangle apex angle=60,rotate=270,anchor=apex,minimum size=20pt]
         at (-0.5, 3) (c1t) {};
 \node[circle, fill=white]
         at (-0.5, 3) (c1) {$c_1$};
 \node[draw,isosceles triangle,isosceles triangle apex angle=60,rotate=270,anchor=apex,minimum size=20pt]
         at (0.5, 3) (c2t) {};
 \node[circle, fill=white]
         at (0.5, 3) (c2) {$c_k$};
 \node[circle]
         at (0, 3) (cd) {$\dots$};
 
 \draw[decorate] (p) -- (p_);
 \draw[] (p) -- (x);
 \draw[] (x) -- (y);
 \draw[] (y) -- (c1);
 \draw[] (y) -- (c2);
 \draw[] (y) -- (cd);

 \end{scope}

 \node[] at (0,1.5) (arr) {$\rightarrow$};
 \node[] at (0,0.5) (l) {\Large\textbf{(e)}};
 \node[] at (0,-0.1) (l) {\Large\textbf{SC}};

 \begin{scope}[shift={(1,0)}]
 
 \node[] at (0, 0) (p) {$p$};
 \node[] at (0, -1) (p_) {};
 \node[] at (0, 2.1) (x) {$x$};
 \node[] at (0, 1.5) (y) {$y$};
 \node[draw,isosceles triangle,isosceles triangle apex angle=60,rotate=270,anchor=apex,minimum size=20pt]
         at (-0.5, 3) (c1t) {};
 \node[circle, fill=white]
         at (-0.5, 3) (c1) {$c_1$};
 \node[draw,isosceles triangle,isosceles triangle apex angle=60,rotate=270,anchor=apex,minimum size=20pt]
         at (0.5, 3) (c2t) {};
 \node[circle, fill=white]
         at (0.5, 3) (c2) {$c_k$};
 \node[circle]
         at (0, 3) (cd) {$\dots$};
 
 \draw[decorate] (p) -- (p_);
 \draw[] (p) -- (y);
 \draw[] (x) -- (y);
 \draw[] (x) -- (c1);
 \draw[] (x) -- (c2);
 \draw[] (x) -- (cd);

 \end{scope}
 \end{scope}
 \label{fig:reg_sad_all}

 \begin{scope}[shift={(0.55,0)}]
 \node[draw=none,fill=none,circle] at (-2.5, -1) (dummy1) {};
 \node[draw=none,fill=none,circle] at (2.5, -1) (dummy2) {};
 \node[draw=none,fill=none,circle] at (-2.5, 4) (dummy3) {};
 \node[draw=none,fill=none,circle] at (2.5, 4) (dummy4) {};
 % \draw[] (dummy1) -- (dummy2);
 % \draw[] (dummy2) -- (dummy4);
 % \draw[] (dummy4) -- (dummy3);
 % \draw[] (dummy3) -- (dummy1);

 \begin{scope}[shift={(-1,0)}]
 
 \node[] at (0, 0) (p) {$p$};
 \node[] at (0, -1) (p_) {};
 \node[] at (0, 0.9) (x) {$x$};
 \node[] at (0, 1.5) (y) {$y$};
 \node[draw,isosceles triangle,isosceles triangle apex angle=60,rotate=270,anchor=apex,minimum size=20pt]
         at (-0.5, 3) (c1t) {};
 \node[circle, fill=white]
         at (-0.5, 3) (c1) {$c_1$};
 \node[draw,isosceles triangle,isosceles triangle apex angle=60,rotate=270,anchor=apex,minimum size=20pt]
         at (0.5, 3) (c2t) {};
 \node[circle, fill=white]
         at (0.5, 3) (c2) {$c_k$};
 \node[circle]
         at (0, 3) (cd) {$\dots$};
 
 \draw[decorate] (p) -- (p_);
 \draw[] (p) -- (x);
 \draw[] (x) -- (y);
 \draw[] (y) -- (c1);
 \draw[] (y) -- (c2);
 \draw[] (y) -- (cd);

 \end{scope}

 \node[] at (0,1.5) (arr) {$\rightarrow$};
 \node[] at (0,0.5) (l) {\Large\textbf{(f)}};
 \node[] at (0,-0.1) (l) {\Large\textbf{SC}};

 \begin{scope}[shift={(1,0)}]
 
 \node[] at (0, 0) (p) {$p$};
 \node[] at (0, -1) (p_) {};
 \node[] at (1, 2.1) (x) {$x$};
 \node[] at (0, 1.5) (y) {$y$};
 \node[draw,isosceles triangle,isosceles triangle apex angle=60,rotate=270,anchor=apex,minimum size=20pt]
         at (-0.5, 3) (c1t) {};
 \node[circle, fill=white]
         at (-0.5, 3) (c1) {$c_1$};
 \node[draw,isosceles triangle,isosceles triangle apex angle=60,rotate=270,anchor=apex,minimum size=20pt]
         at (0.5, 3) (c2t) {};
 \node[circle, fill=white]
         at (0.5, 3) (c2) {$c_k$};
 \node[circle]
         at (0, 3) (cd) {$\dots$};
 
 \draw[decorate] (p) -- (p_);
 \draw[] (p) -- (y);
 \draw[] (x) -- (y);
 \draw[] (y) -- (c1);
 \draw[] (y) -- (c2);
 \draw[] (y) -- (cd);

 \end{scope}
 \end{scope}
 \label{fig:reg_sad_none}

 \begin{scope}[shift={(5.75,0)}]
 \node[draw=none,fill=none,circle] at (-2.5, -1) (dummy1) {};
 \node[draw=none,fill=none,circle] at (4, -1) (dummy2) {};
 \node[draw=none,fill=none,circle] at (-2.5, 4) (dummy3) {};
 \node[draw=none,fill=none,circle] at (4, 4) (dummy4) {};
 % \draw[] (dummy1) -- (dummy2);
 % \draw[] (dummy2) -- (dummy4);
 % \draw[] (dummy4) -- (dummy3);
 % \draw[] (dummy3) -- (dummy1);

 \begin{scope}[shift={(-1,0)}]
 
 \node[] at (0, 0) (p) {$p$};
 \node[] at (0, -1) (p_) {};
 \node[] at (0, 0.9) (x) {$x$};
 \node[] at (0, 1.5) (y) {$y$};
 \node[draw,isosceles triangle,isosceles triangle apex angle=60,rotate=270,anchor=apex,minimum size=20pt]
         at (-0.5, 3) (c1t) {};
 \node[circle, fill=white]
         at (-0.5, 3) (c1) {$c_1$};
 \node[draw,isosceles triangle,isosceles triangle apex angle=60,rotate=270,anchor=apex,minimum size=20pt]
         at (0.5, 3) (c2t) {};
 \node[circle, fill=white]
         at (0.5, 3) (c2) {$c_k$};
 \node[circle]
         at (0, 3) (cd) {$\dots$};
 
 \draw[decorate] (p) -- (p_);
 \draw[] (p) -- (x);
 \draw[] (x) -- (y);
 \draw[] (y) -- (c1);
 \draw[] (y) -- (c2);
 \draw[] (y) -- (cd);

 \end{scope}

 \node[] at (0,1.5) (arr) {$\rightarrow$};
 \node[] at (0,0.5) (l) {\Large\textbf{(g)}};
 \node[] at (0,-0.1) (l) {\Large\textbf{SC}};

 \begin{scope}[shift={(1.2,0)}]
 
 \node[] at (0, 0) (p) {$p$};
 \node[] at (0, -1) (p_) {};
 \node[] at (1.5, 2.1) (x) {$x$};
 \node[] at (0, 1.5) (y) {$y$};
 \node[draw,isosceles triangle,isosceles triangle apex angle=60,rotate=270,anchor=apex,minimum size=20pt]
         at (-0.7, 3) (c1t) {};
 \node[circle, fill=white]
         at (-0.7, 3) (c1) {$c_1$};
 \node[draw,isosceles triangle,isosceles triangle apex angle=60,rotate=270,anchor=apex,minimum size=20pt]
         at (0.7, 3) (c2t) {};
 \node[circle, fill=white]
         at (0.7, 3) (c2) {$c_{k-1}$};
 \node[draw,isosceles triangle,isosceles triangle apex angle=60,rotate=270,anchor=apex,minimum size=20pt]
         at (2, 3) (c3t) {};
 \node[circle, fill=white]
         at (2, 3) (c3) {$c_{k}$};
 \node[circle]
         at (0, 3) (cd) {$\dots$};
 
 \draw[decorate] (p) -- (p_);
 \draw[] (p) -- (y);
 \draw[] (x) -- (y);
 \draw[] (y) -- (c1);
 \draw[] (y) -- (c2);
 \draw[] (y) -- (cd);
 \draw[] (x) -- (c3);

 \end{scope}
 \end{scope}

 \begin{scope}[shift={(12.45,0)}]
 \node[draw=none,fill=none,circle] at (-2.5, -1) (dummy1) {};
 \node[draw=none,fill=none,circle] at (5.5, -1) (dummy2) {};
 \node[draw=none,fill=none,circle] at (-2.5, 4) (dummy3) {};
 \node[draw=none,fill=none,circle] at (5.5, 4) (dummy4) {};
 % \draw[] (dummy1) -- (dummy2);
 % \draw[] (dummy2) -- (dummy4);
 % \draw[] (dummy4) -- (dummy3);
 % \draw[] (dummy3) -- (dummy1);

 \begin{scope}[shift={(-1,0)}]
 
 \node[] at (0, 0) (p) {$p$};
 \node[] at (0, -1) (p_) {};
 \node[] at (0, 0.9) (x) {$x$};
 \node[] at (0, 1.5) (y) {$y$};
 \node[draw,isosceles triangle,isosceles triangle apex angle=60,rotate=270,anchor=apex,minimum size=20pt]
         at (-0.5, 3) (c1t) {};
 \node[circle, fill=white]
         at (-0.5, 3) (c1) {$c_1$};
 \node[draw,isosceles triangle,isosceles triangle apex angle=60,rotate=270,anchor=apex,minimum size=20pt]
         at (0.5, 3) (c2t) {};
 \node[circle, fill=white]
         at (0.5, 3) (c2) {$c_k$};
 \node[circle]
         at (0, 3) (cd) {$\dots$};
 
 \draw[decorate] (p) -- (p_);
 \draw[] (p) -- (x);
 \draw[] (x) -- (y);
 \draw[] (y) -- (c1);
 \draw[] (y) -- (c2);
 \draw[] (y) -- (cd);

 \end{scope}

 \node[] at (0,1.2) (arr) {$\rightarrow$};
 \node[] at (0,0.5) (l) {\Large\textbf{(h)}};
 \node[] at (0,-0.1) (l) {\Large\textbf{HS}};

 \begin{scope}[shift={(1.5,0)}]
 
 \node[] at (0, 0) (p) {$p$};
 \node[] at (0, -1) (p_) {};
 \node[] at (2.5, 2.1) (x) {$x$};
 \node[] at (0, 1.5) (y) {$y$};
 
 \node[draw,isosceles triangle,isosceles triangle apex angle=60,rotate=270,anchor=apex,minimum size=20pt]
         at (-0.7, 3) (c1t) {};
 \node[circle, fill=white]
         at (-0.7, 3) (c1) {$c_1$};
 \node[draw,isosceles triangle,isosceles triangle apex angle=60,rotate=270,anchor=apex,minimum size=20pt]
         at (0.7, 3) (cit) {};
 \node[circle, fill=white]
         at (0.7, 3) (ci) {$c_{i}$};
 \node[circle]
         at (0, 3) (cd) {$\dots$};
         
 \node[draw,isosceles triangle,isosceles triangle apex angle=60,rotate=270,anchor=apex,minimum size=20pt]
         at (1.8, 3) (ciit) {};
 \node[circle, fill=white]
         at (1.8, 3) (cii) {$c_{i+1}$};
 \node[draw,isosceles triangle,isosceles triangle apex angle=60,rotate=270,anchor=apex,minimum size=20pt]
         at (3.2, 3) (c2t) {};
 \node[circle, fill=white]
         at (3.2, 3) (c2) {$c_{k}$};
 \node[circle]
         at (2.5, 3) (cdd) {$\dots$};
 
 \draw[decorate] (p) -- (p_);
 \draw[] (p) -- (y);
 \draw[] (x) -- (y);
 \draw[] (y) -- (c1);
 \draw[] (y) -- (ci);
 \draw[] (y) -- (cd);
 \draw[] (x) -- (cii);
 \draw[] (x) -- (c2);
 \draw[] (x) -- (cdd);

 \end{scope}
 \end{scope}
 \label{fig:reg_sad_some}

 \draw[thin, gray!50] (-2.3,4) -- (18.55,4);
 
 \draw[thin, gray!50] (2.6,4.2) -- (2.6,8.8);
 \draw[thin, gray!50] (7,4.2) -- (7,8.8);
 \draw[thin, gray!50] (10.9,4.2) -- (10.9,8.8);
 \draw[thin, gray!50] (14.4,4.2) -- (14.4,8.8);
 
 \draw[thin, gray!50] (3.35,3.8) -- (3.35,-0.8);
 \draw[thin, gray!50] (10,3.8) -- (10,-0.8);

 \end{tikzpicture}
 }
 \caption{Cases for shifting a maximum past a maximum (a) and a regular vertex past a regular~(b-c), maximum~(d) or saddle vertex~(e-h), following Observation~\ref{obs:no_edge}-\ref{obs:children_y}. Change types are annotated.}
 % \caption{Cases for a maximum passing a maximum (a) and a regular vertex passing a regular~(b-c), maximum~(d) or saddle vertex~(e-h), following Observation~\ref{obs:remaining_edges}-\ref{obs:children_y}. Change types are annotated below markers.}
 \label{fig:cases_max_reg}
\end{figure}

\begin{figure}[t!]
 \centering 
 \resizebox{\linewidth}{!}{
 \begin{tikzpicture}
 % \tikzset{decoration={snake,amplitude=.4mm,segment length=2mm,
 %                       post length=0mm,pre length=0mm}}
 \tikzset{decoration={snake}}
 % \tikzset{decoration={coil}}

 \begin{scope}[shift={(2.25-0.2,-5)}]
 \node[draw=none,fill=none,circle] at (-3.5, -1) (dummy1) {};
 \node[draw=none,fill=none,circle] at (3.5, -1) (dummy2) {};
 \node[draw=none,fill=none,circle] at (-3.5, 4) (dummy3) {};
 \node[draw=none,fill=none,circle] at (3.5, 4) (dummy4) {};
 % \draw[] (dummy1) -- (dummy2);
 % \draw[] (dummy2) -- (dummy4);
 % \draw[] (dummy4) -- (dummy3);
 % \draw[] (dummy3) -- (dummy1);
 
 \begin{scope}[shift={(-1.5,0)}]
 
 \node[] at (0, 0) (p) {$p$};
 \node[] at (0, -1) (p_) {};
 \node[] at (0, 0.9) (x) {$x$};
 \node[] at (-0.5, 1.5) (y) {$y$};
 \node[draw,isosceles triangle,isosceles triangle apex angle=60,rotate=270,anchor=apex,minimum size=20pt]
         at (-1, 3) (c1t) {};
 \node[circle, fill=white]
         at (-1, 3) (c1) {$c_1$};
 \node[draw,isosceles triangle,isosceles triangle apex angle=60,rotate=270,anchor=apex,minimum size=20pt]
         at (0, 3) (c2t) {};
 \node[circle, fill=white]
         at (0, 3) (c2) {$c_2$};
 \node[draw,isosceles triangle,isosceles triangle apex angle=60,rotate=270,anchor=apex,minimum size=20pt]
         at (1, 3) (ckt) {};
 \node[circle, fill=white]
         at (1, 3) (ck) {$c_k$};
 \node[circle]
         at (0.5, 3) (cd) {$\dots$};
 
 \draw[decorate] (p) -- (p_);
 \draw[] (p) -- (x);
 \draw[] (x) -- (y);
 \draw[] (y) -- (c1);
 \draw[] (x) -- (c2);
 \draw[] (x) -- (cd);
 \draw[] (x) -- (ck);

 \end{scope}

 \node[] at (0,1.5) (arr) {$\rightarrow$};
 \node[] at (0,0.5) (l) {\Large\textbf{(a)}};
 \node[] at (0,-0.1) (l) {\Large\textbf{HS}};

 \begin{scope}[shift={(1.5,0)}]
 
 \node[] at (0, 0) (p) {$p$};
 \node[] at (0, -1) (p_) {};
 \node[] at (0.5, 2.1) (x) {$x$};
 \node[] at (0, 1.5) (y) {$y$};
 \node[draw,isosceles triangle,isosceles triangle apex angle=60,rotate=270,anchor=apex,minimum size=20pt]
         at (-1, 3) (c1t) {};
 \node[circle, fill=white]
         at (-1, 3) (c1) {$c_1$};
 \node[draw,isosceles triangle,isosceles triangle apex angle=60,rotate=270,anchor=apex,minimum size=20pt]
         at (0, 3) (c2t) {};
 \node[circle, fill=white]
         at (0, 3) (c2) {$c_2$};
 \node[draw,isosceles triangle,isosceles triangle apex angle=60,rotate=270,anchor=apex,minimum size=20pt]
         at (1, 3) (ckt) {};
 \node[circle, fill=white]
         at (1, 3) (ck) {$c_k$};
 \node[circle]
         at (0.5, 3) (cd) {$\dots$};
 
 \draw[decorate] (p) -- (p_);
 \draw[] (p) -- (y);
 \draw[] (x) -- (y);
 \draw[] (y) -- (c1);
 \draw[] (x) -- (c2);
 \draw[] (x) -- (cd);
 \draw[] (x) -- (ck);

 \end{scope}
 \end{scope}
 \label{fig:sad_reg_one_many}

 \begin{scope}[shift={(8.25,-5)}]
 \node[draw=none,fill=none,circle] at (-2.5, -1) (dummy1) {};
 \node[draw=none,fill=none,circle] at (2.5, -1) (dummy2) {};
 \node[draw=none,fill=none,circle] at (-2.5, 4) (dummy3) {};
 \node[draw=none,fill=none,circle] at (2.5, 4) (dummy4) {};
 % \draw[] (dummy1) -- (dummy2);
 % \draw[] (dummy2) -- (dummy4);
 % \draw[] (dummy4) -- (dummy3);
 % \draw[] (dummy3) -- (dummy1);
 
 \begin{scope}[shift={(-1,0)}]
 
 \node[] at (0, 0) (p) {$p$};
 \node[] at (0, -1) (p_) {};
 \node[] at (0, 0.9) (x) {$x$};
 \node[] at (-0.5, 1.5) (y) {$y$};
 \node[draw,isosceles triangle,isosceles triangle apex angle=60,rotate=270,anchor=apex,minimum size=20pt]
         at (-0.5, 3) (c1t) {};
 \node[circle, fill=white]
         at (-0.5, 3) (c1) {$c_1$};
 \node[draw,isosceles triangle,isosceles triangle apex angle=60,rotate=270,anchor=apex,minimum size=20pt]
         at (0.5, 3) (c2t) {};
 \node[circle, fill=white]
         at (0.5, 3) (c2) {$c_2$};
 
 \draw[decorate] (p) -- (p_);
 \draw[] (p) -- (x);
 \draw[] (x) -- (y);
 \draw[] (y) -- (c1);
 \draw[] (x) -- (c2);

 \end{scope}

 \node[] at (0,1) (arr) {$\rightarrow$};
 \node[] at (0,0.5) (l) {\Large\textbf{(b)}};
 \node[] at (0,-0.1) (l) {\Large\textbf{SC}};

 \begin{scope}[shift={(1,0)}]
 
 \node[] at (0, 0) (p) {$p$};
 \node[] at (0, -1) (p_) {};
 \node[] at (0.5, 2.1) (x) {$x$};
 \node[] at (0, 1.5) (y) {$y$};
 \node[draw,isosceles triangle,isosceles triangle apex angle=60,rotate=270,anchor=apex,minimum size=20pt]
         at (-0.5, 3) (c1t) {};
 \node[circle, fill=white]
         at (-0.5, 3) (c1) {$c_1$};
 \node[draw,isosceles triangle,isosceles triangle apex angle=60,rotate=270,anchor=apex,minimum size=20pt]
         at (0.5, 3) (c2t) {};
 \node[circle, fill=white]
         at (0.5, 3) (c2) {$c_2$};
 
 \draw[decorate] (p) -- (p_);
 \draw[] (p) -- (y);
 \draw[] (x) -- (y);
 \draw[] (y) -- (c1);
 \draw[] (x) -- (c2);

 \end{scope}
 \end{scope}
 \label{fig:sad_reg_one_one}

 \begin{scope}[shift={(14.25+0.2,-5)}]
 \node[draw=none,fill=none,circle] at (-3.5, -1) (dummy1) {};
 \node[draw=none,fill=none,circle] at (2.5, -1) (dummy2) {};
 \node[draw=none,fill=none,circle] at (-3.5, 4) (dummy3) {};
 \node[draw=none,fill=none,circle] at (2.5, 4) (dummy4) {};
 % \draw[] (dummy1) -- (dummy2);
 % \draw[] (dummy2) -- (dummy4);
 % \draw[] (dummy4) -- (dummy3);
 % \draw[] (dummy3) -- (dummy1);
 
 \begin{scope}[shift={(-1.5,0)}]
 
 \node[] at (0, 0) (p) {$p$};
 \node[] at (0, -1) (p_) {};
 \node[] at (0, 0.9) (x) {$x$};
 \node[] at (-0.5, 1.5) (y) {$y$};
 \node[draw,isosceles triangle,isosceles triangle apex angle=60,rotate=270,anchor=apex,minimum size=20pt]
         at (-1, 3) (c1t) {};
 \node[circle, fill=white]
         at (-1, 3) (c1) {$c_1$};
 \node[draw,isosceles triangle,isosceles triangle apex angle=60,rotate=270,anchor=apex,minimum size=20pt]
         at (0, 3) (c2t) {};
 \node[circle, fill=white]
         at (0, 3) (c2) {$c_2$};
 \node[draw,isosceles triangle,isosceles triangle apex angle=60,rotate=270,anchor=apex,minimum size=20pt]
         at (1, 3) (ckt) {};
 \node[circle, fill=white]
         at (1, 3) (ck) {$c_k$};
 \node[circle]
         at (0.5, 3) (cd) {$\dots$};
 
 \draw[decorate] (p) -- (p_);
 \draw[] (p) -- (x);
 \draw[] (x) -- (y);
 \draw[] (y) -- (c1);
 \draw[] (x) -- (c2);
 \draw[] (x) -- (cd);
 \draw[] (x) -- (ck);

 \end{scope}

 \node[] at (0,1.5) (arr) {$\rightarrow$};
 \node[] at (0,0.5) (l) {\Large\textbf{(c)}};
 \node[] at (0,-0.1) (l) {\Large\textbf{SC}};

 \begin{scope}[shift={(1,0)}]
 
 \node[] at (0, 0) (p) {$p$};
 \node[] at (0, -1) (p_) {};
 \node[] at (0, 2.1) (x) {$x$};
 \node[] at (0, 1.5) (y) {$y$};
 \node[draw,isosceles triangle,isosceles triangle apex angle=60,rotate=270,anchor=apex,minimum size=20pt]
         at (-0.5, 3) (c1t) {};
 \node[circle, fill=white]
         at (-0.5, 3) (c1) {$c_1$};
 \node[draw,isosceles triangle,isosceles triangle apex angle=60,rotate=270,anchor=apex,minimum size=20pt]
         at (0.5, 3) (ckt) {};
 \node[circle, fill=white]
         at (0.5, 3) (ck) {$c_k$};
 \node[circle]
         at (0, 3) (cd) {$\dots$};
 
 \draw[decorate] (p) -- (p_);
 \draw[] (p) -- (y);
 \draw[] (x) -- (y);
 \draw[] (x) -- (c1);
 \draw[] (x) -- (cd);
 \draw[] (x) -- (ck);

 \end{scope}
 \end{scope}
 \label{fig:sad_reg_none}

 \begin{scope}[shift={(1.5-0.2,-10.2)}]
 \node[draw=none,fill=none,circle] at (-2.5, -1) (dummy1) {};
 \node[draw=none,fill=none,circle] at (2.5, -1) (dummy2) {};
 \node[draw=none,fill=none,circle] at (-2.5, 4) (dummy3) {};
 \node[draw=none,fill=none,circle] at (2.5, 4) (dummy4) {};
 % \draw[] (dummy1) -- (dummy2);
 % \draw[] (dummy2) -- (dummy4);
 % \draw[] (dummy4) -- (dummy3);
 % \draw[] (dummy3) -- (dummy1);
 
 \begin{scope}[shift={(-1.5,0)}]
 
 \node[] at (0, 0) (p) {$p$};
 \node[] at (0, -1) (p_) {};
 \node[] at (0, 0.9) (x) {$x$};
 \node[] at (-0.5, 1.5) (y) {$y$};
 \node[draw,isosceles triangle,isosceles triangle apex angle=60,rotate=270,anchor=apex,minimum size=20pt]
         at (0, 3) (c1t) {};
 \node[circle, fill=white]
         at (0, 3) (c1) {$c_1$};
 \node[draw,isosceles triangle,isosceles triangle apex angle=60,rotate=270,anchor=apex,minimum size=20pt]
         at (1, 3) (ckt) {};
 \node[circle, fill=white]
         at (1, 3) (ck) {$c_k$};
 \node[circle]
         at (0.5, 3) (cd) {$\dots$};
 
 \draw[decorate] (p) -- (p_);
 \draw[] (p) -- (x);
 \draw[] (x) -- (y);
 \draw[] (x) -- (c1);
 \draw[] (x) -- (cd);
 \draw[] (x) -- (ck);

 \end{scope}

 \node[] at (0,1.5) (arr) {$\rightarrow$};
 \node[] at (0,0.5) (l) {\Large\textbf{(d)}};
 \node[] at (0,-0.1) (l) {\Large\textbf{SC}};

 \begin{scope}[shift={(1,0)}]
 
 \node[] at (0, 0) (p) {$p$};
 \node[] at (0, -1) (p_) {};
 \node[] at (0, 2.1) (x) {$x$};
 \node[] at (0, 1.5) (y) {$y$};
 \node[draw,isosceles triangle,isosceles triangle apex angle=60,rotate=270,anchor=apex,minimum size=20pt]
         at (-0.5, 3) (c1t) {};
 \node[circle, fill=white]
         at (-0.5, 3) (c1) {$c_1$};
 \node[draw,isosceles triangle,isosceles triangle apex angle=60,rotate=270,anchor=apex,minimum size=20pt]
         at (0.5, 3) (ckt) {};
 \node[circle, fill=white]
         at (0.5, 3) (ck) {$c_k$};
 \node[circle]
         at (0, 3) (cd) {$\dots$};
 
 \draw[decorate] (p) -- (p_);
 \draw[] (p) -- (y);
 \draw[] (x) -- (y);
 \draw[] (x) -- (c1);
 \draw[] (x) -- (cd);
 \draw[] (x) -- (ck);

 \end{scope}
 \end{scope}
 \label{fig:sad_max_many}

 \begin{scope}[shift={(6,-10.2)}]
 \node[draw=none,fill=none,circle] at (-2, -1) (dummy1) {};
 \node[draw=none,fill=none,circle] at (2, -1) (dummy2) {};
 \node[draw=none,fill=none,circle] at (-2, 4) (dummy3) {};
 \node[draw=none,fill=none,circle] at (2, 4) (dummy4) {};
 % \draw[] (dummy1) -- (dummy2);
 % \draw[] (dummy2) -- (dummy4);
 % \draw[] (dummy4) -- (dummy3);
 % \draw[] (dummy3) -- (dummy1);
 
 \begin{scope}[shift={(-1,0)}]
 
 \node[] at (0, 0) (p) {$p$};
 \node[] at (0, -1) (p_) {};
 \node[] at (0, 0.9) (x) {$x$};
 \node[] at (-0.5, 1.5) (y) {$y$};
 \node[draw,isosceles triangle,isosceles triangle apex angle=60,rotate=270,anchor=apex,minimum size=20pt]
         at (0.5, 3) (c1t) {};
 \node[circle, fill=white]
         at (0.5, 3) (c1) {$c_1$};
 
 \draw[decorate] (p) -- (p_);
 \draw[] (p) -- (x);
 \draw[] (x) -- (y);
 \draw[] (x) -- (c1);

 \end{scope}

 \node[] at (0,1.5) (arr) {$\rightarrow$};
 \node[] at (0,0.5) (l) {\Large\textbf{(e)}};
 \node[] at (0,-0.1) (l) {\Large\textbf{ES}};

 \begin{scope}[shift={(1,0)}]
 
 \node[] at (0, 0) (p) {$p$};
 \node[] at (0, -1) (p_) {};
 \node[] at (0, 2.1) (x) {$x$};
 \node[] at (0, 1.5) (y) {$y$};
 \node[draw,isosceles triangle,isosceles triangle apex angle=60,rotate=270,anchor=apex,minimum size=20pt]
         at (0, 3) (c1t) {};
 \node[circle, fill=white]
         at (0, 3) (c1) {$c_1$};
 
 \draw[decorate] (p) -- (p_);
 \draw[] (p) -- (y);
 \draw[] (x) -- (y);
 \draw[] (x) -- (c1);

 \end{scope}
 \end{scope}
 \label{fig:sad_max_one}

 \begin{scope}[shift={(13+0.2,-10.2)}]
 \node[draw=none,fill=none,circle] at (-5, -1) (dummy1) {};
 \node[draw=none,fill=none,circle] at (3.5, -1) (dummy2) {};
 \node[draw=none,fill=none,circle] at (-5, 4) (dummy3) {};
 \node[draw=none,fill=none,circle] at (3.5, 4) (dummy4) {};
 % \draw[] (dummy1) -- (dummy2);
 % \draw[] (dummy2) -- (dummy4);
 % \draw[] (dummy4) -- (dummy3);
 % \draw[] (dummy3) -- (dummy1);
 
 \begin{scope}[shift={(-2,0)}]
 
 \node[] at (0, 0) (p) {$p$};
 \node[] at (0, -1) (p_) {};
 \node[] at (0, 1) (x) {$x$};
 \node[] at (-1.5, 2) (y) {$y$};
 
 \node[draw,isosceles triangle,isosceles triangle apex angle=60,rotate=270,anchor=apex,minimum size=20pt]
         at (-2.2, 3) (c1t) {};
 \node[circle, fill=white]
         at (-2.2, 3) (c1) {$c_1$};
 \node[draw,isosceles triangle,isosceles triangle apex angle=60,rotate=270,anchor=apex,minimum size=20pt]
         at (-0.8, 3) (cit) {};
 \node[circle, fill=white]
         at (-0.8, 3) (ci) {$c_{i}$};
 \node[circle]
         at (-1.5, 3) (cd) {$\dots$};
         
 \node[draw,isosceles triangle,isosceles triangle apex angle=60,rotate=270,anchor=apex,minimum size=20pt]
         at (0.8, 3) (ciit) {};
 \node[circle, fill=white]
         at (0.8, 3) (cii) {$c_{i+1}$};
 \node[draw,isosceles triangle,isosceles triangle apex angle=60,rotate=270,anchor=apex,minimum size=20pt]
         at (2.2, 3) (c2t) {};
 \node[circle, fill=white]
         at (2.2, 3) (c2) {$c_{k}$};
 \node[circle]
         at (1.5, 3) (cdd) {$\dots$};
 
 \draw[decorate] (p) -- (p_);
 \draw[] (p) -- (x);
 \draw[] (x) -- (y);
 \draw[] (y) -- (c1);
 \draw[] (y) -- (ci);
 \draw[] (y) -- (cd);
 \draw[] (x) -- (cii);
 \draw[] (x) -- (c2);
 \draw[] (x) -- (cdd);

 \end{scope}

 \node[] at (0,1.5) (arr) {$\rightarrow$};
 \node[] at (0,0.5) (l) {\Large\textbf{(f)}};
 \node[] at (0,-0.1) (l) {\Large\textbf{HS}};

 \begin{scope}[shift={(2,0)}]
 
 \node[] at (0, 0) (p) {$p$};
 \node[] at (0, -1) (p_) {};
 \node[] at (0, 2) (x) {$x$};
 \node[] at (0, 1) (y) {$y$};
 \node[draw,isosceles triangle,isosceles triangle apex angle=60,rotate=270,anchor=apex,minimum size=20pt]
         at (-0.5, 3) (c1t) {};
 \node[circle, fill=white]
         at (-0.5, 3) (c1) {$c_1$};
 \node[draw,isosceles triangle,isosceles triangle apex angle=60,rotate=270,anchor=apex,minimum size=20pt]
         at (0.5, 3) (ckt) {};
 \node[circle, fill=white]
         at (0.5, 3) (ck) {$c_k$};
 \node[circle]
         at (0, 3) (cd) {$\dots$};
 
 \draw[decorate] (p) -- (p_);
 \draw[] (p) -- (y);
 \draw[] (x) -- (y);
 \draw[] (x) -- (c1);
 \draw[] (x) -- (cd);
 \draw[] (x) -- (ck);

 \end{scope}
 \end{scope}
 \label{fig:sad_sad_all}

 \begin{scope}[shift={(3.5-0.1,-15.4)}]
 \node[draw=none,fill=none,circle] at (-6, -1) (dummy1) {};
 \node[draw=none,fill=none,circle] at (6, -1) (dummy2) {};
 \node[draw=none,fill=none,circle] at (-6, 4) (dummy3) {};
 \node[draw=none,fill=none,circle] at (6, 4) (dummy4) {};
 % \draw[] (dummy1) -- (dummy2);
 % \draw[] (dummy2) -- (dummy4);
 % \draw[] (dummy4) -- (dummy3);
 % \draw[] (dummy3) -- (dummy1);
 
 \begin{scope}[shift={(-3,0)}]
 
 \node[] at (0, 0) (p) {$p$};
 \node[] at (0, -1) (p_) {};
 \node[] at (0, 1) (x) {$x$};
 \node[] at (-1.5, 2) (y) {$y$};
 
 \node[draw,isosceles triangle,isosceles triangle apex angle=60,rotate=270,anchor=apex,minimum size=20pt]
         at (-2.2, 3) (c1t) {};
 \node[circle, fill=white]
         at (-2.2, 3) (c1) {$c_1$};
 \node[draw,isosceles triangle,isosceles triangle apex angle=60,rotate=270,anchor=apex,minimum size=20pt]
         at (-0.8, 3) (cit) {};
 \node[circle, fill=white]
         at (-0.8, 3) (ci) {$c_{i}$};
 \node[circle]
         at (-1.5, 3) (cd) {$\dots$};
         
 \node[draw,isosceles triangle,isosceles triangle apex angle=60,rotate=270,anchor=apex,minimum size=20pt]
         at (0.8, 3) (ciit) {};
 \node[circle, fill=white]
         at (0.8, 3) (cii) {$c_{i+1}$};
 \node[draw,isosceles triangle,isosceles triangle apex angle=60,rotate=270,anchor=apex,minimum size=20pt]
         at (2.2, 3) (c2t) {};
 \node[circle, fill=white]
         at (2.2, 3) (c2) {$c_{k}$};
 \node[circle]
         at (1.5, 3) (cdd) {$\dots$};
 
 \draw[decorate] (p) -- (p_);
 \draw[] (p) -- (x);
 \draw[] (x) -- (y);
 \draw[] (y) -- (c1);
 \draw[] (y) -- (ci);
 \draw[] (y) -- (cd);
 \draw[] (x) -- (cii);
 \draw[] (x) -- (c2);
 \draw[] (x) -- (cdd);

 \end{scope}

 \node[] at (0,1.5) (arr) {$\rightarrow$};
 \node[] at (0,0.5) (l) {\Large\textbf{(g)}};
 \node[] at (0,-0.1) (l) {\Large\textbf{HS}};

 \begin{scope}[shift={(3,0)}]
 
 \node[] at (0, 0) (p) {$p$};
 \node[] at (0, -1) (p_) {};
 \node[] at (1.5, 2) (x) {$x$};
 \node[] at (0, 1) (y) {$y$};
 
 \node[draw,isosceles triangle,isosceles triangle apex angle=60,rotate=270,anchor=apex,minimum size=20pt]
         at (-2.2, 3) (c1t) {};
 \node[circle, fill=white]
         at (-2.2, 3) (c1) {$c_1$};
 \node[draw,isosceles triangle,isosceles triangle apex angle=60,rotate=270,anchor=apex,minimum size=20pt]
         at (-0.8, 3) (cit) {};
 \node[circle, fill=white]
         at (-0.8, 3) (ci) {$c_{j}$};
 \node[circle]
         at (-1.5, 3) (cd) {$\dots$};
         
 \node[draw,isosceles triangle,isosceles triangle apex angle=60,rotate=270,anchor=apex,minimum size=20pt]
         at (0.8, 3) (ciit) {};
 \node[circle, fill=white]
         at (0.8, 3) (cii) {$c_{j+1}$};
 \node[draw,isosceles triangle,isosceles triangle apex angle=60,rotate=270,anchor=apex,minimum size=20pt]
         at (2.2, 3) (c2t) {};
 \node[circle, fill=white]
         at (2.2, 3) (c2) {$c_{k}$};
 \node[circle]
         at (1.5, 3) (cdd) {$\dots$};
 
 \draw[decorate] (p) -- (p_);
 \draw[] (p) -- (y);
 \draw[] (x) -- (y);
 \draw[] (y) -- (c1);
 \draw[] (y) -- (ci);
 \draw[] (y) -- (cd);
 \draw[] (x) -- (cii);
 \draw[] (x) -- (c2);
 \draw[] (x) -- (cdd);

 \end{scope}
 \end{scope}
 \label{fig:sad_sad_many_many}

 \begin{scope}[shift={(14.5+0.1,-15.4)}]
 \node[draw=none,fill=none,circle] at (-5, -1) (dummy1) {};
 \node[draw=none,fill=none,circle] at (3.5, -1) (dummy2) {};
 \node[draw=none,fill=none,circle] at (-5, 4) (dummy3) {};
 \node[draw=none,fill=none,circle] at (3.5, 4) (dummy4) {};
 % \draw[] (dummy1) -- (dummy2);
 % \draw[] (dummy2) -- (dummy4);
 % \draw[] (dummy4) -- (dummy3);
 % \draw[] (dummy3) -- (dummy1);
 
 \begin{scope}[shift={(-2,0)}]
 
 \node[] at (-0.5, 0) (p) {$p$};
 \node[] at (-0.5, -1) (p_) {};
 \node[] at (-0.5, 1) (x) {$x$};
 \node[] at (-1.5, 2) (y) {$y$};
 
 \node[draw,isosceles triangle,isosceles triangle apex angle=60,rotate=270,anchor=apex,minimum size=20pt]
         at (-2.2, 3) (c1t) {};
 \node[circle, fill=white]
         at (-2.2, 3) (c1) {$c_1$};
 \node[draw,isosceles triangle,isosceles triangle apex angle=60,rotate=270,anchor=apex,minimum size=20pt]
         at (-0.8, 3) (cit) {};
 \node[circle, fill=white]
         at (-0.8, 3) (ci) {$c_{k-1}$};
 \node[circle]
         at (-1.5, 3) (cd) {$\dots$};
         
 \node[draw,isosceles triangle,isosceles triangle apex angle=60,rotate=270,anchor=apex,minimum size=20pt]
         at (0.5, 3) (ckt) {};
 \node[circle, fill=white]
         at (0.5, 3) (ck) {$c_{k}$};
 
 \draw[decorate] (p) -- (p_);
 \draw[] (p) -- (x);
 \draw[] (x) -- (y);
 \draw[] (y) -- (c1);
 \draw[] (y) -- (ci);
 \draw[] (y) -- (cd);
 \draw[] (x) -- (ck);

 \end{scope}

 \node[] at (-0.5,1.5) (arr) {$\rightarrow$};
 \node[] at (-0.5,0.5) (l) {\Large\textbf{(h)}};
 \node[] at (-0.5,-0.1) (l) {\Large\textbf{HS}};

 \begin{scope}[shift={(2,0)}]
 
 \node[] at (-0.5, 0) (p) {$p$};
 \node[] at (-0.5, -1) (p_) {};
 \node[] at (0, 2) (x) {$x$};
 \node[] at (-0.5, 1) (y) {$y$};
 
 \node[draw,isosceles triangle,isosceles triangle apex angle=60,rotate=270,anchor=apex,minimum size=20pt]
         at (-2.2, 3) (c1t) {};
 \node[circle, fill=white]
         at (-2.2, 3) (c1) {$c_1$};
 \node[draw,isosceles triangle,isosceles triangle apex angle=60,rotate=270,anchor=apex,minimum size=20pt]
         at (-0.8, 3) (cit) {};
 \node[circle, fill=white]
         at (-0.8, 3) (ci) {$c_{k-1}$};
 \node[circle]
         at (-1.5, 3) (cd) {$\dots$};
         
 \node[draw,isosceles triangle,isosceles triangle apex angle=60,rotate=270,anchor=apex,minimum size=20pt]
         at (0.5, 3) (ckt) {};
 \node[circle, fill=white]
         at (0.5, 3) (ck) {$c_{k}$};
 
 \draw[decorate] (p) -- (p_);
 \draw[] (p) -- (y);
 \draw[] (x) -- (y);
 \draw[] (y) -- (c1);
 \draw[] (y) -- (ci);
 \draw[] (y) -- (cd);
 \draw[] (x) -- (ck);

 \end{scope}
 \end{scope}
 \label{fig:sad_sad_none_one}

 \draw[thin, gray!50] (-1.1,-6.2) -- (16.5,-6.2);
 \draw[thin, gray!50] (-1.5,-11.4) -- (17,-11.4);
 
 \draw[thin, gray!50] (5.7,-6) -- (5.7,-1.2);
 \draw[thin, gray!50] (10.85,-6) -- (10.85,-1.2);
 
 \draw[thin, gray!50] (3.8,-6.4) -- (3.8,-11.2);
 \draw[thin, gray!50] (8.1,-6.4) -- (8.1,-11.2);
 
 \draw[thin, gray!50] (9.5,-11.6) -- (9.5,-16.2);
 
 \end{tikzpicture}
 }
 \caption{Cases for shifting a saddle vertex past a regular~(a-c), maximum~(d-e) or saddle vertex~(f-h), following Observation~\ref{obs:no_edge}-\ref{obs:children_y}. Change types are annotated below markers.}
 \label{fig:cases_sad}

\end{figure}
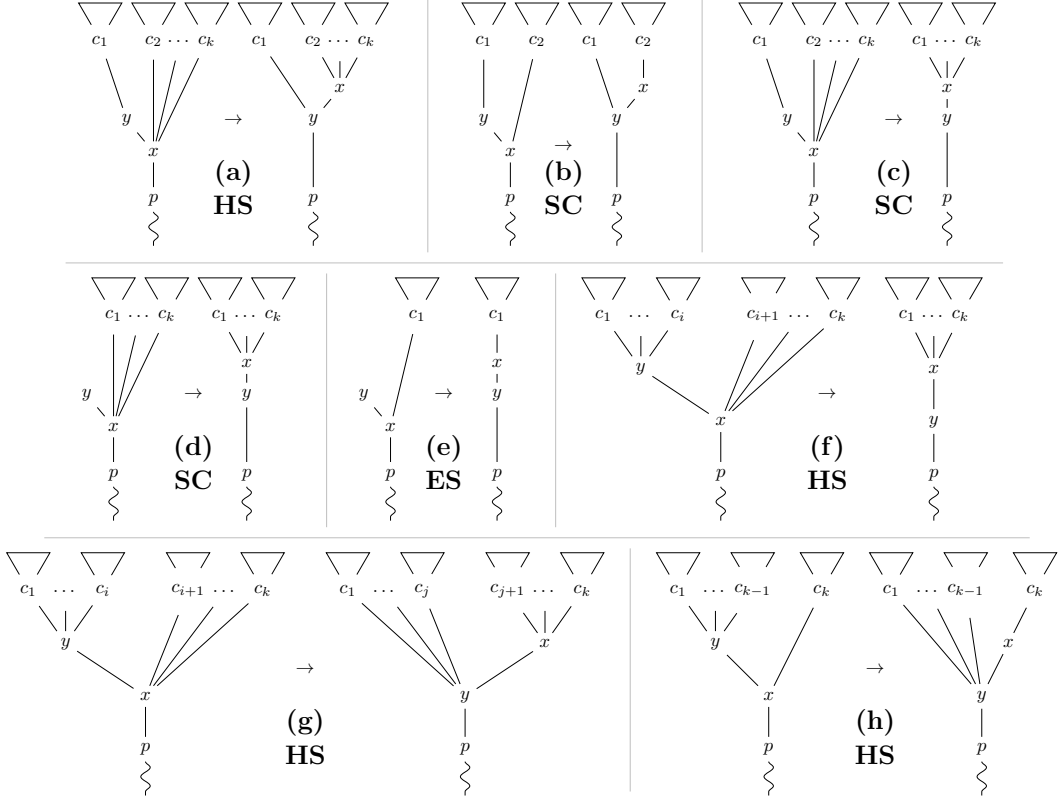

\subsection*{Structural Changes in the Unaugmented Merge Tree}

% The changes discussed above induce similar changes in the unaugmented merge tree.
The changes in $\mathcal{T}_f$ induce similar changes in the unaugmented merge tree $T_f$.
In Appendix~\ref{sec:case_distinction}, we provide a complete case distinction considering all critical types of the involved vertices and the number of redirected edges (following Observations~\ref{obs:no_edge}-\ref{obs:children_y}).
The important cases are shown in Figures~\ref{fig:cases_max_reg} (perturbation of maximum or regular vertex) and~\ref{fig:cases_sad} (perturbation of a saddle).
We observe that all types of merge tree changes only appear in very specific forms for minimal perturbations, fitting exactly the intuitions of their definition:
\begin{itemize}
    \item Vertical swaps only come in the form of two maxima swapping position, whose branches are in a direct parent-child relationship (Figure~\ref{fig:cases_max_reg}a).
    \item Edge splits only occur in pure form, i.e., a single edge is split by a new saddle node, which also has exactly one new maximum node as its second child (Figure~\ref{fig:cases_max_reg}b). The inverse is, of course, also possible (Figure~\ref{fig:cases_sad}e).
    \item Horizontal swaps only appear in two forms (with their corresponding inverse). Either a saddle is split into two connected saddles with the original children being partitioned between the two new nodes (Figures~\ref{fig:cases_max_reg}h, \ref{fig:cases_sad}a, \ref{fig:cases_sad}f, \ref{fig:cases_sad}h). Or two saddles are swapped, potentially with some children moving from one to the other (Figure~\ref{fig:cases_sad}g).
    \item Simple changes are either no changes at all (Figures~\ref{fig:cases_max_reg}c, \ref{fig:cases_max_reg}g) a single node moving in scalar value (Figures~\ref{fig:cases_max_reg}d, \ref{fig:cases_max_reg}e, \ref{fig:cases_sad}b, \ref{fig:cases_sad}c), a new maximum node being appended to an existing saddle node (Figure~\ref{fig:cases_max_reg}f), or the inverse of the latter (Figure~\ref{fig:cases_sad}d).
\end{itemize}

\noindent
We summarize the results in the following lemma.
\begin{lemma}
\label{lemma:changes}
    Let $\mathbb{X},f$ and $\mathbb{X},f'$ be two piecewise linear scalar fields that differ by a minimal vertex perturbation of vertex $x$ passing $y$, such that $f'(x) > f(x)$.
    \begin{itemize}
    \item If $T_f$ and $T_{f'}$ differ by a \emph{vertical swap}, then the two merge trees themselves remain unchanged, we have $T_f = T_{f'}$ and exactly one label differs. 
    The swapped nodes $x,y$ are maximum nodes in both trees with $\mathcal{B}_f(y)$ a parent of $\mathcal{B}_f(x)$ and $\mathcal{B}_{f'}(x)$ a parent of $\mathcal{B}_{f'}(y)$.

    \item If $T_f$ and $T_{f'}$ differ by an \emph{edge split} with $\bdt(T_f) \subseteq \bdt(T_{f'})$, then we have $V(T_{f'}) = V(T_f) \cupdot \{x,y\}$,
    $E(T_{f'}) = (E(T_f) \setminus \{(c,p)\}) \cup \{(x,y),(c,y),(y,p)\}$ for some $(c,p) \in E(T_f)$ and $V(\bdt(T_{f'})) = V(\bdt(T_f)) \cup \{yx\}$.

    \item If $T_f$ and $T_{f'}$ differ by a \emph{horizontal swap}, then we either have $(y,x) \in E(T_f)$ and $(x,y) \in E(T_{f'})$ with $x,y$ both saddles in both $T_f,T_{f'}$; or we have (ignoring renaming and the inverse change) $x \notin V(T_f)$ and $(x,y) \in E(T_{f'})$ with $x,y$ both saddles in $T_{f'}$.

    \item If $T_f$ and $T_{f'}$ differ by a \emph{simple change}, we either have $T_f = T_{f'}$ and $\bdt(T_f) = \bdt(T_{f'})$ (up to renaming); or $\big||V(T_f)| - |V(T_{f'})|\big| = 1$, $\big||E(T_f)| - |E(T_{f'})|\big| = 1$, $\big||V(\bdt(T_f))| - |V(\bdt(T_{f'}))|\big| = 1$ and $\big||E(\bdt(T_f))| - |E(\bdt(T_{f'}))|\big| = 1$.
    \end{itemize}
\end{lemma}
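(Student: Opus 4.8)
The plan is to obtain the lemma as the organized read-off of the exhaustive case distinction depicted in Figures~\ref{fig:cases_max_reg} and~\ref{fig:cases_sad} and justified by Observations~\ref{obs:no_edge}--\ref{obs:children_y}. First I would dispose of the non-adjacent case: by Observation~\ref{obs:no_edge}, if $(y,x)\notin E(\mathcal{T}_f)$ then $E(\mathcal{T}_f)=E(\mathcal{T}_{f'})$ and only the scalar label of $x$ changes, which is a simple change of the first kind. So assume $(y,x)\in E(\mathcal{T}_f)$. Observations~\ref{obs:flipped_edge}--\ref{obs:children_y} then show that, once the flipped edge $(x,y)$ and the inherited parent $p$ are fixed, the only remaining degree of freedom is the partition of the children of $y$ into those staying with $y$ and those redirected to $x$. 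I would therefore enumerate by the critical types of $x$ and $y$ in $\mathcal{T}_f$ (maximum, regular, or saddle, i.e.\ $\deg_{\mathcal{T}_f}=0,1,{>}1$) together with the number of redirected children, which yields exactly the finitely many configurations in the two figures.

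The second step is to pass from $\mathcal{T}_f$ to the unaugmented tree $T_f$ and its ordered branch decomposition. Here I would prune regular vertices as in Section~\ref{sec:preliminaries} and, for each configuration, track how the critical type of $x$ and of $y$ changes between $f$ and $f'$: a saddle survives in $T$, a maximum survives as a leaf, and a degree-one vertex disappears. For each case I would then evaluate the defining subgraph relations of Section~\ref{sec:transformations} ($T_f\subseteq T_{f'}$ versus $\obdt(T_f)\subseteq\obdt(T_{f'})$, and the reverse inclusions) to decide which of the four types the change belongs to, exactly as annotated in the figures. Since the classification is complete and its four cases are mutually exclusive by these relations, grouping the configurations by type is well defined, and reading off the surviving vertex and edge sets then yields the four bullets.

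For the individual bullets I expect the structural equalities to fall out mechanically once the cases are sorted, with two parts being the main obstacle. The first is the \emph{vertical swap}: I must argue that when the swapped maxima $x,y$ have branches in a direct parent–child relation, $T_f$ is not merely in a subgraph relation but literally unchanged, $T_f=T_{f'}$, with exactly the single label of $x$ differing; this needs a check that no critical vertex is created or destroyed and that the sole effect is the elder rule reassigning the main branch, so that $\mathcal{B}_f(y)$ being a parent of $\mathcal{B}_f(x)$ flips to $\mathcal{B}_{f'}(x)$ being a parent of $\mathcal{B}_{f'}(y)$. The second is the \emph{horizontal swap}: I must verify that only the two listed forms arise — two adjacent saddles exchanging order (possibly redistributing children), giving $(y,x)\in E(T_f)$, $(x,y)\in E(T_{f'})$ with both saddles in both trees; or a saddle splitting off a new saddle, giving $x\notin V(T_f)$ and $(x,y)\in E(T_{f'})$ with both saddles in $T_{f'}$ — and that no configuration is nontrivial in both $T$ and its BDT without matching one of these. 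Once those two types are pinned down, the edge-split bullet (a new saddle $y$ splitting an edge $(c,p)$ and carrying a single new maximum $x$, so $V(T_{f'})=V(T_f)\cupdot\{x,y\}$ and the branch $yx$ is added to the BDT) and the simple-change bullet (either no structural change, or appending/removing one maximum leaf, changing each of the four counts by exactly one) follow directly from the corresponding figures.
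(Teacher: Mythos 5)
Your overall strategy is the paper's own: an exhaustive case distinction over the critical types of $x$ and $y$ and the number of redirected children, justified by Observations~\ref{obs:no_edge}--\ref{obs:children_y}, followed by pruning to the unaugmented tree and reading off the subgraph relations that define the four types. However, your opening move contains a genuine gap that breaks the argument. You dispose of the non-adjacent case by claiming that $(y,x)\notin E(\mathcal{T}_f)$ implies a simple change in which ``only the scalar label of $x$ changes.'' Observation~\ref{obs:no_edge} only gives $E(\mathcal{T}_f)=E(\mathcal{T}_{f'})$, i.e.\ the \emph{augmented tree structure} is unchanged; it says nothing about the branch decomposition, which is computed by the elder rule from the scalar values of the maxima. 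If $x$ is a maximum passing a non-neighboring maximum $y$ whose branch $\mathcal{B}_f(y)$ is the parent of $\mathcal{B}_f(x)$, the elder rule reassigns the parent--child relation, and this is precisely the \emph{vertical swap} (Figure~\ref{fig:cases_max_reg}a) --- one of the four bullets of the lemma. Worse, this is the \emph{only} way a vertical swap can arise: two maxima can never be adjacent in $\mathcal{T}_f$ (a maximum has no children, and the passed vertex $y$ has higher scalar value, so the edge $(y,x)$ cannot exist), so the vertical swap lives entirely inside the case you discarded. Your later paragraph, where you correctly describe what must be checked for the vertical swap, is therefore unreachable within your own decomposition: the enumeration you set up for adjacent pairs will never produce it, and your first step has already (wrongly) classified it as a simple change.

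The repair is to split the non-adjacent case by the critical type of $x$, as the paper does: if $x$ is regular the two fields yield identical labeled trees; if $x$ is the global minimum the case cannot occur; if $x$ is a saddle the BDT is unchanged because branch priorities depend only on the (unchanged) maxima values; and if $x$ is a maximum one must further distinguish whether $y$ is a maximum whose branch is an ancestor of $\mathcal{B}_f(x)$ --- in which case one shows $\mathcal{B}_f(y)$ is necessarily the \emph{direct parent} of $\mathcal{B}_f(x)$ (any intermediate branch would have a maximum with value strictly between $f(x)$ and $f(y)$, contradicting that $x$ and $y$ are adjacent in the vertex ordering) and the two branches swap, giving the vertical swap --- or not, in which case neither the BDT edges nor the sibling order change and one has a simple change. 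With that correction, the remainder of your plan (the adjacent-case enumeration, the passage to $T_f$ and $\obdt(T_f)$, and the read-off of the edge-split, horizontal-swap and simple-change bullets) coincides with the paper's proof in Appendix~\ref{sec:case_distinction}.
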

\begin{proof}
    See Appendix~\ref{sec:case_distinction}.
\end{proof}

\section{Edit Distance Stability}
\label{sec:perturbation_stability}

Using the observations in Lemma~\ref{lemma:changes}, we now go through the different classes of perturbations and show which edit distances exhibit stable behavior and which do not.
Afterwards, we summarize the results in \mbox{Theorems \ref{theorem:stability_deform}-\ref{theorem:stability_genclas}.}
Due to space limitations, we skip the discussion for the Sridharamurthy distance and branch mapping distance, as they form outliers in terms of definition.
The corresponding arguments can be found in Appendix~\ref{sec:perturbation_stability_app}.
The Sridharamurthy distance behaves very similar to the BDT-based distances, the branch mapping distance very similar to the path mapping distance.

\subsection*{Simple Changes}

Let $f,g$ be two scalar fields with a minimal vertex perturbation of vertex $x$ by $\epsilon$ between them, such that $T_f \subseteq T_g$ and $\bdt(T_f) \subseteq \bdt(T_g)$.
It is easy to see that all distances can be bound by $\deg(T_f)\cdot\epsilon$. 
If we move a saddle, possibly multiple incident edges or branches need to be relabeled ($\deg(T_f)$ many), otherwise only a single one or none at all.
See Appendix~\ref{sec:perturbation_stability_app} for the full arguments.

\subsection*{Edge splits}
Let $f,g$ be two scalar fields with a minimal vertex perturbation of $x$ by $\epsilon$ between them, such that $\obdt(T_f) \subset \obdt(T_g)$ (inclusion strict by Lemma~\ref{lemma:changes}) but neither $T_f \subseteq T_g$ nor $T_f \subseteq T_g$ hold.
% We assume a single vertex is perturbed by $\epsilon$, i.e.\ there is a vertex $x$ such that $|f(x) - g(x)| = \epsilon$ and $f(v) = g(v)$ for all $v \neq x$.

\medskip
\noindent
\textbf{BDT-based distances.}
% By Lemma~\ref{lemma:changes} we append a single leaf branch of length at most~$\epsilon$. 
% Both BDT-based distances can do that by a single insertion, where the cost is bound by its distance (in birth-death space) to its corresponding diagonal point, i.e.\ the length of the new branch~\cite{DBLP:journals/tvcg/PontVDT22}.
By Lemma~\ref{lemma:changes} we append a single leaf branch of length at most~$\epsilon$, which also bounds its distance (in birth-death space) to its corresponding diagonal point~\cite{DBLP:journals/tvcg/PontVDT22}. 
Both BDT-based distances can perform a single insertion with costs of exactly this distance.

\medskip
\noindent
\textbf{Classic edit distances.}
Classic constrained edit distances cannot handle such changes.
By Lemma~\ref{lemma:changes}, a single edge is split by a new saddle with a new child.
Figure~\ref{fig:edge_split} shows a simple counter example.
The perturbation results in a new edge of length $\epsilon$.
However, any classic edit distance, even the most general one will result in a distance larger than $x$: the left tree has three edges of length $x$ and one of length $2x$ whereas the right one has one of length $x$ and two of length $2x$, which obviously results in mappings where one edge of length $x$ remains unmapped, i.e.\ deleted or inserted.
Thus, with $x$ arbitrary, there is no constant $c$ such that the costs are bound by $c \cdot \epsilon$.

\begin{figure}
    \centering
    \resizebox{0.5\linewidth}{!}{
    \Large
    \bfseries
    \begin{tikzpicture}[xscale=1,yscale=0.8]
    
    \node[draw,circle,fill=gray!100,minimum width=1cm,label={[left=10pt]$0$}] at (0, 0) (root_1) {A};
    \node[draw,circle,fill=gray!100,minimum width=1cm,label={[left=10pt]$x$}] at (0, 3) (s1_1) {B};
    \node[draw,circle,fill=red!80,minimum width=1cm,label={[above]$3x$}] at (-2, 9) (m1_1) {C};
    \node[draw,circle,fill=red!80,minimum width=1cm,label={[above]$3x$}] at (2, 9) (m2_1) {D};
    \node[draw,circle,fill=gray!100,minimum width=1cm,label={[right=10pt]$2x$}] at (1, 6) (s2_1) {E};
    \node[draw,circle,fill=red!80,minimum width=1cm,label={[above]$2x+\epsilon$}] at (0.2, 7.2) (m3_1) {F};
    \draw[gray,very thick] (root_1) -- (s1_1);
    \draw[gray,very thick] (s1_1) -- (m1_1);
    \draw[gray,very thick] (s1_1) -- (s2_1);
    \draw[gray,very thick] (s2_1) -- (m2_1);
    \draw[gray,very thick] (s2_1) -- (m3_1);
    
    \node[draw,circle,fill=gray!100,minimum width=1cm,label={[right=10pt]$0$}] at (0+7, 0) (root_2) {A};
    \node[draw,circle,fill=gray!100,minimum width=1cm,label={[right=10pt]$x$}] at (0+7, 3) (s1_2) {B};
    \node[draw,circle,fill=red!80,minimum width=1cm,label={[above]$3x$}] at (-2+7, 9) (m1_2) {C};
    \node[draw,circle,fill=red!80,minimum width=1cm,label={[above]$3x$}] at (2+7, 9) (m2_2) {D};
    \draw[gray,very thick] (root_2) -- (s1_2);
    \draw[gray,very thick] (s1_2) -- (m1_2);
    \draw[gray,very thick] (s1_2) -- (m2_2);
    
    \draw[ultra thick,-,dotted,green] (root_1) -- (root_2);
    \draw[ultra thick,-,dotted,green] (s1_1) -- (s1_2);
    \draw[ultra thick,-,dotted,green] (m2_1) to[bend left=20] (m1_2);
    \draw[ultra thick,-,dotted,green] (m1_1) to[bend left=35] (m2_2);
    
    \end{tikzpicture}
    }
    \caption{Two merge trees that can be transformed into each other by a minimal vertex perturbation of vertex $F$ (cf.\ Figure~\ref{fig:cases_max_reg}b). The optimal edit mapping (for large $x$) is shown in green.}
    \label{fig:edge_split}
\end{figure}
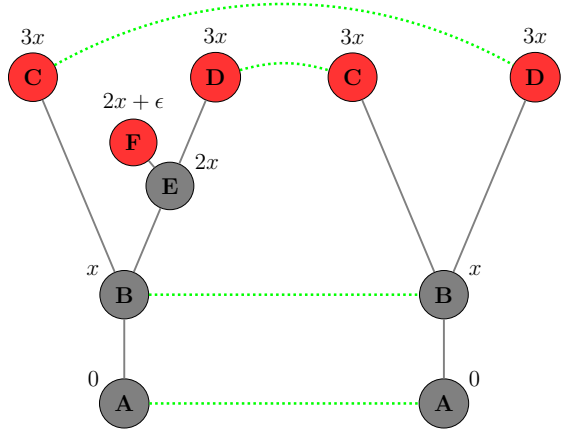

\medskip
\noindent
\textbf{Deformation-based distances.}
By Lemma~\ref{lemma:changes}, only one type of change can happen: a single edge is split by a new saddle $y$ with a new child $x$.
This is exactly covered by the deformation-based insertion of $(x,y)$.
Since for these distances, it is a simple leaf insertion, it is also covered by the one-degree variant, i.e.\ the path mapping distance.
The cost of the operation is bounded by the length of the new edge, i.e.~$\epsilon$.

\subsection*{Vertical branch swaps}
Let $f,g$ be two scalar fields with a minimal vertex perturbation of $x$ by $\epsilon$ between them, such that $T_f \subseteq T_g$ but neither $\obdt(T_f) \subseteq \obdt(T_g)$ nor $\obdt(T_f) \subseteq \obdt(T_g)$ hold.
% We assume a single vertex is perturbed by $\epsilon$, i.e.\ there is a vertex $x$ such that $|f(x) - g(x)| = \epsilon$ and $f(v) = g(v)$ for all $v \neq x$.

\medskip
\noindent
\textbf{Classic and deformation-based edit distances.}
% By Lemma~\ref{lemma:changes}, we know that $T_f = T_g$ and only a single maximum changes its scalar value. 
% Thus, only a single edge changes its length. 
By Lemma~\ref{lemma:changes}, we know that $T_f = T_g$ and only a single edge changes its length. 
All classic and deformation-based edit distances can do this by a single relabel and are bounded by~$\epsilon$.

\medskip
\noindent
\textbf{BDT-based distances.}
For the BDT-based distances, we can construct a simple counter example such that there is no constant $c$ for which the costs are bound by $c \cdot \epsilon$, see several works by Wetzels et al.~\cite{wetzels2022branch,wetzels2022path}.
We provide the example in Appendix~\ref{sec:perturbation_stability_app}.

\subsection*{Horizontal Swaps}
Let $f,g$ be two scalar fields with a minimal vertex perturbation of $x$ by $\epsilon$ between them, such that neither $\obdt(T_f) \subseteq \obdt(T_g)$ nor $\obdt(T_g) \subseteq \obdt(T_f)$ and also neither $T_f \subseteq T_g$ nor $T_f \subseteq T_g$ hold.

\medskip
\noindent
\textbf{BDT-based distances.}
All BDT-based distances fail to capture horizontal swaps properly.
This can be shown through a simple counter example.
By Lemma~\ref{lemma:changes}, two saddles can swap between $T_f$ and $T_g$ as shown in Figure~\ref{fig:horizontal_instability}, together with the corresponding BDTs.
To transform one BDT into the other, we have to change the parent-child relationship between the branches $(d,b)$ and $(f,e)$.
No edit distance can achieve this without insertion or deletion of one of the branches, inducing costs of at least $x-\epsilon$  (length of the branch).
Thus, there is no constant $c$ such that the costs are bound by $c \cdot \epsilon$.

\medskip
\noindent
\textbf{Constrained edit distances.}
On the same instance as above, $\delta_L$ and $\delta_P$ fail for similar though different reasons.
Mapping the three leaf edges identically onto each other is necessary (any relabeling of the leafs induces costs of at least $x-\epsilon$), but needs the inner edge between $b$ and $e$ to be deleted and re-inserted.
This is not possible using constrained edit distances.
Therefore, one of the leaf edges has to be deleted and re-inserted, inducing costs of at least $x-\epsilon$.
Thus, there is no constant $c$ such that the costs are bound by $c \cdot \epsilon$.

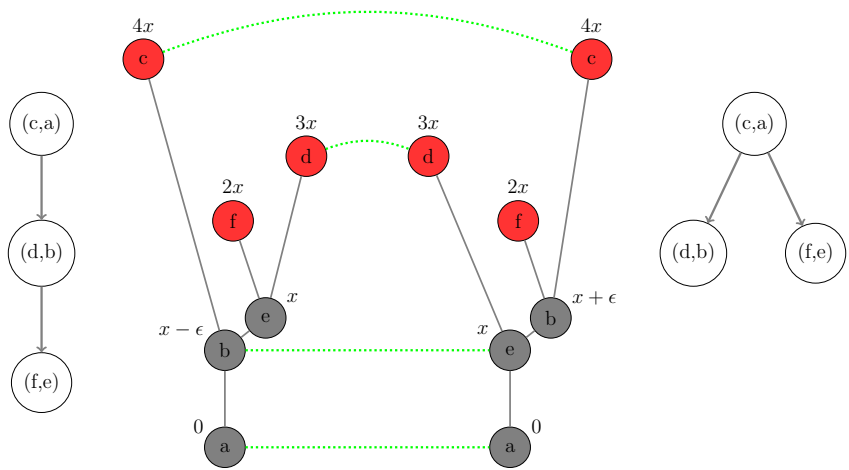
\begin{figure}
    \centering
    \resizebox{0.75\linewidth}{!}{
    \Large
    \begin{tikzpicture}[xscale=1,yscale=0.8]
    
    \begin{scope}[shift={(-3.5,0)}]
        \node[draw,circle,fill=gray!100,minimum width=1cm,label={[left=10pt]$0$}] at (0, 0) (root_1) {a};
        \node[draw,circle,fill=gray!100,minimum width=1cm,label={[left=10pt]$x-\epsilon$}] at (0, 3) (s1_1) {b};
        \node[draw,circle,fill=red!80,minimum width=1cm,label={[above]$4x$}] at (-2, 12) (m1_1) {c};
        \node[draw,circle,fill=red!80,minimum width=1cm,label={[above]$3x$}] at (2, 9) (m2_1) {d};
        \node[draw,circle,fill=gray!100,minimum width=1cm,label={[right=10pt]$x$}] at (1, 4) (s2_1) {e};
        \node[draw,circle,fill=red!80,minimum width=1cm,label={[above]$2x$}] at (0.2, 7) (m3_1) {f};
        \draw[gray,very thick] (root_1) -- (s1_1);
        \draw[gray,very thick] (s1_1) -- (m1_1);
        \draw[gray,very thick] (s1_1) -- (s2_1);
        \draw[gray,very thick] (s2_1) -- (m2_1);
        \draw[gray,very thick] (s2_1) -- (m3_1);
    \end{scope}
    
    \begin{scope}[shift={(3.5,0)}]
        \node[draw,circle,fill=gray!100,minimum width=1cm,label={[right=10pt]$0$}] at (0, 0) (root_2) {a};
        \node[draw,circle,fill=gray!100,minimum width=1cm,label={[left=10pt]$x$}] at (0, 3) (s1_2) {e};
        \node[draw,circle,fill=red!80,minimum width=1cm,label={[above]$3x$}] at (-2, 9) (m1_2) {d};
        \node[draw,circle,fill=red!80,minimum width=1cm,label={[above]$4x$}] at (2, 12) (m2_2) {c};
        \node[draw,circle,fill=gray!100,minimum width=1cm,label={[right=10pt]$x+\epsilon$}] at (1, 4) (s2_2) {b};
        \node[draw,circle,fill=red!80,minimum width=1cm,label={[above]$2x$}] at (0.2, 7) (m3_2) {f};
        \draw[gray,very thick] (root_2) -- (s1_2);
        \draw[gray,very thick] (s1_2) -- (s2_2);
        \draw[gray,very thick] (s1_2) -- (m1_2);
        \draw[gray,very thick] (s2_2) -- (m2_2);
        \draw[gray,very thick] (s2_2) -- (m3_2);
    \end{scope}
    
    \draw[ultra thick,-,dotted,green] (root_1) -- (root_2);
    \draw[ultra thick,-,dotted,green] (s1_1) -- (s1_2);
    \draw[ultra thick,-,dotted,green] (m2_1) to[bend left=25] (m1_2);
    \draw[ultra thick,-,dotted,green] (m1_1) to[bend left=25] (m2_2);

    \begin{scope}[shift={(-8,-2)}]
        \node[draw,circle] at (0, 12) (n1) {(c,a)};
        \node[draw,circle] at (0, 8) (n2) {(d,b)};
        \node[draw,circle] at (0, 4) (n3) {(f,e)};
        \draw[->,gray,ultra thick] (n1) -- (n2);
        \draw[->,gray,ultra thick] (n2) -- (n3);
    \end{scope}

    \begin{scope}[shift={(8,-2)}]
        \node[draw,circle] at (1.5, 12) (n1) {(c,a)};
        \node[draw,circle] at (0, 8) (n2) {(d,b)};
        \node[draw,circle] at (3, 8) (n3) {(f,e)};
        \draw[->,gray,ultra thick] (n1) -- (n2);
        \draw[->,gray,ultra thick] (n1) -- (n3);
    \end{scope}
    
    \end{tikzpicture}
    }
    \caption{Two merge trees (inner trees) together with their BDTs (outer trees) and an optimal mapping (green). They can be transformed into each other by a minimal vertex perturbation of $b$.}
    \label{fig:horizontal_instability}
\end{figure}

\medskip
\noindent
\textbf{Unconstrained edit distances.}
Lemma~\ref{lemma:changes} gives us two types of horizontal swaps possible through a minimal vertex perturbation.
Either we swap two saddle vertices (Figure~\ref{fig:horizontal_swap_types}a) or we split a saddle to create a new one, partitioning its children (Figure~\ref{fig:horizontal_swap_types}b).
We skip the inverse transformation due to symmetry.
% In both cases, we move the vertex $x$ past the vertex $y$.

The first case, the actual swap, can be expressed by a simple sequence of edit operations, for both the unconstrained deformation-based and classic edit distance. 
% and the unconstrained/general classic edit distance.
First, we delete the edge $(y,x)$, then we insert a new edge $(x,y)$. Lastly, we  have to fix the lengths of the surrounding edges (those connected to $x$ after the swap) by stretching or shortening operations.
None of the operations changes an edge length by more than $\epsilon$, which again gives us a bound of~$\deg(T_f)\cdot\epsilon$.
Since no edges are split, these are all classic operations, too.
The second case can be expressed in the same way, simply dropping the first operation.

\begin{figure}

\centering
\resizebox{\linewidth}{!}{
\begin{tikzpicture}

\begin{scope}[shift={(0,0)}]
 \node[draw=none,fill=none,circle] at (-6, -1) (dummy1) {};
 \node[draw=none,fill=none,circle] at (6, -1) (dummy2) {};
 \node[draw=none,fill=none,circle] at (-6, 4) (dummy3) {};
 \node[draw=none,fill=none,circle] at (6, 4) (dummy4) {};
 % \draw[] (dummy1) -- (dummy2);
 % \draw[] (dummy2) -- (dummy4);
 % \draw[] (dummy4) -- (dummy3);
 % \draw[] (dummy3) -- (dummy1);
 
 \begin{scope}[shift={(-3,0)}]
 
 \node[] at (0, 0) (p) {$p$};
 \node[] at (0, -1) (p_) {};
 \node[] at (0, 1) (x) {$x$};
 \node[] at (-1.5, 2) (y) {$y$};
 
 \node[draw,isosceles triangle,isosceles triangle apex angle=60,rotate=270,anchor=apex,minimum size=20pt]
         at (-2.2, 3) (c1t) {};
 \node[circle, fill=white]
         at (-2.2, 3) (c1) {$c_1$};
 \node[draw,isosceles triangle,isosceles triangle apex angle=60,rotate=270,anchor=apex,minimum size=20pt]
         at (-0.8, 3) (cit) {};
 \node[circle, fill=white]
         at (-0.8, 3) (ci) {$c_{i}$};
 \node[circle]
         at (-1.5, 3) (cd) {$\dots$};
         
 \node[draw,isosceles triangle,isosceles triangle apex angle=60,rotate=270,anchor=apex,minimum size=20pt]
         at (0.8, 3) (ciit) {};
 \node[circle, fill=white]
         at (0.8, 3) (cii) {$c_{i+1}$};
 \node[draw,isosceles triangle,isosceles triangle apex angle=60,rotate=270,anchor=apex,minimum size=20pt]
         at (2.2, 3) (c2t) {};
 \node[circle, fill=white]
         at (2.2, 3) (c2) {$c_{k}$};
 \node[circle]
         at (1.5, 3) (cdd) {$\dots$};
 
 \draw[decorate] (p) -- (p_);
 \draw[] (p) -- (x);
 \draw[] (x) -- (y);
 \draw[] (y) -- (c1);
 \draw[] (y) -- (ci);
 \draw[] (y) -- (cd);
 \draw[] (x) -- (cii);
 \draw[] (x) -- (c2);
 \draw[] (x) -- (cdd);

 \end{scope}

 \node[] at (0,1.5) (arr) {$\rightarrow$};
 \node[] at (0,0.5) (l) {\Large\textbf{(a)}};

 \begin{scope}[shift={(3,0)}]
 
 \node[] at (0, 0) (p) {$p$};
 \node[] at (0, -1) (p_) {};
 \node[] at (1.5, 2) (x) {$x$};
 \node[] at (0, 1) (y) {$y$};
 
 \node[draw,isosceles triangle,isosceles triangle apex angle=60,rotate=270,anchor=apex,minimum size=20pt]
         at (-2.2, 3) (c1t) {};
 \node[circle, fill=white]
         at (-2.2, 3) (c1) {$c_1$};
 \node[draw,isosceles triangle,isosceles triangle apex angle=60,rotate=270,anchor=apex,minimum size=20pt]
         at (-0.8, 3) (cit) {};
 \node[circle, fill=white]
         at (-0.8, 3) (ci) {$c_{j}$};
 \node[circle]
         at (-1.5, 3) (cd) {$\dots$};
         
 \node[draw,isosceles triangle,isosceles triangle apex angle=60,rotate=270,anchor=apex,minimum size=20pt]
         at (0.8, 3) (ciit) {};
 \node[circle, fill=white]
         at (0.8, 3) (cii) {$c_{j+1}$};
 \node[draw,isosceles triangle,isosceles triangle apex angle=60,rotate=270,anchor=apex,minimum size=20pt]
         at (2.2, 3) (c2t) {};
 \node[circle, fill=white]
         at (2.2, 3) (c2) {$c_{k}$};
 \node[circle]
         at (1.5, 3) (cdd) {$\dots$};
 
 \draw[decorate] (p) -- (p_);
 \draw[] (p) -- (y);
 \draw[] (x) -- (y);
 \draw[] (y) -- (c1);
 \draw[] (y) -- (ci);
 \draw[] (y) -- (cd);
 \draw[] (x) -- (cii);
 \draw[] (x) -- (c2);
 \draw[] (x) -- (cdd);

 \end{scope}
 \end{scope}

 \begin{scope}[shift={(10,0)}]
 \node[draw=none,fill=none,circle] at (-2.5, -1) (dummy1) {};
 \node[draw=none,fill=none,circle] at (5.5, -1) (dummy2) {};
 \node[draw=none,fill=none,circle] at (-2.5, 4) (dummy3) {};
 \node[draw=none,fill=none,circle] at (5.5, 4) (dummy4) {};
 % \draw[] (dummy1) -- (dummy2);
 % \draw[] (dummy2) -- (dummy4);
 % \draw[] (dummy4) -- (dummy3);
 % \draw[] (dummy3) -- (dummy1);

 \begin{scope}[shift={(-1,0)}]
 
 \node[] at (0, 0) (p) {$p$};
 \node[] at (0, -1) (p_) {};
 % \node[] at (0, 0.9) (x) {$x$};
 \node[] at (0, 1.5) (y) {$y$};
 \node[draw,isosceles triangle,isosceles triangle apex angle=60,rotate=270,anchor=apex,minimum size=20pt]
         at (-0.5, 3) (c1t) {};
 \node[circle, fill=white]
         at (-0.5, 3) (c1) {$c_1$};
 \node[draw,isosceles triangle,isosceles triangle apex angle=60,rotate=270,anchor=apex,minimum size=20pt]
         at (0.5, 3) (c2t) {};
 \node[circle, fill=white]
         at (0.5, 3) (c2) {$c_k$};
 \node[circle]
         at (0, 3) (cd) {$\dots$};
 
 \draw[decorate] (p) -- (p_);
 \draw[] (p) -- (y);
 % \draw[] (x) -- (y);
 \draw[] (y) -- (c1);
 \draw[] (y) -- (c2);
 \draw[] (y) -- (cd);

 \end{scope}

 \node[] at (0,1.2) (arr) {$\leftrightarrow$};
 \node[] at (0,0.5) (l) {\Large\textbf{(b)}};

 \begin{scope}[shift={(1.5,0)}]
 
 \node[] at (0, 0) (p) {$p$};
 \node[] at (0, -1) (p_) {};
 \node[] at (2.5, 2.1) (x) {$x$};
 \node[] at (0, 1.5) (y) {$y$};
 
 \node[draw,isosceles triangle,isosceles triangle apex angle=60,rotate=270,anchor=apex,minimum size=20pt]
         at (-0.7, 3) (c1t) {};
 \node[circle, fill=white]
         at (-0.7, 3) (c1) {$c_1$};
 \node[draw,isosceles triangle,isosceles triangle apex angle=60,rotate=270,anchor=apex,minimum size=20pt]
         at (0.7, 3) (cit) {};
 \node[circle, fill=white]
         at (0.7, 3) (ci) {$c_{i}$};
 \node[circle]
         at (0, 3) (cd) {$\dots$};
         
 \node[draw,isosceles triangle,isosceles triangle apex angle=60,rotate=270,anchor=apex,minimum size=20pt]
         at (1.8, 3) (ciit) {};
 \node[circle, fill=white]
         at (1.8, 3) (cii) {$c_{i+1}$};
 \node[draw,isosceles triangle,isosceles triangle apex angle=60,rotate=270,anchor=apex,minimum size=20pt]
         at (3.2, 3) (c2t) {};
 \node[circle, fill=white]
         at (3.2, 3) (c2) {$c_{k}$};
 \node[circle]
         at (2.5, 3) (cdd) {$\dots$};
 
 \draw[decorate] (p) -- (p_);
 \draw[] (p) -- (y);
 \draw[] (x) -- (y);
 \draw[] (y) -- (c1);
 \draw[] (y) -- (ci);
 \draw[] (y) -- (cd);
 \draw[] (x) -- (cii);
 \draw[] (x) -- (c2);
 \draw[] (x) -- (cdd);

 \end{scope}
 \end{scope}
    
\end{tikzpicture}
}
\caption{The two types of horizontal swaps. The second type (b) can happen in both directions.}
\label{fig:horizontal_swap_types}

\end{figure}
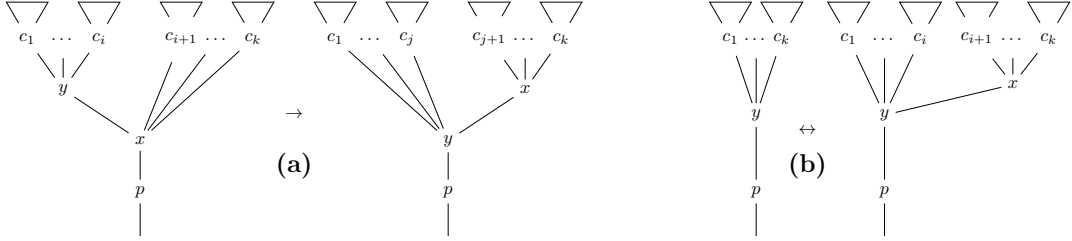

\subsection*{Stability Results}

Summing up, we can conclude the following results.
The unconstrained deformation-based edit distance is stable against any kind of minimal vertex perturbation.
\begin{theorem}
\label{theorem:stability_deform}
    Let $f,f'$ be two scalar fields that differ by a minimal vertex perturbation of extent $\epsilon$.
    Then the unconstrained deformation-based edit distance is bounded: $\delta_E \leq \deg(T_f)\cdot\epsilon$.
    In particular, there is an unconstrained deformation-based edit sequence $s = s_1 \dots s_k$ with $T_f \xrightarrow{\scriptscriptstyle s} T_{f'}$ such that $k \leq \deg(T_f)$ and $c(s_i) \leq \epsilon$ for all $1 \leq i \leq k$.
\end{theorem}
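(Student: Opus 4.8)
The plan is to derive the theorem as a direct consequence of the four-way classification established in Lemma~\ref{lemma:changes}. Since that lemma guarantees that every minimal vertex perturbation realizes a simple change, an edge split, a vertical swap, or a horizontal swap, it suffices to exhibit for each of these four cases an explicit unconstrained deformation-based edit sequence $s$ with $T_f \xrightarrow{\scriptscriptstyle s} T_{f'}$ in which every operation costs at most $\epsilon$ and the number of operations is at most $\deg(T_f)$; taking the worst case over the four possibilities then yields both $\delta_E \leq \deg(T_f)\cdot\epsilon$ and the sharper per-operation statement. The structural fact I would invoke repeatedly is that exactly one vertex changes its scalar value and only by $\epsilon$, so every edge whose label changes is incident to that vertex, each such label changes by at most $\epsilon$, and the number of affected edges is bounded by its degree.

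For the three structurally benign cases I would reuse the constructions assembled in the preceding subsections. In a \emph{vertical swap}, Lemma~\ref{lemma:changes} gives $T_f = T_{f'}$ with exactly one edge label differing, so a single relabel of cost at most $\epsilon$ suffices ($k=1$). In an \emph{edge split}, the lemma states that $T_{f'}$ arises from $T_f$ by splitting one edge $(c,p)$ with a new saddle $y$ carrying a single new maximum $x$; this is precisely one deformation-based edge insertion, whose cost equals the length of the inserted leaf edge and is therefore at most $\epsilon$ ($k=1$). In a \emph{simple change}, the lemma yields either $T_f = T_{f'}$ (the empty sequence) or a size difference of one, realized by appending or removing a single leaf together with relabels of the edges incident to the moved vertex, each changing a length by at most $\epsilon$ and numbering at most $\deg(T_f)$.

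The hard part will be the \emph{horizontal swap}, the only case requiring the full power of unconstrained insertion and deletion. Here Lemma~\ref{lemma:changes} supplies two subcases (an actual saddle swap and a saddle split), which I would treat uniformly by (i)~deleting the edge $(y,x)$ between the two involved saddles, (ii)~inserting the reversed edge $(x,y)$, and (iii)~relabeling the edges incident to $x$ in $T_{f'}$ to correct their lengths, dropping step~(i) in the split subcase. The crux is verifying the two bounds. For the per-operation cost, I would argue that since $x$ and $y$ are adjacent in the vertex order and only $x$ moves by $\epsilon$, the value $f(y)$ lies between $f(x)$ and $f'(x)$; hence both the deleted edge $(y,x)$ and the inserted edge $(x,y)$ have length at most $\epsilon$, and every corrective relabel touches an edge incident to the moved vertex and so changes its length by at most $\epsilon$. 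For the operation count, the single deletion, the single insertion, and the corrective relabels all correspond to edges incident to $x$, bounding $k$ by $\deg(T_f)$.

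Combining the four cases gives, in each instance, a valid sequence $s = s_1\dots s_k$ with $k \leq \deg(T_f)$ and $c(s_i)\leq\epsilon$, whence $\delta_E \leq c(s) \leq \deg(T_f)\cdot\epsilon$, completing the proof. I expect the only genuine subtlety to be the careful length accounting in the horizontal-swap subcase.
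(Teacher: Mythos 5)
Your proposal is correct and follows essentially the same route as the paper: the identical case analysis over the four types from Lemma~\ref{lemma:changes}, with a single relabel for vertical swaps, a single deformation-based leaf insertion for edge splits, incident-edge relabels for simple changes, and the delete-$(y,x)$, insert-$(x,y)$, relabel-surrounding-edges sequence (dropping the deletion in the saddle-split subcase) for horizontal swaps. Your explicit remark that $f(y)$ lies between $f(x)$ and $f'(x)$ even makes the per-operation cost bound in the horizontal case slightly more explicit than the paper's own wording, but the construction is the same.
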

The path and branch mapping distance are stable against any minimal vertex perturbation that is not a horizontal swap.
\begin{theorem}
\label{theorem:stability_bdied}
    Let $f,f'$ be two scalar fields that differ by a minimal vertex perturbation of extent $\epsilon$ and of type simple change, edge split, or vertical swap.
    Then it holds that $\delta_P \leq \deg(T_f)\cdot\epsilon, \delta_B \leq \deg(T_f)\cdot\epsilon$.
    In particular, there is a one-degree deformation-based edit sequence $s = s_1 \dots s_k$ with $T_f \xrightarrow{\scriptscriptstyle s} T_{f'}$ such that $k \leq \deg(T_f)$ and $c(s_i) \leq \epsilon$ for all $1 \leq i \leq k$.
\end{theorem}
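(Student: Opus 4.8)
The plan is to treat this as a corollary of the case analysis already carried out for the three non-horizontal perturbation types, packaging the explicit edit sequences constructed there into a single one-degree deformation-based sequence and bounding its cost term by term. Since $\delta_P$ is defined as the minimum cost over all \emph{valid} (here: one-degree) deformation-based edit sequences realizing $T_f \xrightarrow{\scriptscriptstyle s} T_{f'}$, it suffices to exhibit \emph{one} such sequence $s = s_1\dots s_k$ with $k \leq \deg(T_f)$ and $c(s_i) \leq \epsilon$ for every $i$; the bound $\delta_P \leq \deg(T_f)\cdot\epsilon$ then follows immediately from $\cost(s)=\sum_i \cost(s_i)$. I would organise the argument by the three admissible types supplied by Lemma~\ref{lemma:changes}.

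For a \emph{vertical swap}, Lemma~\ref{lemma:changes} gives $T_f = T_{f'}$ with exactly one edge label differing, and since the perturbation has extent $\epsilon$ that label changes by at most $\epsilon$. A single relabel operation therefore realizes the change; relabels are permitted in every one-degree sequence, so $k=1$ and $c(s_1)\le\epsilon$. For an \emph{edge split}, Lemma~\ref{lemma:changes} pins the change down to $E(T_{f'}) = (E(T_f)\setminus\{(c,p)\})\cup\{(x,y),(c,y),(y,p)\}$ with $x$ a new leaf. This is exactly the deformation-based insertion of the leaf edge $(x,y)$: reading the operation as the inverse of contracting the leaf $(x,y)$ and pruning the resulting degree-one node $y$ (which would merge $(c,y),(y,p)$ back into $(c,p)$), the insertion splits $(c,p)$ and attaches the new maximum, all in one leaf operation of cost equal to the new edge's label, which is at most $\epsilon$. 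The inverse direction $T_{f'}\subset T_f$ is handled symmetrically by a single leaf deletion. For a \emph{simple change}, Lemma~\ref{lemma:changes} leaves only: the identity ($k=0$); a single node shifting in scalar value; or a maximum being appended to / removed from a saddle. A shifting maximum or regular vertex changes one edge label, giving a single relabel of cost $\le\epsilon$, while appending/removing a maximum is a single leaf insertion/deletion of cost $\le\epsilon$. The only case producing more than one operation is a shifting \emph{saddle}, whose up to $\deg(T_f)$ incident edges each change length by at most $\epsilon$; these are handled by at most $\deg(T_f)$ relabels, each of cost $\le\epsilon$.

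In every case the constructed sequence uses only relabels and leaf insertions/deletions, hence is one-degree valid, and satisfies $k\le\deg(T_f)$ with $c(s_i)\le\epsilon$, proving the claimed bound for $\delta_P$. For the branch mapping distance $\delta_B$ I would run the structurally identical argument on the branch decomposition rather than on the merge tree, as deferred to Appendix~\ref{sec:perturbation_stability_app}: a vertical swap changes a single branch label, an edge split inserts a single branch, and a simple change touches at most $\deg(T_f)$ branches, each by at most $\epsilon$, so the same counting and cost bound go through.

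The step I expect to be the main obstacle is verifying that a \emph{single} deformation-based leaf insertion genuinely realizes the full edge-split change $\{(x,y),(c,y),(y,p)\}$, i.e.\ that the prune/un-prune semantics of the operation automatically create the intermediate saddle $y$ and assign it the correct incident labels, so that no auxiliary relabels are needed and the cost really collapses to the single new edge's length. The second delicate point is the branch mapping distance: because $\delta_B$ operates on a different combinatorial object, one must confirm that a vertical swap --- precisely the instability that breaks the BDT-based edit distances --- here still amounts to a single small-cost branch relabel, so that the analogy to $\delta_P$ is faithful.
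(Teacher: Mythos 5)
Your proposal for $\delta_P$ is correct and follows essentially the same route as the paper: the paper's proof is exactly the per-type case analysis from Lemma~\ref{lemma:changes} (vertical swap $=$ one relabel, edge split $=$ one deformation-based leaf insertion whose prune/un-prune semantics create the saddle $y$ and split $(c,p)$ into $(c,y),(y,p)$ with labels summing correctly, simple change $=$ at most $\deg(T_f)$ relabels or one leaf insertion/deletion), and your reading of the insertion semantics resolves the first obstacle you flagged just as the paper takes for granted.

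The one place where your proposal stops short is precisely your second flagged concern, and it is worth naming the resolution: for $\delta_B$ a vertical swap does \emph{not} leave the persistence-based branch decomposition unchanged --- by definition of a vertical swap, $\obdt(T_f)$ and $\obdt(T_{f'})$ are not in a subgraph relation, which is exactly what breaks the BDT-based distances. The paper's argument instead exploits that a branch mapping may be taken between \emph{arbitrary} branch decompositions, not necessarily persistence-based ones: one picks the \emph{same} decomposition for both trees (possible since $T_f = T_{f'}$ structurally) and uses the identity mapping, which is a valid branch mapping under which only a single branch label changes, giving cost at most $\epsilon$. So the analogy to $\delta_P$ is faithful, but only because of this extra degree of freedom in the definition of $\delta_B$; running the ``structurally identical argument'' on the persistence-based decomposition, as your wording suggests, would fail in the vertical-swap case. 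The simple-change and edge-split cases for $\delta_B$ go through as you describe (persistence-based decomposition, identity mapping plus at most one unmapped leaf branch).
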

All BDT-based distances are stable against edge splits and simple changes.
\begin{theorem}
\label{theorem:stability_bdt}
    Let $f,f'$ be two scalar fields that differ by a minimal vertex perturbation of extent $\epsilon$ and of type simple change or edge split.  
    Then it holds that $\delta \leq \deg(T_f)\cdot\epsilon$ for all $\delta \in \{\delta_W,\delta_S,\delta_X\}$.
    In particular, for all three distances, there is an edit sequence $s = s_1 \dots s_k$ with $T_f \xrightarrow{\scriptscriptstyle s} T_{f'}$ such that $k \leq \deg(T_f)$ and $c(s_i) \leq \epsilon$ for all $1 \leq i \leq k$.
\end{theorem}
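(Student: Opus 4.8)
The plan is to prove the bound constructively and uniformly: for each admissible perturbation I would exhibit a valid edit sequence $s = s_1 \dots s_k$ on the relevant branch decomposition tree with $T_f \xrightarrow{\scriptscriptstyle s} T_{f'}$, $k \le \deg(T_f)$, and $c(s_i) \le \epsilon$ for all $i$; summing the operation costs then gives $\delta \le \deg(T_f)\cdot\epsilon$. Since the hypothesis rules out vertical and horizontal swaps, Lemma~\ref{lemma:changes} leaves only two change types to treat. For $\delta_W$ the objects are unordered BDTs with birth--death labels (Figure~\ref{fig:merge_tree_labels}d), and every operation cost is a displacement in birth--death space under the Wasserstein/Euclidean metric; for $\delta_X$ the objects are the ordered BDTs; $\delta_S$ uses branch labels on merge-tree nodes and I would defer it to Appendix~\ref{sec:perturbation_stability_app}, where the argument is analogous.

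For the edge split, Lemma~\ref{lemma:changes} yields (up to renaming and the inverse) $V(\bdt(T_{f'})) = V(\bdt(T_f)) \cup \{yx\}$, so the BDT gains a single leaf branch $yx$ and nothing else changes. As $x$ passes $y$ with extent $\epsilon$ and the new saddle sits at $f(y)$, the branch $yx$ has persistence $f'(x) - f(y) < \epsilon$, hence its birth--death point lies within $\epsilon$ of the diagonal. A single leaf insertion of $yx$, mapped to its diagonal point, realises the transformation with $k = 1$ and cost at most $\epsilon$. The definition of edge split also guarantees $\obdt(T_f) \subseteq \obdt(T_{f'})$, so this is a genuine ordered leaf append and the same operation is valid for $\delta_X$ and for the degree-one constraint (cf.\ Figures~\ref{fig:cases_max_reg} and~\ref{fig:cases_sad} for the creating and inverse directions).

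For simple changes I would split along the two alternatives of Lemma~\ref{lemma:changes}. In the first ($T_f = T_{f'}$ and $\bdt(T_f) = \bdt(T_{f'})$ up to renaming, one label differing), the moved critical point is either a maximum---changing the birth of exactly one branch, so one relabel of cost at most $\epsilon$---or a saddle $s$, which is the death point of the $\deg_{T_f}(s) - 1$ branches terminating there; each of their death coordinates shifts by at most $\epsilon$, giving at most $\deg(T_f) - 1$ relabels of cost at most $\epsilon$. Here I would invoke that a minimal shift of $f(s)$ moves the relevant persistences uniformly and hence preserves the elder-rule continuation, consistent with the structure preservation already granted by the lemma. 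In the second alternative, the symmetric-difference bounds of the lemma pin down a single appended (or removed) maximum, i.e.\ one leaf branch of persistence at most $\epsilon$, handled by a single insertion (resp.\ deletion) exactly as in the edge split. Relabels and leaf edits preserve child orderings and act only on leaves, so every case transfers verbatim to $\delta_X$ and respects the degree-one constraint.

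I expect the saddle-move subcase of the simple change to be the main obstacle, since it is the only situation producing several operations at once. There I must argue carefully that the number of branches whose label changes is exactly $\deg_{T_f}(s) - 1 \le \deg(T_f)$, that each relabel costs at most $\epsilon$ because only the death coordinate moves and only by the perturbation extent, and that these relabels assemble into one valid degree-one edit mapping for $\delta_W$ and $\delta_X$. The remaining checks---that the inserted or deleted branch truly has persistence at most $\epsilon$ and that its distance to the diagonal is therefore at most $\epsilon$ in the chosen metric---are routine.
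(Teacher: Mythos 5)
Your proposal is correct and follows essentially the same route as the paper: simple changes are handled by at most $\deg(T_f)$ relabels (one per branch dying at a moved saddle, each shifting a single birth--death coordinate by at most $\epsilon$) or by a single leaf insertion/deletion, and edge splits by a single leaf-branch insertion whose birth--death point lies within $\epsilon$ of the diagonal. The only part you leave implicit---the $\delta_S$ case---is indeed handled analogously in the paper's Appendix~\ref{sec:perturbation_stability_app}, with the one wrinkle that, since $\delta_S$ acts on the merge tree directly, an edge split requires inserting both the new saddle and the new leaf (two operations, still at most $\deg(T_f)$, each of cost at most $\epsilon$) and verifying the lowest-common-ancestor condition of the constrained edit mapping.
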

The classic one-degree edit distance is stable against vertical swaps and simple changes.
The unconstrained variant is also stable against horizontal swaps.
\begin{theorem}
\label{theorem:stability_constrclas}
    Let $f,f'$ be two scalar fields that differ by a minimal vertex perturbation of extent $\epsilon$ and of type simple change or edge split.
    Then it holds that $\delta_L \leq \deg(T_f)\cdot\epsilon$.
    In particular, there is a classic one-degree edit sequence $s = s_1 \dots s_k$ with $T_f \xrightarrow{\scriptscriptstyle s} T_{f'}$ such that $k \leq \deg(T_f)$ and $c(s_i) \leq \epsilon$ for all $1 \leq i \leq k$.
\end{theorem}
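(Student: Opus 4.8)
The plan is to read the bound off the structural description in Lemma~\ref{lemma:changes} and then realise it by an explicit one-degree (Selkow) edit sequence. The organising principle is that the one-degree constraint only ever forbids inserting or deleting an \emph{internal} node; since for the two perturbation types at hand the unaugmented tree is left intact up to at most a single appended leaf, $\delta_L$ can match the cost attainable without any constraint. By Lemma~\ref{lemma:changes} and the per-type analysis preceding it, the relevant cases are a simple change and a vertical swap, and I would treat them in turn.

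For a simple change, Lemma~\ref{lemma:changes} leaves two possibilities. Either $T_f = T_{f'}$ up to renaming with equal branch decomposition trees, or the two trees differ by exactly one maximum leaf appended to, respectively removed from, a saddle (the cases summarised in Figures~\ref{fig:cases_max_reg} and~\ref{fig:cases_sad}). In the first possibility I would take the identity node mapping, which consists solely of relabels and is therefore trivially one-degree; since only the scalar value of the perturbed vertex changes, by $\epsilon$, only the labels recording the scalar distance to the parent along its incident edges change, each by at most $\epsilon$, giving cost at most $\deg(T_f)\cdot\epsilon$. In the second possibility the differing node is a leaf, so a single leaf insertion or deletion --- explicitly permitted by the one-degree constraint --- of cost at most $\epsilon$ suffices, with the identity on the shared subtree.

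For a vertical swap, Lemma~\ref{lemma:changes} supplies the decisive simplification: $T_f = T_{f'}$ as unlabeled trees, the swapped vertices $x,y$ remain maxima in both fields, and only the branch decomposition reorders. Because $\delta_L$ is evaluated on the node-labeled merge tree itself rather than on its branch decomposition, the identity mapping is again valid, needs neither insertion nor deletion, and hence satisfies the one-degree constraint for free. As only $f(x)$ changes and $x$ is a leaf, exactly one label --- the scalar distance from $x$ to its parent --- differs, by at most $\epsilon$, so a single relabel yields cost at most $\epsilon \le \deg(T_f)\cdot\epsilon$ and a sequence of length $1 \le \deg(T_f)$.

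The step I expect to be the main obstacle is the simple change in which the perturbed vertex is a saddle: moving a saddle shifts the labels of all of its incident edges simultaneously, so I would have to verify carefully that the number of forced relabels stays within $\deg(T_f)$ and each within $\epsilon$, in order to certify both the cost bound and the sharper ``in particular'' claim that the sequence length satisfies $k \le \deg(T_f)$. The vertical swap case, in contrast, is immediate once the unlabeled-tree equality of Lemma~\ref{lemma:changes} is in hand, since there all of the structural work has already been carried out.
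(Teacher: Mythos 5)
Your proposal is correct and follows essentially the same route as the paper: the paper's proof (the arguments in Section~\ref{sec:perturbation_stability} plus Appendix~\ref{sec:perturbation_stability_app}) likewise reads the cases off Lemma~\ref{lemma:changes}, handling simple changes by the identity mapping with at most $\deg(T_f)$ relabels of cost at most $\epsilon$ each (or a single leaf insertion/deletion, which is valid in all constrained variants), and vertical swaps by a single relabel of the one changed edge length. You were also right to silently substitute \emph{vertical swap} for the stated \emph{edge split}: the theorem statement contains a typo --- the sentence introducing it says the classic one-degree distance is stable against vertical swaps and simple changes, and the literal statement would be false, since the paper's own counterexample (Figure~\ref{fig:edge_split}) shows that every classic edit distance, including $\delta_L$, incurs cost at least $x$ under an edge split. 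Regarding the obstacle you flag in the saddle-shift subcase: moving a saddle $v$ changes $\deg_{T_f}(v)+1$ node labels (the saddle's own label plus one per child), so the sharper claim $k \leq \deg(T_f)$ is off by one when $v$ realizes the maximum degree; the paper's appendix glosses this in exactly the same way, so your concern identifies a (minor) looseness shared with, not resolved by, the original proof.
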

\begin{theorem}
\label{theorem:stability_genclas}
    Let $f,f'$ be two scalar fields that differ by a minimal vertex perturbation of extent $\epsilon$ and of type simple change, vertical swap or horizontal swap.
    Then it holds that $\delta_G \leq \deg(T_f)\cdot\epsilon$.
    In particular, there is a classic edit sequence $s = s_1 \dots s_k$ with $T_f \xrightarrow{\scriptscriptstyle s} T_{f'}$ such that $k \leq \deg(T_f)$ and $c(s_i) \leq \epsilon$ for all $1 \leq i \leq k$.
\end{theorem}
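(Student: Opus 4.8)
The plan is to prove the bound by the same case analysis used for the other distances: by Lemma~\ref{lemma:changes}, a minimal vertex perturbation of the admissible types (simple change, vertical swap, horizontal swap --- note that edge splits are deliberately excluded, since classic distances provably fail on them) forces $T_f$ and $T_{f'}$ into one of a few rigid shapes, and in each shape I would exhibit an explicit classic edit sequence of at most $\deg(T_f)$ operations, each of cost at most $\epsilon$. The one structural feature I intend to exploit is that $\delta_G$ is the \emph{unconstrained} classic distance, so I am free to insert and delete \emph{internal} (saddle) nodes, not only leaves. This extra freedom is exactly what the horizontal-swap case needs and is the sole reason $\delta_G$ succeeds where the constrained $\delta_L$ of Theorem~\ref{theorem:stability_constrclas} fails.

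For the two easy cases I would reuse the arguments already recorded. For a \emph{simple change}, Lemma~\ref{lemma:changes} gives either $T_f = T_{f'}$ (empty sequence) or a single-node difference of size one in both vertex and edge count; this is realized either by relabeling the $\le \deg(T_f)$ edges incident to a moved saddle, or by a single leaf insertion/deletion, each relabel or insertion costing at most $\epsilon$ because only scalar values within $\epsilon$ of each other are involved. For a \emph{vertical swap}, Lemma~\ref{lemma:changes} yields $T_f = T_{f'}$ with exactly one differing label, so a single relabel of cost at most $\epsilon$ suffices; here $\delta_G$ behaves identically to every classic and deformation-based distance.

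The substance is the \emph{horizontal swap}. I would first record the scalar estimates that make the inner edge short: since $x$ passes its neighbor $y$ in the ordering by a perturbation of extent $\epsilon$, we have $f(x) < f(y) < f'(x)$ with $f'(x) - f(x) = \epsilon$, hence both $f(y)-f(x)$ and $f'(x)-f(y)$ are strictly less than $\epsilon$. Lemma~\ref{lemma:changes} leaves two subcases. In the genuine swap ($(y,x)\in E(T_f)$, $(x,y)\in E(T_{f'})$, with $x,y$ saddles in both trees), I delete the internal node $y$ (its parent edge has length $f(y)-f(x)<\epsilon$), which merges all of $y$'s children onto $x$, then insert the internal node $x$ above the lower saddle, handing it exactly the children that must sit under the upper saddle in $T_{f'}$ (its new parent edge has length $f'(x)-f(y)<\epsilon$); finally I relabel the edges whose length changed, namely the parent edge of the lower saddle and the edges to the children now hanging from the moved saddle. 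The second subcase ($x\notin V(T_f)$, a saddle split that regroups existing children) is identical with the deletion dropped. One checks that the induced node correspondence preserves ancestry, so the sequence is a valid unordered classic edit mapping.

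The main obstacle --- and the only step requiring real care --- is the cost-and-count bookkeeping of this last sequence. I must verify that every relabeled edge is incident to the saddle that moved, so that their number, together with the single deletion and single insertion of the short inner edge, stays within $\deg(T_f)$; and I must confirm each such relabel changes an edge length by at most $\epsilon$, which follows from the estimates above, since every affected child retains its scalar value while its parent's value moves by at most $\epsilon$. Tracking exactly which of $y$'s children migrate to $x$ (Observations~\ref{obs:children_x}--\ref{obs:children_y}) is what pins down both the structure of the insertion and the precise set of edges to relabel. The conceptual point to stress throughout is that the deletion and reinsertion act on an \emph{internal} saddle node, which is permitted only because $\delta_G$ is unconstrained; the constrained $\delta_L$ cannot split the inner edge and is therefore forced into a leaf-branch deletion of unbounded cost, exactly as in the counterexample of Figure~\ref{fig:horizontal_instability}.
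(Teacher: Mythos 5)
Your proposal is correct and follows the paper's own proof essentially verbatim: the same case split via Lemma~\ref{lemma:changes}, the same reuse of the simple-change and vertical-swap arguments, and for horizontal swaps the same delete--insert--relabel sequence (remove the short inner saddle edge, re-insert it on the other side with the regrouped children, then relabel the edges incident to the moved saddle), with every operation bounded by $\epsilon$ and validity resting precisely on $\delta_G$ being unconstrained. If anything you are more careful than the paper, which summarizes the relabels as ``the edges connected to $x$ after the swap'' (omitting the lower saddle's parent edge that you correctly include) and simply asserts, rather than verifies, the operation-count bound $k \le \deg(T_f)$ that you flag as the step needing real care.
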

\begin{proof}[Proof of Theorem~\ref{theorem:stability_deform}-\ref{theorem:stability_genclas}]
    Arguments above and in Appendix~\ref{sec:perturbation_stability_app}.
\end{proof}

\noindent
Interestingly, these stability properties precisely reflect the performance of the current best algorithms.
Classic constrained tree edit distances, on bounded-degree trees (a reasonable assumption for merge trees in practice), can be computed in $\mathcal{O}(n^2)$ through a naive dynamic programming algorithm~\cite{treeEditSurvey} and have a tight lower bound~\cite{DBLP:journals/siamcomp/BackursI18} (under SETH) derived from a conditional lower bound for the string edit distance.
% through the string edit distance.
The one-degree edit distance can be computed in $\mathcal{O}(n^2)$ for arbitrary trees, using the fact that nodes need to be mapped to nodes of the same depth~\cite{DBLP:journals/ipl/Selkow77,DBLP:journals/tvcg/PontVDT22}.
Thus, for $\delta_W$ and $\delta_L$, we get $\Theta(n^2)$ in general (with $n$ the number of critical points), and for $\delta_S$, too, on most practical instances.
In terms of stability, they are orthogonal to each other.
The path and branch mapping distances have a naive dynamic programming algorithm in time $\mathcal{O}(n^4)$ for bounded-degree trees~\cite{wetzels2022branch,wetzels2022path} and the quadratic lower bound for strings applies here, too.
Thus, it remains open whether the naive approach is optimal for those distances as well.
These distances are strictly more stable than the classic tree edit distances with quadratic running time (both on merge trees themselves or BDTs).
The unconstrained deformation-based edit distance is the only distance stable against all kinds of perturbations and strictly more stable than all other distances. It is NP-complete and can be computed through a mixed integer linear program~\cite{taming}.
Figure~\ref{fig:hierarchy} illustrates these relationships.

\section{Finite/Local Stability of Edit Sequences}
\label{sec:sequence_stability}

Having shown stability against minimal vertex perturbation in various forms, we now consider arbitrary perturbations on the scalar field.
It turns out that the stability results from before lift to perturbations of multiple vertices.
Consider a scalar field $f: \mathbb{X} \rightarrow \mathbb{R}$ and a perturbed field $f': \mathbb{X} \rightarrow \mathbb{R}$ such that $||f-f'||_\infty \leq \epsilon$.
We show that the deformation-based edit distance can be bounded by the sizes of the trees (length of the edit sequence) and $\epsilon$ (maximum cost of the edit operations). 
Our main theorem in this section (Theorem~\ref{theorem:local_deform}) has already been shown in a more general form for a distance measure equivalent to the unconstrained deformation-based distance (on abstract merge trees) by Pegoraro~\cite{pegoraro2024finitelystableeditdistance,pegoraro2024functionaldatarepresentationmerge}.
There, this kind of stability was labeled \emph{finite} or \emph{local} stability.
We use our results on minimal vertex perturbations to derive finite stability of the unconstrained deformation-based distance and then show that such results can also be derived from the weaker stability results.

% the bottleneck distance is bounded by $\epsilon$.
Instead of comparing $f$ and $f'$ with $||f-f'||_\infty \leq \epsilon$ directly,
we can split up the transformation from $f$ to $f'$ into minimal perturbations:
we construct a sequence of scalar functions $f_1,f_2,\dots, f_k$ with orderings $F_1,F_2,\dots,F_k$ such that $f=f_1$ and $f'=f_k$ and $f_i,f_{i+1}$ differ only by minimal vertex perturbations for all $1 \leq i < k$.
We also require the sequence to not perform unnecessary transpositions of vertices:
\begin{itemize}
    \item If $f(x)<f(y)$ and $f'(x)<f'(y)$, then $f_i(x)<f_i(y)$ for all $1 \leq i < k$.
    \item If $f(x)<f(y)$ and $f'(x)>f'(y)$, then there is an $i$ with $1 \leq i < k$ such that $f_j(x)<f_j(y)$ for all $1 \leq j < i$ and $f_j(x)>f_j(y)$ for all $i \leq j < k$.
\end{itemize}
Note that we can always derive such a sequence by simply perturbing the vertices one after another in a top-down fashion and splitting each single-vertex perturbation at the transpositions.
Furthermore, we can assume w.l.o.g.\ that $f'(x)>f(x)$ for all $x \in \mathbb{X}$, since for $f''(x) = f'(x)+\epsilon$ we have $\delta_E(T_{f'},T_{f''}) = 0$ by definition. 
In that case, the largest offset per vertex becomes $2\epsilon$.

% Let $f_0,\dots,f_k$ be a corresponding sequence of minimal vertex perturbations.
By Theorem~\ref{theorem:stability_deform} we know that for each minimal vertex perturbation from $f_{i-1}$ to $f_{i}$ ($1 < i \leq k$) by $\epsilon_i$, there is a sequence of edit operations $s_i$ that transforms $T_{f_{i-1}}$ into $T_{f_{i}}$ such that the most expensive edit operation has cost $\epsilon_i$.
Since for all $i$ we have $\epsilon_i \leq \epsilon$, we can conclude that there is an edit sequence transforming $T_f$ into $T_{f'}$ for which every single operation has cost at most $\epsilon$.
However, this specific sequence can be exceptionally long, depending on the amount of minimal vertex perturbations (or swaps in the vertex ordering) necessary to perform the full perturbation.
In particular, the number of necessary minimal vertex perturbations can grow quadratically with the number of vertices (inverting a sequence of vertices needs a quadratic amount of swaps).
However, we can improve the corresponding sequence of edit operations and give an upper bound on its number of operations.

Note that the edit operations between two trees give us a clear node correspondence which is transitive.
Thus, we have a clear node mapping from $T_f$ to $T_{f'}$ induced by the edit sequence $s_2 \dots s_k$ derived from the minimal vertex perturbations.

For a node $n_1$ mapped to nodes $n_2,\dots,n_k$ ($n_i \in T_{f_i}$ for all $1 \leq i \leq k$), let $v_1,\dots,v_k \in \mathbb{X}$ be the corresponding vertices.
When performing a minimal vertex perturbation, the vertex corresponding to a merge tree node only changes if it is passed by a neighbor.
\begin{lemma}
\label{lemma:vertex_switch}
    We have $v_i \neq v_{i+1}$ if and only if $f_i(v_i) < f_i(v_{i+1})$ and $f_{i+1}(v_i) > f_{i+1}(v_{i+1})$ or $f_i(v_{i+1}) < f_i(v_{i})$ and $f_{i+1}(v_{i+1}) > f_{i+1}(v_{i})$.
\end{lemma}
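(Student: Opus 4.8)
The plan is to prove the equivalence by separating the two directions, with the backward implication essentially free and the forward implication carrying all of the content. For the backward direction, note that each of the two disjuncts on the right-hand side asserts a \emph{strict} inequality between $f_i(v_i)$ and $f_i(v_{i+1})$; a function cannot assign two different values to the same input, so either disjunct already forces $v_i \neq v_{i+1}$. It therefore suffices to establish the forward implication: if the vertex underlying the tracked node changes in the step from $f_i$ to $f_{i+1}$, then $v_i$ and $v_{i+1}$ are precisely the pair of vertices transposed by that minimal perturbation.

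For the forward direction I would first dispose of the degenerate step. If $F_i = F_{i+1}$ (no transposition), then $f_i$ and $f_{i+1}$ induce the same vertex ordering, so the superlevel-set connectivity is combinatorially identical and $\mathcal{T}_{f_i}$, $\mathcal{T}_{f_{i+1}}$ agree as graphs on $V(\mathbb{X})$; the edit sequence from Theorem~\ref{theorem:stability_deform} then consists only of relabels, its induced correspondence is the identity on underlying vertices, and $v_i = v_{i+1}$. Hence whenever $v_i \neq v_{i+1}$ the step transposes exactly one adjacent pair of vertices $\{x,y\}$, where $x$ passes $y$.

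The heart of the argument is to show that the induced correspondence fixes underlying vertices everywhere outside $\{x,y\}$. Here I would invoke Observations~\ref{obs:no_edge}--\ref{obs:children_y}: the augmented trees share the vertex set $V(\mathbb{X})$ and differ only in edges incident to $x$ or $y$, while every vertex other than the single perturbed one keeps its scalar value. Consequently any vertex $w \notin \{x,y\}$ has an unchanged local neighborhood in the augmented tree—a moved child of $y$ still has exactly one parent edge, and $p$ merely exchanges child $x$ for $y$—so $\deg_{\mathcal{T}_{f}}(w)$ is preserved, $w$ retains its critical type, and it contributes the same node to both unaugmented trees. Because the realizing edit sequence consists of at most $\deg(T_{f_i})$ operations, all confined to edges incident to the transposed vertices, it maps each such node to itself; since an edit mapping is a partial matching, no node can be reassigned a vertex $w \notin \{x,y\}$ either. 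Thus $v_i \neq v_{i+1}$ forces $v_i, v_{i+1} \in \{x,y\}$, and distinctness gives $\{v_i, v_{i+1}\} = \{x,y\}$. As the perturbation transposes $x$ and $y$, one of the two disjuncts holds with these vertices, which finishes the proof.

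I expect the main obstacle to be making rigorous the claim that the correspondence is vertex-preserving away from $\{x,y\}$ and reassigns only within that pair—for instance the maximum of Figure~\ref{fig:cases_max_reg}d, whose underlying vertex migrates from $y$ to $x$. The delicacy is that unaugmented nodes are \emph{critical} points, so one must reconcile the pruning of regular vertices (which determines which vertex of $\mathbb{X}$ underlies a given node) with the explicit, locally confined edit sequences behind Theorem~\ref{theorem:stability_deform}. Concretely, I would verify against the case distinction summarized in Lemma~\ref{lemma:changes} and Figures~\ref{fig:cases_max_reg}--\ref{fig:cases_sad} that unpruning and repruning around the transposition only ever relabel nodes within $\{x,y\}$, so that the transitive correspondence of the excerpt indeed tracks a feature across exactly the transpositions that displace it.
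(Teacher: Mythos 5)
Your proposal is correct and takes essentially the same route as the paper: the paper's entire proof of this lemma is the case distinction in Appendix~\ref{sec:case_distinction}, and your argument --- the trivial backward direction from the strict inequalities, plus the forward direction reducing to the transposed pair $\{x,y\}$ and checking that the induced node correspondence only reassigns underlying vertices within that pair (including the renaming cases such as Figure~\ref{fig:cases_max_reg}d) --- is a structured write-up of exactly that case analysis. The extra locality framing via Observations~\ref{obs:no_edge}--\ref{obs:children_y} is sound but, as you yourself concede, does not replace the case-by-case verification; it organizes the same underlying argument.
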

\begin{proof}
    See case distinction in Appendix~\ref{sec:case_distinction}.
\end{proof}
This implies that the scalar value of the mapped vertices can only increase.
\begin{corollary}
\label{cor:increasing}
    We have $f_i(v_i) > f(v_1)$ for all $1 \leq i \leq k$.
\end{corollary}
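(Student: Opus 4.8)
The plan is to reduce Corollary~\ref{cor:increasing} to a single-step monotonicity statement and then telescope. Concretely, I would fix the tracked node $n_1 \in T_{f_1}$ together with its induced node sequence $n_2,\dots,n_k$ and the associated vertices $v_1,\dots,v_k \in \mathbb{X}$ (the correspondence being the transitive one read off from the edit sequences $s_2,\dots,s_k$ supplied by Theorem~\ref{theorem:stability_deform}). The goal is to show that $f_1(v_1), f_2(v_2), \dots, f_k(v_k)$ is increasing, i.e. $f_i(v_i) \le f_{i+1}(v_{i+1})$ for every step $1 \le i < k$. Telescoping these single-step inequalities and using $f = f_1$ gives $f(v_1) = f_1(v_1) \le \dots \le f_i(v_i)$; the strict form $f_i(v_i) > f(v_1)$ of the corollary is then inherited from the \emph{strict} version of the earlier w.l.o.g.\ reduction, under which every vertex is raised strictly, so that the vertex carrying $n_1$ is provably lifted along the sequence. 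Since the whole argument only compares scalar values of individual vertices, no structural facts about the trees $T_{f_i}$ beyond this vertex correspondence are required.

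For the single-step claim I would invoke Lemma~\ref{lemma:vertex_switch} and split into its two cases, first recording the basic normalization fact that each minimal vertex perturbation only \emph{raises} exactly one vertex, so that for every fixed vertex $w$ we have $f_i(w) \le f_{i+1}(w)$, with equality unless $w$ is the one perturbed at that step. If $v_{i+1} = v_i =: w$, the tracked vertex is unchanged and monotonicity is immediate from $f_i(w) \le f_{i+1}(w)$. If $v_{i+1} \ne v_i$, Lemma~\ref{lemma:vertex_switch} tells us that $v_i$ and $v_{i+1}$ are exactly the two vertices transposing at this step and fixes the direction of the swap: in its first alternative $f_i(v_i) < f_i(v_{i+1})$, whence $f_{i+1}(v_{i+1}) \ge f_i(v_{i+1}) > f_i(v_i)$; in its second alternative $f_{i+1}(v_{i+1}) > f_{i+1}(v_i) = f_i(v_i)$, using that $v_i$ is then the unperturbed vertex and keeps its value. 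In either alternative the step is a \emph{strict} increase, which also confirms the intuition that a tracked node migrates only ``upward'' onto the vertex that has just overtaken it.

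The main obstacle, and the step I would treat most carefully, is matching the two alternatives of Lemma~\ref{lemma:vertex_switch} to the correct inequality chain: one must verify that in each alternative it is the \emph{unperturbed} of the two swapped vertices whose value bounds $f_i(v_i)$ from above, and that the perturbed one moves strictly up rather than down — precisely what the normalized, transposition-minimal sequence guarantees. Once this direction is pinned down, the telescoping is routine. The secondary point I would make explicit is why strictness survives to yield $f_i(v_i) > f(v_1)$ rather than only the non-strict bound: because the reduction enforces $f'(x) > f(x)$ at every vertex, the vertex carrying $n_1$ must be raised (or transposed) at least once along $f_1,\dots,f_k$, so at least one of the telescoped steps is a strict increase and the strict inequality of Corollary~\ref{cor:increasing} follows.
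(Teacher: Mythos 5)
Your proposal takes essentially the paper's route: the paper gives no explicit proof, deriving the corollary directly from Lemma~\ref{lemma:vertex_switch} (``this implies that the scalar value of the mapped vertices can only increase''), and your step-wise case analysis plus telescoping is precisely the fleshed-out version of that implication. The core is correct: writing $w = v_i = v_{i+1}$ when the tracked vertex is unchanged gives $f_{i+1}(w) \ge f_i(w)$ since each step raises exactly one vertex; and in both alternatives of Lemma~\ref{lemma:vertex_switch} you correctly identify which of the two transposed vertices carries the node afterwards and conclude $f_{i+1}(v_{i+1}) > f_i(v_i)$ (in the first alternative via $f_{i+1}(v_{i+1}) = f_i(v_{i+1}) > f_i(v_i)$ because $v_{i+1}$ is unperturbed, in the second via $f_{i+1}(v_{i+1}) > f_{i+1}(v_i) = f_i(v_i)$ because $v_i$ is unperturbed). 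Telescoping then yields that $f_1(v_1),\dots,f_k(v_k)$ is non-decreasing.

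One caveat: your closing strictness argument overclaims. The normalization $f'(x) > f(x)$ for all $x$ guarantees that the tracked vertex is raised or exchanged at least once over the \emph{whole} sequence, which yields strictness only at $i = k$; for intermediate $i$ the tracked vertex may not yet have been perturbed or passed (e.g.\ with the top-down perturbation order a low vertex is untouched until late in the sequence), so $f_i(v_i) = f(v_1)$ is possible, and at $i = 1$ the strict inequality is false outright since $f_1 = f$. This is really a defect of the corollary as printed rather than of your argument: the paper itself only ever uses the non-strict content (``vertices only move up'', and later $f(p_1) \geq f_i(n_i) - 2\epsilon \geq f(n_1) - 2\epsilon$). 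The honest conclusion of your telescoping, $f_i(v_i) \ge f(v_1)$ for all $i$ with strictness once the tracked vertex has been touched, is exactly what the subsequent lemmas need, and you should state it in that form rather than trying to rescue the strict inequality at every index.
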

From the Lemma~\ref{lemma:vertex_switch} and the construction of the sequence, we can also conclude that all nodes corresponding to $n_1$ have a bounded distance to $n_1$.
\begin{lemma}
\label{lemma:local_stability_2}
    Let $u_i$ be largest in $\{v_1 \dots v_i\}$ w.r.t.\ $f_i$, i.e.\ $u_i \in \{v_1 \dots v_i\}$ with $f_i(u_i) \geq f_i(v_j)$ for all $1 \leq j \leq i$. Then we have for all $1 \leq i \leq k$: $f'(u_i) \leq f(v_1) + 2\epsilon$ and if $v_i = u_i$ then $f(v_i) \leq f(v_1)$.
\end{lemma}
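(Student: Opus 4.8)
The plan is to prove the lemma by induction on $i$, taking the statement about $u_i$ itself as the induction hypothesis. First I would record the two global facts that drive everything. Under the normalization fixed earlier we have $f \le f_i \le f_{i+1} \le f' \le f + 2\epsilon$ pointwise, so in particular $f'(w) \le f(w) + 2\epsilon$ for every vertex $w$. Hence the bound $f'(u_i) \le f(v_1) + 2\epsilon$ reduces to the purely ``original-value'' claim $f(u_i) \le f(v_1)$, and the second assertion (the case $v_i = u_i$) is then the immediate specialization $f(v_i) = f(u_i) \le f(v_1)$. Thus it suffices to show, for all $i$, that the currently highest vertex ever visited by $n_1$ has original value at most $f(v_1)$; the base case $u_1 = v_1$ is trivial.

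For the inductive step I would read off from Lemma~\ref{lemma:vertex_switch} exactly when the underlying vertex changes. If $v_{i+1} = v_i$, the visited set is unchanged and only the single perturbed vertex may have moved, so the maximizer either stays $u_i$ or becomes that perturbed vertex, and only in the latter case is a bound on its original value needed. If $v_{i+1} \ne v_i$, then by Lemma~\ref{lemma:vertex_switch} the pair $v_i, v_{i+1}$ is precisely the transposed pair, and since only one of them rises there are two subcases. Either $v_i$ is the rising vertex and the node is ``left behind'' on the overtaken vertex $v_{i+1}$ (a ride-down), after which $v_{i+1}$ sits strictly below $v_i$; or $v_{i+1}$ is the rising vertex and the node ``follows it up'' past the stationary $v_i$ (a ride-up). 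In a ride-down, $v_i$ keeps rising and remains the running maximum, so $u_{i+1} = v_i$ and its original value is controlled by the hypothesis, while the newly occupied $v_{i+1}$, though possibly of large original value, is submaximal. In a ride-up, $v_{i+1}$ overtakes $v_i$, so by the no-unnecessary-transposition property this crossing reflects a genuine order flip between $f$ and $f'$, giving $f(v_{i+1}) < f(v_i)$ in original values; if $v_i$ was the running maximum $u_i$, this yields $f(v_{i+1}) < f(u_i) \le f(v_1)$. Corollary~\ref{cor:increasing} and the monotonicity of $f_i(v_i)$ keep the node value pinned above $f(v_1)$ throughout, which is what forces these transpositions to respect the directions just described.

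The main obstacle is the ride-up that starts from a vertex the node previously rode \emph{down} onto, since such a vertex may itself have original value exceeding $f(v_1)$, so the clean chain $f(v_{i+1}) < f(v_i)$ no longer closes against $f(v_1)$. Resolving this is where the specific top-down, no-unnecessary-transposition construction of the sequence must be exploited: a ride-down always deposits the node on a vertex that has \emph{already reached its final value} $f'$ — it had the larger original value and was therefore processed earlier — and that lies strictly below the vertex which pushed the node down. Consequently a finalized ride-down vertex can neither rise further nor overtake the running maximum, and any later vertex that overtakes it must have been processed still later, hence have even smaller original value; tracing this back through the processing order to the last vertex that carried the running maximum yields the required $f(v_{i+1}) \le f(v_1)$. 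The bookkeeping of this ``order of processing versus order of crossings'' argument is the technical heart, and I would discharge it by strengthening the induction hypothesis to additionally assert that every visited vertex whose original value exceeds $f(v_1)$ already sits at its final value and strictly below $u_i$, so that it is permanently submaximal and the running maximum is always carried by a vertex of original value at most $f(v_1)$.
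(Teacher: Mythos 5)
Your first two paragraphs follow essentially the same skeleton as the paper's own proof: induction along the sequence, case split via Lemma~\ref{lemma:vertex_switch} into stationary steps, ride-downs and ride-ups, with no-unnecessary-transpositions supplying $f(v_{i+1}) < f(v_i)$ in the ride-up case. Your unconditional invariant $f(u_i) \le f(v_1)$ is in fact a cleaner bookkeeping device than the conditional hypothesis the paper carries, and you are right that the stationary case is not automatic: the perturbed vertex can be a previously visited one and can pass $u_i$, changing the maximizer. But the ``main obstacle'' of your third paragraph does not exist. In a minimal perturbation the rising vertex passes at most one vertex; hence a ride-up can only install a new maximum when the node currently sits on the maximum, i.e.\ $v_i = u_i$, and there the induction hypothesis already closes the chain $f(v_{i+1}) < f(v_i) = f(u_i) \le f(v_1)$. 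If $v_i \ne u_i$, the rising vertex passes only $v_i$, remains strictly below $u_i$, and the maximizer is unchanged, so no bound on $f(v_i)$ is ever needed. More generally, a vertex whose original value exceeds $f(v_1)$ can never become the maximizer at all: whenever the maximizer changes, the new maximizer must have passed the old one in that step (values only increase and swaps are between adjacent vertices), and no-unnecessary-transpositions then forces its original value below $f(u_i) \le f(v_1)$. This single observation also disposes of the stationary case.

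The step that would actually fail is the machinery you introduce to resolve the spurious obstacle. First, top-down processing is not among the assumptions: the construction only guarantees the two no-unnecessary-transposition conditions, and the lemma is later applied to sequences that are additionally required to avoid certain perturbation types (in the finite-stability theorems for $\delta_P$, $\delta_L$, $\delta_G$, $\delta_W$, $\delta_X$); such sequences need not be realizable in top-down order, so a proof valid only for that order does not support those applications. Second, without top-down processing your strengthened hypothesis is false after a single step: when the node rides down onto $X$ (i.e.\ $v_1$ rises past $X$), no-unnecessary-transpositions gives $f(X) > f(v_1)$, yet nothing forces $X$ to have been perturbed before that moment, so the visited vertex $X$ exceeds $f(v_1)$ in original value while \emph{not} sitting at its final value, contradicting the clause you propose to carry through the induction. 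The part of your clause that matters — permanent submaximality of such vertices — is true, but it follows from the passing argument above, not from any finalization property. Deleting the ``already at its final value'' clause and the appeal to processing order, and closing every case with ``any new maximizer passed the old one, hence has strictly smaller original value,'' turns your sketch into a complete and correct proof; it is then tighter than the paper's own write-up, whose dismissal of the stationary case as trivial needs exactly this argument.
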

\begin{proof}
    Initially, the condition holds with $v_i = v_1 = u_i$.
    If $v_i = v_{i+1}$, then the condition is maintained trivially.
    Now if $v_i \neq v_{i+1}$, then either $v_i$ passes $v_{i+1}$ or $v_{i+1}$ passes $v_i$ from $f_i$ to $f_{i+1}$ by Lemma~\ref{lemma:vertex_switch}.
    
    If $f_{i+1}(v_i) > f_{i+1}(v_{i+1})$ ($v_i$ passes $v_{i+1}$) and $u_i=v_i$, then $u_{i+1} = u_i = v_i \neq v_{i+1}$, so both conditions are maintained.
    
    If $f_{i+1}(v_i) > f_{i+1}(v_{i+1})$ ($v_i$ passes $v_{i+1}$) and $u_i \neq v_i$, then $f_{i+1}(v_{i+1}) < f_{i+1}(v_i) < f_{i+1}(u_i)$, since $v_{i}$ only passes a single vertex in a minimal perturbation (and thereby only $v_{i+1}$, not $u_i$).
    Hence, $v_{i+1} \neq u_{i+1} = u_i$ and  both conditions are maintained.

    If $f_i(v_{i+1}) < f_i(v_i)$ ($v_{i+1}$ passes $v_i$) and $u_i=v_i$, then $f(v_{i+1}) < f(v_i)$ by construction of the sequence (no unnecessary swaps) and thus $f(v_{i+1}) < f(v_1)$ by induction hypothesis.
    With $f(v_{i+1}) < f(v_1)$, we can also conclude that $f'(u_{i+1}) = f'(v_{i+1}) < f(v_1)+2\epsilon$, so both conditions are maintained.
    
    If $f_i(v_{i+1}) < f_i(v_i)$ ($v_{i+1}$ passes $v_i$) and $u_i \neq v_i$, then $f_{i+1}(v_{i}) < f_{i+1}(v_{i+1}) < f_{i+1}(u_i)$, since $v_{i+1}$ only passes a single vertex in a minimal perturbation (and thereby only $v_{i}$, not $u_i$). Thus, we have $v_{i+1} \neq u_{i+1} = u_i$ and both conditions are maintained.
\end{proof}
By Lemma~\ref{lemma:local_stability_2}, we can conclude for the whole sequence that $|f(v_1)-f'(v_k)| < 2\epsilon$.
\begin{corollary}
\label{cor:local_stability_3}
    We have $|f(n_1) - f'(n_k)| \leq 2\epsilon$ for any nodes $n_1 \in V(T_f)$ mapped to $n_k \in V(T_{f'})$ in the mapping induced by the edit sequence $s_2 \dots s_k$.
\end{corollary}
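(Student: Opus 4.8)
The plan is to read off both inequalities directly from the two facts already established about the trajectory $v_1, \dots, v_k \in \mathbb{X}$ of the vertex corresponding to the mapped merge tree node. Recall that $f(n_1)$ is the scalar value $f(v_1)$ of the vertex tracked in $T_f$, while $f'(n_k)$ equals $f'(v_k) = f_k(v_k)$ because $f_k = f'$. Hence the claim is equivalent to $-2\epsilon \leq f(v_1) - f'(v_k) \leq 2\epsilon$, and I would prove the two bounds separately, neither of which requires new work beyond invoking the preceding results at the endpoint $i = k$.

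For one side, I would invoke Corollary~\ref{cor:increasing} at index $i = k$, which gives $f'(v_k) = f_k(v_k) > f(v_1)$. Since $\epsilon \geq 0$, this yields $f(v_1) - f'(v_k) < 0 \leq 2\epsilon$ immediately; the monotone-increase phenomenon captured by that corollary already shows $f'(v_k) \geq f(v_1)$, so this direction is essentially free.

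For the remaining side, $f'(v_k) - f(v_1) \leq 2\epsilon$, I would apply Lemma~\ref{lemma:local_stability_2} at $i = k$. Let $u_k \in \{v_1, \dots, v_k\}$ be the running maximum with respect to $f_k = f'$. The lemma supplies $f'(u_k) \leq f(v_1) + 2\epsilon$, and the defining maximality of $u_k$ gives $f'(v_k) = f_k(v_k) \leq f_k(u_k) = f'(u_k)$. Chaining the two inequalities produces $f'(v_k) \leq f(v_1) + 2\epsilon$. Combining with the first bound yields $|f(n_1) - f'(n_k)| = |f(v_1) - f'(v_k)| \leq 2\epsilon$, as required.

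The argument is short precisely because all the substance has been front-loaded into Lemma~\ref{lemma:local_stability_2}, whose four-case inductive maintenance of the invariants is already discharged. The only point I would treat with care is the bookkeeping tying the abstract statement to the concrete quantities: that $f_k$ equals the (w.l.o.g.\ upward-shifted) target field $f'$, so that the running maximum $u_k$ is taken with respect to $f'$ itself, and that the per-vertex offset bound of $2\epsilon$ produced by the reduction is exactly the constant appearing in the lemma. I do not expect a genuine obstacle at this stage; the corollary is a clean specialization of the lemma to the final index.
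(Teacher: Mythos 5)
Your proposal is correct and matches the paper's (largely implicit) argument: the paper likewise obtains the upper bound by specializing Lemma~\ref{lemma:local_stability_2} at the final index (using maximality of $u_k$ and $f_k = f'$) and the lower bound from the monotonicity in Corollary~\ref{cor:increasing}, under the same w.l.o.g.\ upward-shift reduction that yields the $2\epsilon$ constant. Your write-up simply makes explicit the endpoint bookkeeping that the paper leaves to the reader.
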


Next, we consider unmapped nodes.
If a node $n_1 \in V(T_f)$ with parent $p_1 \in V(T_f)$ is not mapped in the induced mapping, then it has to be deleted between $f_i$ and $f_{i+1}$ for some $1 \leq i < k$.
For the sequence $s_2 \dots s_{i}$, let $n_2,\dots,n_{i}$ be the nodes mapped to $n_1$ and $p_2,\dots,p_{i}$ the  corresponding parents in $T_{f_2},\dots,T_{f_i}$.
Either $n_1$ is removed through the deletion of $(n_i,p_i)$, or it is pruned due to a deletion of a child edge.
We only have to consider the former case to cover all unmapped nodes (the pruned nodes are covered as parents of other unmapped nodes).
The case distinction in Appendix~\ref{sec:case_distinction} shows that the $i$-th perturbation can remove $n_i$ through a simple change (see Figure~\ref{fig:cases_sad}d), an edge split (see Figure~\ref{fig:cases_sad}e), or a horizontal swap (see Figures~\ref{fig:cases_sad}f, \ref{fig:cases_sad}g, and~\ref{fig:cases_sad}h), where $p_i$ passes $n_i$.
Since $p_i$ increases by at most $2\epsilon$, we have an upper bound on the length of $(n_i,p_i)$.
Next, we derive a bound on the edge $(n_1,p_1)$.
Since the vertices corresponding to $n_1$ only move up (Corollary~\ref{cor:increasing}), the perturbations on them can only make the edge longer, keeping the upper bound.
Thus, we show a lower bound on the vertices corresponding to $p_1,\dots,p_i$.
Note that they are not necessarily mapped to each other like $n_1,\dots,n_i$, since the parents of $n_1,\dots,n_i$ can be deleted or inserted.
\begin{lemma}
    Let $q_j $ be lowest in $\{p_j \dots p_i\}$, i.e.\ $q_j \in \{p_j \dots p_i\}$ with $f_j(q_j) \leq f_j(p_{j'})$ for all $j \leq j' \leq i$.
    Then we have for all $1 \leq j \leq i$: $f'(q_j)  \geq f'(n_i)$ and if $p_j=q_j$ then $f(p_j) \leq f(n_i)$.
\end{lemma}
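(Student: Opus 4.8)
The plan is to establish both invariants by backward induction on $j$, running from $j=i$ down to $j=1$, in complete analogy to Lemma~\ref{lemma:local_stability_2}: that lemma controls the \emph{largest} mapped vertex over a growing \emph{prefix} $\{v_1,\dots,v_i\}$ and proves an upper bound, whereas here I must control the \emph{lowest} parent $q_j$ over a growing \emph{suffix} $\{p_j,\dots,p_i\}$ and prove a lower bound. Throughout I will rely on the same three ingredients: (i) every vertex value is nondecreasing along $f_1,\dots,f_k$ (recall we assumed $f'(x)>f(x)$ for all $x$); (ii) a minimal vertex perturbation moves exactly one vertex past at most one neighbor, so order changes are isolated and, between consecutive functions, the minimizer of any \emph{fixed} vertex set is preserved except possibly at the single perturbed vertex; and (iii) the no-unnecessary-swaps conditions on the sequence, which pin down the initial and final relative order of any pair of vertices.

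For the base case $j=i$ we have $q_i=p_i$, so I must show $f'(p_i)\geq f'(n_i)$ and $f(p_i)\leq f(n_i)$. By the case distinction preceding the lemma, the $i$-th perturbation removes $n_i$ by letting its parent $p_i$ pass it, i.e.\ $f_i(p_i)<f_i(n_i)$ and $f_{i+1}(p_i)>f_{i+1}(n_i)$ (parent below its child before, above after). Treating $p_i,n_i$ as a fixed pair of vertices, this single transposition together with condition~(iii) forces $f(p_i)<f(n_i)$ initially --- an opposite initial order $f(p_i)>f(n_i)$ would, combined with $f'(p_i)>f'(n_i)$ and the first no-swap rule, prevent any transposition of the pair and contradict step~$i$ --- and forces $f'(p_i)>f'(n_i)$ finally, since the pair does not swap back. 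Both base-case inequalities follow.

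For the inductive step I relate $q_j$ to $q_{j+1}$ and the newly added parent $p_j$, mirroring the four-case split of Lemma~\ref{lemma:local_stability_2} according to the direction of the relevant passing and whether the minimizer is the new element. If the $j$-th perturbation does not touch the neighborhood of $n_j$, then $p_j=p_{j+1}$ and, by~(ii), the $f_j$-minimizer of the suffix either agrees with $q_{j+1}$ or equals $p_j$; the invariant transfers using monotonicity of $f'$ and the induction hypothesis. If the perturbation does touch it, then $p_j$ and $p_{j+1}$ differ by one vertex passing the other; because a minimal perturbation passes at most one neighbor, $q_j$ is either a suffix element whose final value is unchanged and already bounded below via the hypothesis for $q_{j+1}$ (so $f'(q_j)\geq f'(n_i)$), or $p_j$ itself, in which case I invoke condition~(iii) and the hypothesis $f(p_{j+1})\leq f(n_i)$ to conclude $f(p_j)\leq f(n_i)$, and then the eventual passing of $n_i$ (as in the base case) to conclude $f'(p_j)\geq f'(n_i)$.

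I expect the main obstacle to be precisely this last point: unlike the mapped chain $v_1,\dots,v_k$ in Lemma~\ref{lemma:local_stability_2}, the parents $p_1,\dots,p_i$ do \emph{not} form a chain of mapped nodes --- a parent may be newly created, absorbed, or replaced when the tree restructures around the deletion of $n_1$ --- so Corollary~\ref{cor:increasing} cannot be applied to them directly. The delicate work is therefore to certify, via the appendix case distinction (Figures~\ref{fig:cases_sad}d--h), how $p_j$ is obtained from $p_{j+1}$ in each restructuring and that every such transition moves at most one parent past $n_i$'s level, so that the lowest parent over the suffix never dips below $f'(n_i)$ and both invariants are preserved.
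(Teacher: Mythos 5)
Your overall strategy is the paper's: backward induction starting at $j=i$, the base case extracted from the deletion event ($p_i$ passes $n_i$) combined with the no-unnecessary-swaps conditions, and an inductive step split by swap direction and by whether the minimizer coincides with the parent. Your base case is correct and matches the paper. The gap is in the inductive step, in exactly the sub-case you flag as delicate: when the newly added parent $p_j$ becomes the new minimizer $q_j$. There you derive $f(p_j)\leq f(n_i)$ correctly (your condition (iii) applied to the swapping pair $(p_j,p_{j+1})$ gives $f(p_j)<f(p_{j+1})$, and the hypothesis gives $f(p_{j+1})\leq f(n_i)$), but for the other half you invoke ``the eventual passing of $n_i$ (as in the base case)'' to conclude $f'(p_j)\geq f'(n_i)$. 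This step is circular. The construction supplies no event in which $p_j$'s vertex passes $n_i$'s vertex --- the only given passings are $p_j$ past $p_{j+1}$ at step $j$, and $p_i$ past $n_i$ at step $i$ --- and, given $f(p_j)<f(n_i)$, the assertion ``$p_j$ eventually passes $n_i$'' is, by the no-unnecessary-swaps conditions, \emph{equivalent} to the inequality $f'(p_j)>f'(n_i)$ you are trying to prove; it is a consequence of the lemma, not an input to it. The paper never compares the intermediate parents with $n_i$ at all: it argues by transitivity through the old parent. Since $(p_j,p_{j+1})$ swap at step $j$ and any pair swaps at most once, condition (iii) forces $f'(p_j)>f'(p_{j+1})$, and the induction hypothesis gives $f'(p_{j+1})\geq f'(n_i)$. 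The same tool you already used for the $f$-bound yields the $f'$-bound symmetrically; your plan to instead ``certify that every transition moves at most one parent past $n_i$'s level'' cannot be carried out, because the parent-change transitions never involve $n_i$.

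A secondary issue in the same sub-case: you use the hypothesis $f(p_{j+1})\leq f(n_i)$, which the lemma provides only under the condition $q_{j+1}=p_{j+1}$. That condition does hold there, but it needs an argument you omit: the new parent sits immediately adjacent to the old parent in the vertex ordering at time $j$, and all other suffix elements keep their values across the step, so $p_j$ can only be the minimum of the extended suffix if $p_{j+1}$ was the minimum of the old one. The paper makes this available for free by splitting on $q_j=p_j$ versus $q_j\neq p_j$ \emph{before} splitting on the swap direction, so the conditional half of the induction hypothesis is licensed exactly where it is used; your case structure (swap direction first, then which vertex is the minimizer) forces you to prove this implication explicitly.
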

\begin{proof}
    As explained above, we know that $f_{i}(p_i) \leq f_{i}(n_i)$ and $f_{i+1}(n_i) \leq f_{i+1}(p_i)$.
    By construction of the sequence (no unnecessary swaps), we also have $f'(p_i) \geq f'(n_i)$ and $f(p_i) \leq f(n_i)$.
    With $q_i = p_i$, the property to show holds initially.
    
    Going from $j$ to $j-1$, we consider three cases regarding the perturbation.
    Either $p_{j-1}$ and $p_j$ are the same node, i.e.\ they are mapped onto each other between $f_{j-1}$ and $f_j$ (not necessarily meaning the same vertex), or $p_{j-1}$ is deleted, or $p_j$ is inserted.
    Furthermore, for each main case, we either have $q_i = p_i$ or not.

    Note that if $p_{j-1}$ is deleted, then, as explained above, the vertex corresponding to $p_j$ passes the one corresponding to $p_{j-1}$. 
    If $p_j$ is inserted, its corresponding vertex needs to pass the one corresponding to $p_{j-1}$ (compare case distinction in Appendix~\ref{sec:case_distinction}, specifically the cases in Figures~\ref{fig:cases_max_reg}h and~\ref{fig:cases_sad}g) as well.
    If $p_{j-i}$ and $p_j$ are mapped, then either they are the same vertex, meaning the property is maintained, or one passes they other.
    Overall, we only have to consider two cases: either $p_j$ passes $p_{j-1}$ or $p_{j-1}$ passes $p_j$.
    
    First, assume that $q_j = p_j$.
    If the vertex corresponding to $p_{j-1}$ passes the one corresponding to $p_j$, then by construction of the sequence we have $f'(p_{j-1}) > f'(p_j) \geq f'(n_i)$ and $f(p_{j-1}) < f(p_j) \leq f(n_i)$.
    The property is maintained with $q_{j-1} = p_{j-1}$.
    If $p_j$ passes $p_{j-1}$, then the property is maintained with $q_{j-1} = p_j$.
    
    Second, assume that $q_j \neq p_j$.
    Since $f_{j-1}$ and $f_j$ differ by a minimal vertex perturbation swapping $p_{j-1}$ and $p_j$, both unequal to $q_j$, we have $f_{j-1}(q_j) < f_{j-1}(p_{j-1})$ and $f_{j-1}(q_j) < f_{j-1}(p_{j})$.
    Thus, with $q_{j-1} = q_j$, the property is maintained and $q_{j-1} \neq p_{j-1}$.
\end{proof}

For $j=1$ in particular, we get that $f(p_1) \geq f(q_1) \geq f_i(q_1) - 2\epsilon \geq f_i(n_i) - 2\epsilon$.
With Corollary~\ref{cor:increasing}, we get that $f(p_1) \geq f_i(n_i) - 2\epsilon \geq f(n_1) - 2\epsilon$.
Thus, we can conclude that the edge $(n_1,p_1)$ has length of at most $2\epsilon$ and can be removed with cost of at most $2\epsilon$.
The same arguments apply analogously for unmapped nodes $n_1 \in V(T_{f'})$ which are inserted.
\begin{lemma}
\label{lemma:local_stability_4}
  Let $n \in V(T_f)$ be a node in $T_f$ with parent $p \in T_f$ such that $n$ is mapped to $n' \in V(T_{f_i})$ by $s_2 \dots s_i$ and the unique edge $(n',p')\in E(T_{f_i})$ is deleted by $s_{i+1}$.
  Then we have $f(n)-f(p) \leq 2\epsilon$.
\end{lemma}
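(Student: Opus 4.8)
The plan is to bound the length $f(n)-f(p)$ of the edge contracted by $s_{i+1}$ by separately controlling its upper endpoint $n$ and its lower endpoint $p$ along the chain of intermediate trees. I would track the node $n=n_1$ through the induced mapping as $n_1 \mapsto n_2 \mapsto \dots \mapsto n_i = n'$, with parents $p_1,\dots,p_i=p'$ and corresponding vertices $v_1,\dots,v_i$ in $\mathbb{X}$. The first step is to pin down \emph{why} this edge is deleted: appealing to the case distinction in Appendix~\ref{sec:case_distinction}, a deletion of $(n_i,p_i)$ by a single minimal perturbation can only arise from the cases in which the parent vertex passes the child (Figures~\ref{fig:cases_sad}d, \ref{fig:cases_sad}e, \ref{fig:cases_sad}f, \ref{fig:cases_sad}g, \ref{fig:cases_sad}h). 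Hence $f_i(p_i)\leq f_i(n_i)$ before and $f_{i+1}(p_i)\geq f_{i+1}(n_i)$ after the step, and the no-unnecessary-swaps property upgrades this to $f(p_i)\leq f(n_i)$ and $f'(p_i)\geq f'(n_i)$.

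For the upper endpoint I would invoke Lemma~\ref{lemma:vertex_switch} together with Corollary~\ref{cor:increasing}: the vertices $v_1,\dots,v_i$ representing $n$ never decrease in value, so $f(n)=f(v_1)\leq f_i(v_i)=f_i(n_i)$. The harder half is the lower bound on $p$, and I expect this to be the main obstacle. The difficulty is that the parents $p_1,\dots,p_i$ need not be mapped onto one another — a parent may itself be created or destroyed along the way — so no single vertex can be followed. The remedy is to track $q_j$, the lowest element of $\{p_j,\dots,p_i\}$ with respect to $f_j$, and to prove by downward induction on $j$ that $f'(q_j)\geq f'(n_i)$, with $f(p_j)\leq f(n_i)$ whenever $p_j=q_j$. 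The induction step distinguishes whether a parent is carried over, deleted, or inserted; in the latter two cases the case distinction (e.g.\ Figures~\ref{fig:cases_max_reg}h and~\ref{fig:cases_sad}g) forces the surviving parent vertex to pass the vanishing one, which is exactly what prevents the running minimum from dropping.

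Evaluating this parent bound at $j=1$ and using that, under the w.l.o.g.\ monotone construction, every vertex moves by at most $2\epsilon$ over the whole sequence, I would obtain the chain $f(p)=f(p_1)\geq f(q_1)\geq f_i(q_1)-2\epsilon\geq f_i(n_i)-2\epsilon$. Combining this with the upper-endpoint estimate and Corollary~\ref{cor:increasing} ($f_i(n_i)\geq f(n)$) yields $f(p)\geq f(n)-2\epsilon$, i.e.\ $f(n)-f(p)\leq 2\epsilon$, since $n$ sits above $p$ in $T_f$. Finally I would observe that unmapped nodes $n\in V(T_{f'})$ that are \emph{inserted} rather than deleted are handled symmetrically by running the same argument on the reversed perturbation sequence, so the bound holds in both directions. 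Together with the mapped-node estimate of Corollary~\ref{cor:local_stability_3}, this establishes that every edit operation in the reconstructed sequence costs at most $2\epsilon$, which is the ingredient needed for finite stability of the deformation-based distance.
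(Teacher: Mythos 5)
Your proposal matches the paper's own argument essentially step for step: the same appeal to the case distinction to conclude that the deletion arises from the parent vertex passing the child, the same use of Corollary~\ref{cor:increasing} to bound the upper endpoint, the same auxiliary lemma tracking the running minimum $q_j$ of the parents $\{p_j,\dots,p_i\}$ with the carried-over/deleted/inserted induction (including the passing arguments from Figures~\ref{fig:cases_max_reg}h and~\ref{fig:cases_sad}g), and the same closing chain $f(p_1)\geq f(q_1)\geq f_i(q_1)-2\epsilon\geq f_i(n_i)-2\epsilon\geq f(n_1)-2\epsilon$. The symmetric handling of insertions (Lemma~\ref{lemma:local_stability_5}) is also identical, so the proposal is correct and follows the paper's route.
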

\begin{lemma}
\label{lemma:local_stability_5}
  Let $n \in V(T_{f'})$ be a node in $T_{f'}$ with parent $p \in T_{f'}$ such that $n$ is mapped to $n' \in V(T_{f_i})$ by $s_i \dots s_k$ and the unique edge $(n',p')\in E(T_{f_i})$ is deleted by $s_{i-1}$.
  Then we have $f(n)-f(p) \leq 2\epsilon$.
\end{lemma}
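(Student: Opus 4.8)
The plan is to prove this insertion bound as the mirror image of the deletion bound in Lemma~\ref{lemma:local_stability_4}, reading the sequence $f_1,\dots,f_k$ from the birth step $i$ \emph{forward} to $k$ rather than from a removal step backward to~$1$. First I would record the one structural fact that makes both directions uniform: because the sequence is constructed top-down with $f'(v)>f(v)$ for every vertex $v$ (so the per-vertex offset $f'(v)-f(v)$ lies in $(0,2\epsilon]$), each vertex's scalar value is monotonically non-decreasing along $f_1,\dots,f_k$; hence $f(v)\le f_j(v)\le f'(v)$ and $f'(v)-f(v)\le 2\epsilon$ for every vertex $v$ and every index $j$. The inverse of each removal case in the proof of Lemma~\ref{lemma:local_stability_4} is exactly one of the insertion cases of the case distinction (cases d, e, f, g, h of Figure~\ref{fig:cases_sad} read right-to-left), so at step $i$ the edge $(n',p')\in E(T_{f_i})$ first appears as a new branch whose length is at most the perturbation extent, giving $f_i(n')-f_i(p')\le 2\epsilon$.

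Second, I would pin both halves of the claimed bound to the common pivot $f_i(n')$, the node's scalar value at the step it is born. For the child, per-vertex monotonicity together with the fact that the tracked vertex is always the highest one visited so far (the $u_i$-argument of Lemma~\ref{lemma:local_stability_2}, applied to the segment from step $i$ to $k$, in the spirit of Corollary~\ref{cor:increasing}) gives $f(n)\le f_i(n')$. For the parent, I would run the parent-tracking lemma that precedes Lemma~\ref{lemma:local_stability_4} in the forward direction on the segment $[i,k]$: tracking the \emph{lowest} vertex among the successive parents and using ``no unnecessary swaps'' together with the short branch $f_i(n')-f_i(p')\le 2\epsilon$ should yield $f(p)\ge f_i(n')-2\epsilon$. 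Subtracting the two bounds then makes the pivot cancel and leaves $f(n)-f(p)\le 2\epsilon$, precisely paralleling Lemma~\ref{lemma:local_stability_4}. As there, inserted nodes that disappear by pruning rather than by an explicit edge removal need not be treated separately, since they are covered as parents of other inserted nodes.

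The main obstacle is the asymmetry introduced by the normalization $f'(v)>f(v)$. Unlike Lemma~\ref{lemma:local_stability_4}, where $n,p\in V(T_f)$ and the quantity $f(n)-f(p)$ is literally the length of the deleted edge, here $n,p\in V(T_{f'})$ while the bound is still stated with respect to $f$, not $f'$. A naive conversion $f(n)-f(p)=(f'(n)-f'(p))-(\theta(n)-\theta(p))$ with $\theta(v)=f'(v)-f(v)\in(0,2\epsilon]$ only gives $4\epsilon$, because the two offsets are bounded independently; routing both halves through the single pivot $f_i(n')$ is exactly what avoids this doubling, since it couples the child's and the parent's monotone rises at the step where the branch is born and the edge is provably short. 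Intuitively, an inserted branch tip does not yet ``stick out'' above its future parent saddle in the original field $f$, so $f(n)$ sits at or below $f(p)$; the delicate part of the argument is to turn this intuition into the clean lower bound $f(p)\ge f_i(n')-2\epsilon$ through the parent-tracking lemma, keeping the constant at $2\epsilon$ rather than $4\epsilon$.
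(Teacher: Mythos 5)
Your high-level plan --- mirroring Lemma~\ref{lemma:local_stability_4} across the step at which the edge is born --- is indeed the paper's plan: its entire proof of this lemma is the remark that the deletion argument applies ``analogously'' to inserted nodes. But your execution has a genuine gap at the child bound $f(n)\le f_i(n')$. The justification you offer, that ``the tracked vertex is always the highest one visited so far,'' is exactly what Lemma~\ref{lemma:local_stability_2} does \emph{not} say: that lemma introduces $u_i$ separately from $v_i$ precisely because the tracked vertex need not be the running maximum, and its conclusion $f(v_i)\le f(v_1)$ is available only under the hypothesis $v_i=u_i$; unconditionally one only gets $f'(v_i)\le f'(u_i)\le f(v_1)+2\epsilon$. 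Moreover, the claimed inequality is false in general: a saddle node born at step $i$ on a vertex $x$ with $f_i(x)=c$ can later be handed off in the configuration of Figure~\ref{fig:cases_sad}b --- $x$ passes a regular vertex $y$ directly above it, and the saddle node is renamed from $x$ to $y$. This is the first case of Lemma~\ref{lemma:vertex_switch}, and by construction of the sequence (no unnecessary swaps) $f(x)<f(y)$, with $f(y)$ as large as nearly $c+2\epsilon$, since the pass only requires $f_{j+1}(x)>f_j(y)$ while $f_{j+1}(x)\le f(x)+2\epsilon\le c+2\epsilon$. If $y$ then carries the node to the end, $f(n)=f(y)>f_i(n')$, exceeding your pivot by up to $2\epsilon$, and your cancellation argument degrades to exactly the $4\epsilon$ you set out to avoid. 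Your forward transcription of the parent-tracking lemma is misaligned for a related reason: in Lemma~\ref{lemma:local_stability_4} the induction marches toward the field in which the bound is stated ($j=1$, field $f$), whereas your segment $[i,k]$ ends at $f_k=f'$ while the value $f(p)$ you want to control lives at $f_1$, outside the segment, so the base facts at the birth transposition naturally control $f'$-values of the parents, not $f(p)$ relative to $f_i(n')$.

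The paper's argument avoids all of this by genuinely reversing time rather than anchoring at $f_i(n')$: reading the sequence from $f_k=f'$ down to $f_1=f$, the insertion literally becomes a deletion (this is why the hypothesis says the edge is ``deleted by $s_{i-1}$''), vertex values become monotonically non-increasing, and the two ingredients of Lemma~\ref{lemma:local_stability_4} --- Corollary~\ref{cor:increasing} and the parent-tracking lemma --- apply with the roles of $f$ and $f'$ exchanged. Concretely, the sound mirrored chain anchors the Lemma~\ref{lemma:local_stability_2}-type invariant at the \emph{original} value of the birth vertex, giving $f'(n)\le f(w)+2\epsilon$ where $w$ is the vertex carrying $n'$ at step $i$; it then uses the no-unnecessary-swaps fact at the birth transposition, $f(w)<f(w')$ for the vertex $w'$ of $p'$, and the mirrored parent tracking to place $f(w')$ below $f'(p)$. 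This yields $f'(n)-f'(p)\le 2\epsilon$, which is also the quantity actually needed in Theorem~\ref{theorem:local_deform}: the cost of inserting $(n,p)$ into $T_{f'}$ is its length with respect to $f'$, so the ``$f$'' in the lemma's conclusion should be read as $f'$. By taking the stated $f$ at face value and forcing both endpoint bounds back to the original field through the pivot $f_i(n')$, you ended up attacking a stronger statement than the one the paper proves or uses, and the step that was supposed to make it work is the one that fails.
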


We can now construct a new edit sequence of shorter length using the previous lemmas.
We first delete all nodes of $T_f$ that are not present in the mapping through a contraction of their unique parent edge.
By Lemma~\ref{lemma:local_stability_4}, each deletion can be bound by $2\epsilon$.
We then relabel all edges in the remaining tree.
From the limit on the vertex shifts (Lemma~\ref{cor:local_stability_3}), we can also derive a limit of $2\epsilon$ for each relabel operation.
Lastly, we insert all nodes of $T_{f'}$ that are not present in the mapping with a limit of $2\epsilon$ per insertion (Lemma~\ref{lemma:local_stability_5}).

This sequence has a maximum length of $|V(T_f)|+|V(T_{f'})|$, since every vertex is edited at most once.
For intuition, note that we reduced the length of the sequence by removing redundant inserts and deletes for saddle swaps.
If a saddle vertex gets passed repeatedly by several other saddles, doing the full swaps after each other leads to the corresponding node being deleted in re-inserted each time.
By collapsing the tree structure to the common subtree once, we circumvent such a behavior.

\begin{theorem}
\label{theorem:local_deform}
    Let $f,f': \mathbb{X} \rightarrow \mathbb{R}$ be two scalar fields with $||f-f'||_\infty \leq \epsilon$.
    There is an unconstrained deformation-based edit sequence $s$ with $T_f \xrightarrow{\scriptscriptstyle s} T_{f'}$ such that $c(s) = \delta_E(T_f,T_{f'}) \leq (|V(T_f)|+|V(T_{f'})|)\cdot2\epsilon$.
    In particular, it holds that $c(o) \leq 2\epsilon$ for all $o \in s$.
\end{theorem}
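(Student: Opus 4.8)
The plan is to reduce the multi-vertex statement to the single-perturbation result of Theorem~\ref{theorem:stability_deform} and then \emph{compress} the resulting edit sequence so that its length no longer depends on the number of transpositions. First I would fix the decomposition $f = f_1, f_2, \dots, f_k = f'$ into minimal vertex perturbations described above (with no unnecessary transpositions, and w.l.o.g.\ only upward shifts so that each vertex moves by at most $2\epsilon$), and apply Theorem~\ref{theorem:stability_deform} to each step $f_{i-1} \to f_i$ to obtain edit sequences $s_i$ whose operations each cost at most $\epsilon_i \leq \epsilon$. Concatenating $s_2 \dots s_k$ already yields a valid unconstrained deformation-based sequence from $T_f$ to $T_{f'}$ with every operation bounded by $\epsilon$; the only defect is that $k$, and hence the length, can be quadratic in $|V(\mathbb{X})|$.

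The key idea for the length bound is to read off the transitive node correspondence induced by $s_2 \dots s_k$ and to replace the long sequence by a three-phase one: delete, relabel, insert. I would classify each node as \emph{mapped} (tracked all the way from $T_f$ to $T_{f'}$) or \emph{unmapped} (deleted somewhere, or inserted somewhere). For the cost of each phase I would invoke the already-established bounds: Corollary~\ref{cor:local_stability_3} bounds the scalar shift of every mapped node by $2\epsilon$, so relabeling its parent edge costs at most $2\epsilon$; Lemma~\ref{lemma:local_stability_4} bounds the parent edge of every deleted $T_f$-node by $2\epsilon$; and Lemma~\ref{lemma:local_stability_5} does the same for every inserted $T_{f'}$-node. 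These three facts are exactly what is needed to control the three phases.

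Concretely, I would construct $s$ as follows. In phase one, delete (by contracting the unique parent edge, summing labels as required by the deformation-based deletion) every node of $T_f$ absent from the induced mapping, landing on the substructure spanned by the mapped nodes. In phase two, relabel each surviving edge to the length prescribed by $f'$. In phase three, insert every node of $T_{f'}$ absent from the mapping. Since every vertex of either tree is edited at most once, the length is at most $|V(T_f)| + |V(T_{f'})|$, and by the three bounds above every operation costs at most $2\epsilon$, giving $c(s) \leq (|V(T_f)| + |V(T_{f'})|) \cdot 2\epsilon$; minimality of $\delta_E$ over valid sequences then yields the claimed inequality.

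The main obstacle I anticipate is verifying that this reshuffled three-phase sequence is genuinely \emph{valid} and produces \emph{exactly} $T_{f'}$ — the preceding lemmas bound costs, not structure. I would need to argue that after the deletions the remaining tree is precisely the common image of the mapping on both sides (so that the relabel targets in phase two are well defined), that the label-summing behavior of contractions is consistent with the $2\epsilon$ edge bounds (a contracted edge's new label is a sum, so one must check that the bounded quantity is the relevant edge length rather than an accumulated one), and that the phase-three insertions reattach the unmapped subtrees with the correct parent--child structure. Establishing this structural correctness — rather than the cost accounting, which is handed to us by Corollary~\ref{cor:local_stability_3} and Lemmas~\ref{lemma:local_stability_4} and~\ref{lemma:local_stability_5} — is where the real work lies.
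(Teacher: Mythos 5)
Your proposal follows essentially the same route as the paper's own proof: decompose the perturbation into minimal vertex perturbations, take the transitive node correspondence induced by the concatenated sequences from Theorem~\ref{theorem:stability_deform}, and compress to a three-phase delete--relabel--insert sequence whose costs are controlled by Corollary~\ref{cor:local_stability_3} and Lemmas~\ref{lemma:local_stability_4} and~\ref{lemma:local_stability_5}, with the length bound coming from each vertex being edited at most once. The structural-correctness issue you flag at the end (that the reshuffled sequence really lands on $T_{f'}$) is in fact left implicit in the paper as well, so your attempt is, if anything, slightly more careful about stating what remains to be checked.
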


The same generalization holds for the path mapping distance.
Assume that the sequence $f_1,\dots,f_k$ does not contain any horizontal swaps.
% Any unmapped leaf can be removed by a one-degree operation. 
By Theorem~\ref{theorem:stability_bdied}, the sequence $s_2 \dots s_k$ is a valid one-degree sequence and the corresponding edit mapping a valid path mapping.
Thus, we know for any inner unmapped node $v$ with children $c_1,\dots,c_l$ that (reordering w.l.o.g.) there is at most one $1\leq i \leq l$ such that $T[c_i]$ is not completely unmapped.
The complete subtrees $T[c_2],\dots,T[c_{i-1}],T[c_{i+1}],\dots,T[c_l]$ can be deleted by one-degree operations, meaning the sequence of deletions at the beginning of the constructed sequence is a valid one-degree edit sequence.
The same holds for the insertions at the end of the constructed sequence.

\begin{theorem}
    Let $f,f': \mathbb{X} \rightarrow \mathbb{R}$ be two scalar fields with $||f-f'||_\infty \leq \epsilon$ such that they can be transformed into each other by a sequence of minimal vertex perturbations of type simple change, edge split or vertical swap.
    There is a one-degree deformation-based edit sequence $s$ with $T_f \xrightarrow{\scriptscriptstyle s} T_{f'}$ such that $c(s) = \delta_P(T_f,T_{f'}) \leq (|V(T_f)|+|V(T_{f'})|)\cdot2\epsilon$.
    In particular, it holds that $c(o) \leq 2\epsilon$ for all $o \in s$.
\end{theorem}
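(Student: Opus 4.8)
The plan is to reuse the construction from the proof of Theorem~\ref{theorem:local_deform} almost verbatim and to add only the verification that the resulting edit sequence obeys the one-degree (leaf-only insertion/deletion) constraint. First I would fix a horizontal-swap-free sequence of minimal vertex perturbations $f = f_1,\dots,f_k = f'$ witnessing the hypothesis; such a sequence exists by assumption, and I would take it to perform no unnecessary transpositions and to satisfy $f' > f$ pointwise (so that, as before, the largest per-vertex offset is $2\epsilon$). Since every step is now of type simple change, edge split, or vertical swap, Theorem~\ref{theorem:stability_bdied} applies to it and supplies a \emph{one-degree} deformation-based edit sequence $s_i$ realizing $T_{f_{i-1}} \to T_{f_i}$ with each operation costing at most $\epsilon_i \le \epsilon$. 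The composition $s_2 \dots s_k$ then induces the same transitive node correspondence between $T_f$ and $T_{f'}$ as in the unconstrained case, but now realized by a valid path mapping.

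The per-operation cost bound and the length bound would transfer without any change. Lemma~\ref{lemma:vertex_switch}, Corollary~\ref{cor:increasing}, Corollary~\ref{cor:local_stability_3}, and Lemmas~\ref{lemma:local_stability_4}--\ref{lemma:local_stability_5} are statements purely about the induced vertex correspondence and the scalar values along the sequence, and they never exploit that the intermediate edit sequences are unconstrained. Hence mapped nodes still shift by at most $2\epsilon$, every deleted edge of $T_f$ and every inserted edge of $T_{f'}$ still has length at most $2\epsilon$, and the restructured sequence — delete all unmapped nodes of $T_f$, relabel the common tree, then insert all unmapped nodes of $T_{f'}$ — still has length at most $|V(T_f)| + |V(T_{f'})|$ with each operation bounded by $2\epsilon$.

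The genuinely new obligation, and the main obstacle, is to show that this restructured sequence is itself a valid one-degree sequence, i.e.\ that all deletions and insertions can be scheduled as leaf operations. Here I would invoke the structural consequence of the mapping being a valid path mapping: for any unmapped inner node $v$ of $T_f$ with children $c_1,\dots,c_l$, at most one subtree $T[c_i]$ contains a mapped node, while all others are completely unmapped. I would then delete the unmapped nodes bottom-up — recursively delete each completely-unmapped child subtree by leaf deletions, after which $v$ retains at most one child and is removed either directly as a leaf or through the automatic pruning of a degree-one node built into the deformation-based deletion. This schedules every deletion as a legal one-degree operation; the symmetric top-down argument (applying the same structural fact to $T_{f'}$, reading the mapping in reverse) handles the trailing insertions, and the relabels in between are unconstrained and always admissible.

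Combining the one-degree validity just established with the cost and length bounds inherited from the proof of Theorem~\ref{theorem:local_deform}, I obtain a one-degree deformation-based sequence $s$ with $T_f \xrightarrow{\scriptscriptstyle s} T_{f'}$ in which every operation costs at most $2\epsilon$ and whose total cost is at most $(|V(T_f)| + |V(T_{f'})|)\cdot 2\epsilon$; since this bounds $\delta_P(T_f,T_{f'})$ from above, the claim follows. I expect the delicate point to be arguing cleanly that the reordering into the delete-relabel-insert schedule preserves one-degree validity, as it is the only place where the constrained structure (rather than the shared perturbation machinery) is actually used.
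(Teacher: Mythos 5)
Your proposal is correct and takes essentially the same route as the paper: the paper likewise reuses the delete--relabel--insert construction from Theorem~\ref{theorem:local_deform}, invokes Theorem~\ref{theorem:stability_bdied} to get that the composed sequence $s_2 \dots s_k$ is one-degree and its induced mapping a valid path mapping, and then uses exactly the structural fact that an unmapped inner node has at most one not-completely-unmapped child subtree to conclude that the leading deletions and trailing insertions are valid one-degree operations. Your bottom-up scheduling of subtree deletions and the appeal to automatic pruning of degree-one nodes merely spell out details the paper leaves implicit.
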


The same is true for the classic edit distances on merge trees, $\delta_L$ and $\delta_G$.
If no edge splits happen, then all necessary deletions can be done through classic edit operations.
If also no horizontal swaps happen, they only require one-degree edit operations, since the mapping induces a strongly connected subgraph.

\begin{theorem}
    Let $f,f': \mathbb{X} \rightarrow \mathbb{R}$ be two scalar fields with $||f-f'||_\infty \leq \epsilon$ such that they can be transformed into each other by a sequence of minimal vertex perturbations of type simple change, horizontal swap or vertical swap.
    There is a classic unconstrained edit sequence $s$ with $T_f \xrightarrow{\scriptscriptstyle s} T_{f'}$ such that $c(s) = \delta_G(T_f,T_{f'}) \leq (|V(T_f)|+|V(T_{f'})|)\cdot 2\epsilon$.
    In particular, it holds that $c(o) \leq 2\epsilon$ for all $o \in s$.
\end{theorem}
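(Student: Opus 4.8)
The plan is to mirror the proof of Theorem~\ref{theorem:local_deform} step by step, substituting the classic unconstrained edit operations for the deformation-based ones and invoking Theorem~\ref{theorem:stability_genclas} wherever Theorem~\ref{theorem:stability_deform} was used. I would start from the very same decomposition of $f \to f'$ into a sequence of minimal vertex perturbations $f_1,\dots,f_k$ described in Section~\ref{sec:sequence_stability}, together with the w.l.o.g.\ reduction to a monotone, same-direction shift with per-vertex offset at most $2\epsilon$. By hypothesis every step is a simple change, a vertical swap, or a horizontal swap and \emph{never} an edge split. Theorem~\ref{theorem:stability_genclas} then guarantees that each single step is realizable by a classic edit sequence whose individual operations cost at most $\epsilon_i\le\epsilon$; concatenating these realizations yields a (possibly very long) classic sequence $s_2\dots s_k$ and the \emph{same} induced node correspondence $T_f\to T_{f'}$ that drove the deformation-based argument.

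Since Corollary~\ref{cor:local_stability_3} and Lemmas~\ref{lemma:local_stability_4}--\ref{lemma:local_stability_5} are statements about this correspondence and the perturbation sequence alone, independent of which edit framework realizes the steps, I would carry them over unchanged: every mapped vertex shifts by at most $2\epsilon$, and every node that is ultimately deleted from $T_f$ or inserted into $T_{f'}$ has a parent edge of length at most $2\epsilon$. With these bounds available I would reuse the three-phase construction of Theorem~\ref{theorem:local_deform}: first delete all nodes of $T_f$ absent from the mapping, then relabel the surviving common tree, and finally insert all nodes of $T_{f'}$ absent from the mapping, so that every node is touched at most once and the total length stays within $|V(T_f)|+|V(T_{f'})|$.

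The one genuinely new point, and the reason the hypothesis forbids edge splits, is that each of these deletions and insertions must be a \emph{valid classic} operation. Because no edge splits occur, the only nodes that appear or vanish are maxima attached to existing saddles (simple changes) and saddles produced or consumed by horizontal swaps (Lemma~\ref{lemma:changes}); in both cases the structural effect is exactly the reconnection of the children to the grandparent performed by a classic node deletion or insertion, and the affected node has a short parent edge by Lemmas~\ref{lemma:local_stability_4}--\ref{lemma:local_stability_5}. An edge split, in contrast, would place a new saddle in the interior of a potentially long edge whose removal has no bounded classic realization, as the counterexample in Figure~\ref{fig:edge_split} shows. Note also that because $\delta_G$ is \emph{unconstrained}, the inner-node deletions and insertions forced by horizontal swaps are permitted with no connectivity requirement on the mapping; this is precisely the feature that separates $\delta_G$ from the one-degree $\delta_L$.

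The step I expect to be the main obstacle is the label bookkeeping in the relabel phase. Unlike the deformation-based deletion, a classic deletion does not sum the labels of the contracted edges, so after removing a chain of unmapped nodes the surviving child carries only the topmost segment as its label rather than the full span between two mapped nodes. To keep every relabel below the claimed $2\epsilon$ while still touching each surviving node only once, I would argue that under $\|f-f'\|_\infty\le\epsilon$ such contracted chains cannot span more than $2\epsilon$: a deleted saddle can move the attachment point of its incident branch by at most $2\epsilon$, so a mapped node and its nearest mapped ancestor separated only by deleted saddles must already lie within $2\epsilon$ of the relevant endpoint, and the single relabel to the target $T_{f'}$ length is then bounded by the difference of two offsets in $[0,2\epsilon]$, exactly as in the deformation-based case. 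Verifying this boundedness of contracted spans carefully, and confirming that it simultaneously preserves both the per-operation bound and the linear length bound, is the delicate part of the argument.
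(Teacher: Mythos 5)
Your overall route is exactly the paper's: its proof of this theorem consists of re-running the construction behind Theorem~\ref{theorem:local_deform} (same perturbation sequence, same induced node correspondence, Corollary~\ref{cor:local_stability_3} and Lemmas~\ref{lemma:local_stability_4}--\ref{lemma:local_stability_5} carried over, same delete--relabel--insert three-phase sequence) plus the two observations you also make -- that without edge splits every required deletion/insertion is a legal classic operation, and that unconstrainedness is needed only for the inner saddles created and consumed by horizontal swaps.

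Where you differ is that you notice something the paper's two-sentence argument silently skips, and your worry is legitimate. The deformation-based relabel bound works because contraction \emph{sums} labels, so after the deletion phase every remaining edge label spans two \emph{mapped} nodes and Corollary~\ref{cor:local_stability_3} bounds both endpoints. In the classic encoding, a surviving node's label still references its original parent $P$ in $T_f$ (possibly deleted), while the target label references its parent $Q'$ in $T_{f'}$ (possibly inserted), and neither of these is controlled by Corollary~\ref{cor:local_stability_3} or by Lemmas~\ref{lemma:local_stability_4}--\ref{lemma:local_stability_5}, which only bound deletion and insertion costs. Your resolution is correct in the form you state at the end -- the relabel cost is $\bigl|(f(n)-f(P))-(f'(n')-f'(Q'))\bigr| = \bigl|(f'(n')-f(n))-(f'(Q')-f(P))\bigr|$, a difference of two offsets in $[0,2\epsilon]$ -- but the invariant to prove is about attachment \emph{values}, not chain spans: $0 \leq f'(Q')-f(P) \leq 2\epsilon$ for every mapped node. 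This follows from the no-unnecessary-transposition property of the perturbation sequence: whenever the parent of $n$ changes, the new parent vertex passes the old one, so along the chain of successive parents each replacement started (in $f$) strictly below its predecessor; hence the final parent's vertex $z$ satisfies $f(z) < f(P)$ and $f'(z) \leq f(z)+2\epsilon < f(P)+2\epsilon$, while monotonicity of the passes gives $f'(z) \geq f(P)$. Be careful not to argue this by stacking a $2\epsilon$ bound per deleted saddle along the contracted chain -- that version degrades linearly with the number of deleted saddles -- and note that the chain-span claim itself is not what enters the cost; the per-vertex pass argument is what keeps every relabel at $2\epsilon$ independent of chain length. With that lemma added, your proof is complete and is, if anything, more careful than the paper's.
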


\begin{theorem}
    Let $f,f': \mathbb{X} \rightarrow \mathbb{R}$ be two scalar fields with $||f-f'||_\infty \leq \epsilon$ such that they can be transformed into each other by a sequence of minimal vertex perturbations of type simple change or vertical swap.
    There is a classic one-degree edit sequence $s$ with $T_f \xrightarrow{\scriptscriptstyle s} T_{f'}$ such that $c(s) = \delta_L(T_f,T_{f'}) \leq (|V(T_f)|+|V(T_{f'})|)\cdot 2\epsilon)$.
    In particular, it holds that $c(o) \leq 2\epsilon$ for all $o \in s$.
\end{theorem}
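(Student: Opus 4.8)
The plan is to mirror the construction behind Theorem~\ref{theorem:local_deform} and its $\delta_P$ analogue, now under the stronger hypothesis that the decomposing sequence $f_1,\dots,f_k$ contains only simple changes and vertical swaps. First I would reuse the identical setup: split the transformation $f \to f'$ into minimal vertex perturbations $f=f_1,\dots,f_k=f'$ satisfying the two no-unnecessary-swap conditions, and assume w.l.o.g.\ $f'(x) > f(x)$ for all $x$, so that each vertex is offset by at most $2\epsilon$. Since every step is a simple change or vertical swap, the single-perturbation stability result for $\delta_L$ (Theorem~\ref{theorem:stability_constrclas}) guarantees that each $f_{i-1}\to f_i$ is realized by a classic one-degree edit sequence $s_i$ whose operations each cost at most $\epsilon_i \le \epsilon$. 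Composing $s_2\dots s_k$ yields a transitive node correspondence, i.e.\ an induced mapping from $T_f$ to $T_{f'}$, exactly as in the $\delta_E$ argument.

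The quantitative bounds then transfer verbatim. Corollary~\ref{cor:increasing}, Corollary~\ref{cor:local_stability_3}, and Lemmas~\ref{lemma:local_stability_4}--\ref{lemma:local_stability_5} were established for an arbitrary decomposing sequence and do not depend on the perturbation type, so they immediately give: every mapped node shifts by at most $2\epsilon$ (bounding relabels), every unmapped node of $T_f$ sits on a parent edge of length at most $2\epsilon$ (bounding deletions), and symmetrically for inserted nodes of $T_{f'}$. I would therefore build the short sequence in the same three phases — delete all unmapped nodes of $T_f$, relabel the remaining edges, then insert all unmapped nodes of $T_{f'}$ — which has length at most $|V(T_f)|+|V(T_{f'})|$ with each operation costing at most $2\epsilon$, exhibiting a valid sequence that realizes the claimed total bound.

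The real work is verifying that this sequence is a valid \emph{classic one-degree} sequence, which factors into two independent checks matching the two forbidden perturbation types. Because no edge split occurs, Lemma~\ref{lemma:changes} tells us the only node-count changes are maxima appended to or removed from saddles via simple changes, while vertical swaps leave $T_f = T_{f'}$ and merely permute labels; hence every deletion and insertion in the constructed sequence is a genuine classic node operation on the node-labeled trees, needing no deformation-based edge contraction. Because no horizontal swap occurs, the induced mapping respects depth and the ancestor relation, so — as in the $\delta_P$ case — the mapped nodes form a connected skeleton (a strongly connected subgraph) and each maximal unmapped region is a \emph{complete} subtree hanging off it, which can be pruned (resp.\ grown) leaf-by-leaf, satisfying the one-degree leaf-only constraint.

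I expect the main obstacle to be the second check: formally arguing that the absence of horizontal swaps forces every maximal unmapped region to be a complete subtree rather than an arbitrary unmapped fragment. The key statement I would make precise is that for any inner unmapped node $v$ with children $c_1,\dots,c_l$, at most one subtree $T[c_i]$ can contain a mapped node (after reordering), since a second mapped subtree would require re-nesting two branches without insertion or deletion — precisely the horizontal-swap behaviour excluded by hypothesis. Granting this, the remaining $l-1$ subtrees are entirely unmapped and deletable by one-degree leaf operations, which together with the edge-split-free node operations completes the validity argument and hence the proof.
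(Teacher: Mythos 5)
Your proposal follows essentially the same route as the paper's own (very terse) proof: reuse the three-phase sequence built for Theorem~\ref{theorem:local_deform} (delete all unmapped nodes of $T_f$, relabel the surviving edges, insert all unmapped nodes of $T_{f'}$), bound each operation by $2\epsilon$ via Corollary~\ref{cor:local_stability_3} and Lemmas~\ref{lemma:local_stability_4}--\ref{lemma:local_stability_5}, and then argue validity in two steps exactly as you do — no edge splits makes every deletion/insertion a genuine classic node operation, and no horizontal swaps makes the mapped nodes a connected subgraph so that leaf-only operations suffice. (Incidentally, the paper's statement of Theorem~\ref{theorem:stability_constrclas} contains a typo — ``edge split'' should read ``vertical swap'' — and your reading is the intended one.)

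The one point to fix is the ``key statement'' in your final paragraph. The criterion you propose — for an unmapped inner node $v$, at most one child subtree $T[c_i]$ contains a mapped node — is the $\delta_P$ criterion, where an inner node with a single remaining child can still be removed by a deformation-based contraction. For the classic one-degree distance $\delta_L$, deletions are leaf-only and never reattach children, so you need the strictly stronger statement that \emph{no} child subtree of an unmapped node contains a mapped node, i.e., unmapped regions are complete subtrees — which is what you correctly assert in your third paragraph, and which your weaker lemma would not imply. Note also that this stronger property genuinely needs both exclusions, not just the absence of horizontal swaps: an edge split in the deleting direction (Figure~\ref{fig:cases_sad}e) removes a saddle that still has a mapped subtree below it. Under the theorem's hypothesis both types are excluded, every per-step sequence from Theorem~\ref{theorem:stability_constrclas} consists of relabels and leaf insertions/deletions only, and since such operations never re-parent surviving nodes, a node can only be deleted after its entire subtree has been; composing the steps then yields the complete-subtree property and the three-phase sequence is one-degree valid. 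So the slip is readily repaired, but as literally stated your key lemma falls short of what the $\delta_L$ case requires.
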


For the BDT-based distances $\delta_X$ and $\delta_W$, we can also conclude finite stability against edge splits.
We are given a sequence $s_2 \dots s_k$ of edit operations between $\bdt(T_f)$ and $\bdt(T_{f'})$.
Since it is a one-degree sequence, the induced mapping represents a strongly connected common subgraph.
Thus, we can again construct a sequence of deletions before relabels before insertions that is a valid one-degree sequence.
Furthermore, the mapping between the BDTs induces a mapping between the vertices of the merge trees.
Thus, Corollary~\ref{cor:local_stability_3} also holds for the start and end vertex of any mapped branch, limiting the costs of each relabel.
Complete subtrees in the BDT correspond to complete subtrees in the original merge tree.
Since deleting each edge individually can only be more expensive than deleting a whole branch in one, we can also apply the bound on deletions from Lemma~\ref{lemma:local_stability_4}.

\begin{theorem}
    Let $f,f': \mathbb{X} \rightarrow \mathbb{R}$ be two scalar fields with $||f-f'||_\infty \leq \epsilon$ such that they can be transformed into each other by a sequence of minimal vertex perturbations of type simple change or edge split.
    There is a classic one-degree edit sequence $s$ with $\bdt(T_f) \xrightarrow{\scriptscriptstyle s} \bdt(T_{f'})$ such that $c(s) = \delta_W(\bdt(T_f),\bdt(T_{f'})) \leq (|V(\bdt(T_f))|+|V(\bdt(T_{f'}))|)\cdot 2\epsilon)$.
    In particular, it holds that $c(o) \leq 2\epsilon$ for all $o \in s$.
\end{theorem}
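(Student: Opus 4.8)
The plan is to mirror the argument used for Theorem~\ref{theorem:local_deform}, but carried out on the branch decomposition trees rather than the merge trees themselves. First I would decompose the global perturbation into a chain $f = f_1, \dots, f_k = f'$ of minimal vertex perturbations, which by hypothesis are all simple changes or edge splits. Applying Theorem~\ref{theorem:stability_bdt} to each consecutive pair yields a one-degree edit sequence $s_i$ between $\bdt(T_{f_{i-1}})$ and $\bdt(T_{f_i})$ in which every operation costs at most $\epsilon_i \leq \epsilon$. Composing $s_2 \dots s_k$ gives an induced node correspondence between $\bdt(T_f)$ and $\bdt(T_{f'})$, together with an induced correspondence on the merge-tree vertices via the start and end vertices of each branch. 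This second correspondence is the bridge that lets me reuse Corollary~\ref{cor:local_stability_3} and Lemmas~\ref{lemma:local_stability_4}--\ref{lemma:local_stability_5}, which were proved for merge-tree nodes.

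Next I would establish that the composed BDT mapping is one-degree-compatible, i.e.\ that the surviving branches are ancestor-closed and hence form a connected common subtree $S$ containing the root in both BDTs. This holds because each $s_i$ only appends or removes a single leaf branch (edge splits give $\bdt(T_{f_{i-1}}) \subseteq \bdt(T_{f_i})$ by Lemma~\ref{lemma:changes}, and simple changes alter the BDT by at most one leaf) or relabels: leaf-only deletions never reparent, so a branch can vanish only after all its descendants have, making the deleted region a union of full BDT-subtrees hanging off $S$. With this structure I would build the short sequence in the usual three phases: delete every unmapped branch of $\bdt(T_f)$ from the leaves upward, relabel the branches of $S$, then insert every unmapped branch of $\bdt(T_{f'})$ from the top down. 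Each phase is a valid one-degree sequence, so their concatenation $s$ is a valid witness for $\delta_W$, and since each branch is touched at most once its length is at most $|V(\bdt(T_f))| + |V(\bdt(T_{f'}))|$.

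The cost bounds then follow from the three ported results. For a relabel I would use that a mapped branch has a start vertex and an end vertex, each a merge-tree vertex whose image moves by at most $2\epsilon$ (Corollary~\ref{cor:local_stability_3}); since the birth and death coordinates of the branch label are precisely these two vertex values, both coordinates of the label shift by at most $2\epsilon$, bounding the relabel cost. For a deletion (insertions are symmetric), an unmapped branch corresponds, by ancestor-closedness, to a full subtree of $T_f$ whose edges are all unmapped and hence have length at most $2\epsilon$ by Lemma~\ref{lemma:local_stability_4}; the cost of deleting the whole branch in one BDT operation is then no larger than the total cost of deleting those edges individually, which keeps the overall deletion contribution within the claimed total.

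I expect the main obstacle to be the \emph{per-operation} deletion bound. The cost of deleting a branch is its persistence, i.e.\ the distance of its birth--death point to the diagonal, and a priori this equals the length of an entire maximum-to-death-saddle path, a sum of several merge-tree edges; the per-edge bound of Lemma~\ref{lemma:local_stability_4} therefore does not by itself give $c(o) \leq 2\epsilon$. The no-vertical-swap and no-horizontal-swap hypotheses are exactly what rescues this: without swaps the death saddle of a vanishing branch does not drift to accumulate length, and at the step where the branch is finally removed it is a single leaf edge of length at most $\epsilon$. I would make this precise by tracking the branch's two endpoints through the chain, combining the monotonicity of Corollary~\ref{cor:increasing} with the lower-bound lemma for parents of unmapped nodes, to conclude that the branch's persistence in $\bdt(T_f)$ never exceeds $2\epsilon$. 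The only remaining point requiring care is the Euclidean Wasserstein constant in the relabel step: the coordinate-wise argument yields $2\sqrt{2}\,\epsilon$, so the clean $2\epsilon$ statement should be read up to this absolute constant (equivalently, under the $L_\infty$ reading of the birth--death metric).
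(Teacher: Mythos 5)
Your proposal follows the paper's own route step for step: decompose the perturbation into minimal vertex perturbations, apply Theorem~\ref{theorem:stability_bdt} to each step, compose the resulting one-degree sequences into an ancestor-closed branch correspondence, reorganize it into a delete--relabel--insert sequence, bound relabels through Corollary~\ref{cor:local_stability_3} applied to the start and end vertices of mapped branches, and bound deletions through the correspondence between complete BDT subtrees and complete merge-tree subtrees together with Lemma~\ref{lemma:local_stability_4}. The one place where you go beyond the paper is the per-operation deletion bound, and your concern there is legitimate: the paper's one-line justification (``deleting each edge individually can only be more expensive than deleting a whole branch in one'') yields only an aggregate bound of (number of deleted merge-tree edges)$\,\cdot\,2\epsilon$, and since an unmapped branch may consist of several merge-tree edges, this does not by itself give $c(o) \leq 2\epsilon$ for a single branch deletion. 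The endpoint-tracking argument you sketch --- the birth vertex of a vanishing branch only moves up (Corollary~\ref{cor:increasing}), the images of its death saddle stay below the original death value plus $2\epsilon$ (Lemma~\ref{lemma:local_stability_2}, applicable since that saddle lies on the surviving parent branch), and at the deletion step the saddle sits above the maximum, forcing the persistence of the branch already in $T_f$ to be at most $2\epsilon$ --- is exactly the missing piece; it is the branch-level analogue of the proof of Lemma~\ref{lemma:local_stability_4} that the paper leaves implicit. Your closing caveat is also accurate: with the Euclidean ground metric on the birth--death plane the relabel bound degrades to $2\sqrt{2}\,\epsilon$, so the clean constant $2\epsilon$ should be read with respect to the $L_\infty$ ground metric, a point the paper does not address either.
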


\begin{theorem}
    Let $f,f': \mathbb{X} \rightarrow \mathbb{R}$ be two scalar fields with $||f-f'||_\infty \leq \epsilon$ such that they can be transformed into each other by a sequence of minimal vertex perturbations of type simple change or edge split.
    There is a classic one-degree edit sequence $s$ with $\bdt(T_f) \xrightarrow{\scriptscriptstyle s} \bdt(T_{f'})$ such that $c(s) = \delta_X(\bdt(T_f),\bdt(T_{f'})) \leq (|V(\bdt(T_f))|+|V(\bdt(T_{f'}))|)\cdot 2\epsilon)$.
    In particular, it holds that $c(o) \leq 2\epsilon$ for all $o \in s$.
\end{theorem}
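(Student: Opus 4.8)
The plan is to mirror the strategy already used for $\delta_E$ and $\delta_W$, but carried out on the \emph{ordered} branch decomposition trees and with explicit care for the sibling order. Starting from the decomposition of the transformation $f \to f'$ into a sequence $f_1,\dots,f_k$ of minimal vertex perturbations (all of type simple change or edge split, by hypothesis), I would invoke the ``in particular'' clause of Theorem~\ref{theorem:stability_bdt}, which applies to $\delta_X$: each step $f_{i-1}\to f_i$ yields a classic one-degree edit sequence on the ordered BDTs in which every operation costs at most $\epsilon_i \le \epsilon$. Concatenating these along the sequence gives a one-degree ordered edit sequence $s_2\dots s_k$ transforming $\bdt(T_f)$ into $\bdt(T_{f'})$, together with a transitive node correspondence between the branches of the two BDTs.

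Next I would exploit the rigid structure of one-degree mappings: they are depth-preserving and ancestor-closed, so the mapped branches form a connected subtree containing the main branch, while the unmapped branches of each BDT form a disjoint union of complete subtrees. As before, this lets me consolidate $s_2\dots s_k$ into the canonical shape: first delete every unmapped branch of $\bdt(T_f)$ bottom-up (so each deleted branch is a current leaf), then relabel the common part, then insert every unmapped branch of $\bdt(T_{f'})$. The genuinely new ingredient over the $\delta_W$ argument is that $\delta_X$ operates on \emph{ordered} BDTs, so I must verify that this reordering stays order-preserving. Here I would use Lemma~\ref{lemma:changes}: simple changes and edge splits never permute sibling branches — they only append or remove a single leaf branch or shift one node in scalar value — so the ordering $<_{\text{B}}$ is preserved throughout the sequence, the induced mapping is non-crossing, and the delete--relabel--insert sequence is a valid \emph{ordered} one-degree edit sequence.

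It then remains to bound each operation by $2\epsilon$. For relabels I would note that the BDT mapping induces a mapping of the start (maximum) and end (saddle) vertices of each mapped branch in the merge tree, so Corollary~\ref{cor:local_stability_3} bounds the shift of each birth and death coordinate by $2\epsilon$ and hence the relabel cost of the branch's birth--death label. For deletions and insertions I would use that a complete subtree of the BDT corresponds to a complete subtree of the original merge tree; when a branch is deleted bottom-up it has become a single leaf edge, whose length is bounded by $2\epsilon$ via Lemma~\ref{lemma:local_stability_4} (symmetrically Lemma~\ref{lemma:local_stability_5} for insertions). Moreover, the elder-rule nesting of branch decompositions guarantees that a child branch has strictly smaller persistence than its parent, so an entire unmapped subtree consists of branches of persistence at most that of its root, keeping every single deletion/insertion within $2\epsilon$. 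Since each branch of either BDT is touched at most once, the sequence has length at most $|V(\bdt(T_f))|+|V(\bdt(T_{f'}))|$, yielding the total bound.

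I expect the main obstacle to be precisely the order-preservation step: arguing rigorously that deleting the unmapped subtrees first, relabeling, and inserting last keeps the left-to-right order $<_{\text{B}}$ intact, so that the result is a valid \emph{ordered} (not merely unordered) one-degree sequence. This is exactly where $\delta_X$ departs from $\delta_W$, and it rests on the fact that vertical and horizontal swaps — the only perturbation types that can reorder siblings — are excluded here, making the order rigid along the whole sequence. A secondary point of care is confirming the per-operation (not merely total) deletion bound, which I would settle through the combination of Lemma~\ref{lemma:local_stability_4}/\ref{lemma:local_stability_5} and the persistence-nesting observation above.
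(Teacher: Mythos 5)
Your proposal follows essentially the same route as the paper: concatenate the per-perturbation one-degree sequences obtained from Theorem~\ref{theorem:stability_bdt}, consolidate them into deletions--relabels--insertions using the connected-subgraph structure of one-degree mappings, bound relabels via the induced merge-tree vertex mapping together with Corollary~\ref{cor:local_stability_3}, and bound deletions/insertions via the complete-subtree correspondence and Lemmas~\ref{lemma:local_stability_4}/\ref{lemma:local_stability_5}. Your explicit check that simple changes and edge splits never permute siblings (so the consolidated sequence remains valid on \emph{ordered} BDTs) is a detail the paper leaves implicit---it treats $\delta_W$ and $\delta_X$ with one uniform argument---but this is an elaboration of the same proof, not a different one.
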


\section{Implications and Future Work}
\label{sec:conclusion}

The results of the previous sections provide a clean hierarchy of merge tree edit distances by stability.
We now present future work and open questions induced by these findings.
Furthermore, we discuss slight oversimplifications in the hierarchy presented in Figure~\ref{fig:hierarchy} and give a more detailed view.

\begin{figure}[t]
    \centering
    \resizebox{0.7\linewidth}{!}{
    \Large
    \begin{tikzpicture}[xscale=1,yscale=1,
    box1/.style = {draw,red,ultra thick,inner sep=0.2pt,rounded corners=12.5pt},
    box2/.style = {draw,red,ultra thick,inner sep=0.2pt,rounded corners=12.5pt},
    box3/.style = {draw,red,ultra thick,inner sep=8pt,rounded corners=16pt}]
    
    \node[circle] at (0, 0) (sc) {SC};
    \node[circle] at (-3, 3) (vs) {VS};
    \node[circle] at (3, 3) (es) {ES};
    \node[circle] at (0, 6) (hs) {\large UHS};
    \node[circle] at (1.9, 4.1) (ohs) {\large OHS};
    
    \node[cyan] at (1.5, 1.5) (bdt) {\small $\mathbff{\theta(n^2)}$};
    \node[blue] at (1.6, 3) (bdt) {\small $\mathbff{\theta(n^2)}$};
    \node[green!70!black] at (-1.5, 1.5) (clas) {\small $\mathbff{\theta(n^2)}$};
    \node[red] at (-0.8, 2.6) (bdi) {\small $\mathbff{\mathcal{O}(n^4)}$};
    \node[gray] at (-1, 4.5) (bdi) {\small\bfseries NP-c};
    \node[black] at (1.3, 5.2) (bdi) {\small\bfseries NP-c};

    \node[fill=cyan,label={[right,yshift=-4pt,xshift=3pt]OBDT-based}] at (4.5, 4.1+0.2) (obdt_l) {};
    \node[fill=blue!80,label={[right,yshift=-4pt,xshift=3pt]BDT-based}] at (4.5, 3.6+0.2) (bdt_l) {};
    \node[fill=green!80!black,label={[right,yshift=-4pt,xshift=3pt]Classic constrained}] at (4.5, 3.1+0.2) (clas_l) {};
    \node[fill=gray,label={[right,yshift=-4pt,xshift=3pt]Classic unconstrained}] at (4.5, 2.6+0.2) (uclas_l) {};
    \node[fill=red,label={[right,yshift=-4pt,xshift=3pt]Branch/Path Mapping}] at (4.5, 2.1+0.2) (bdi_l) {};\node[fill=black,label={[right,yshift=-4pt,xshift=3pt]Deformation-based}] at (4.5, 1.6+0.2) (bdt_l) {};

    \node[box1,green!80!black,rotate fit=45,fit=(sc)(vs)] {};
    \node[box2,cyan,rotate fit=45,fit=(sc)(es)] {};
    % \draw [rounded corners,blue,ultra thick,inner sep=11pt,rounded corners=13pt] ($ (sc) + (0,-0.8) $)--($ (sc) + (-0.8,0) $)--($ (ohs) + (-0.8,0) $)--($ (ohs) + (0,0.8) $)--($ (es) + (0,0.8) $)--($ (es) + (0.8,0) $)--cycle;
    \draw [blue!80,ultra thick] ($ (sc) + (-0.51,0.19) $) arc(160:315:0.55) -- ($ (es) + (0.39,-0.39) $) arc(0
    -45:45:0.55) -- ($ (ohs) + (0.39,0.39) $) arc(45:155:0.55) -- cycle;
    % \node[tri,red,fit=(sc)(es)(vs)] {};
    \node[box3,black,rotate fit=45,fit=(sc)(es)(hs)] {};
    \draw [gray,ultra thick] ($ (hs) + (0.59,0) $) arc(0:135:0.59) -- ($ (vs) + (-0.42,0.42) $) arc(135:225:0.59) -- ($ (sc) + (-0.42,-0.42) $) arc(225:360:0.59) -- cycle;
    \draw [red,ultra thick] ($ (sc) + (-0.42,-0.42) $) arc(225:315:0.59) --($ (es) + (0.42,-0.42) $) arc(-45:90:0.59) -- ($ (vs) + (0,0.59) $) arc(90:225:0.59) -- cycle;
    
    \end{tikzpicture}
    }
    \caption{Illustrative visualization of the stability-based hierarchy of the different edit distances annotated with their complexity.}
    \label{fig:hierarchy2}
\end{figure}
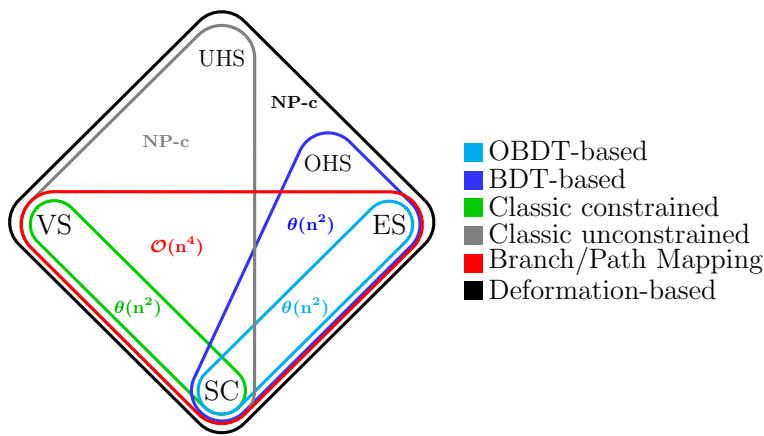

\textbf{Types of Horizontal Swaps.}
As explained in the definition of perturbation types (Section~\ref{sec:transformations}), horizontal swaps come in two forms: ordered and unordered ones.
For a constrained edit distance defined/computed on unordered BDTs (like the merge tree Wasserstein distance), ordered horizontal swaps fall into the category of changes that are easy to handle.
In fact, the two unordered BDTs are equal in the case of an ordered horizontal swap.
First of all, this makes the merge tree Wasserstein distance strictly more powerful than the Saikia distance.
In particular, the handling of ordered horizontal swaps is precisely the difference between the two distances, which are otherwise equivalent.
Furthermore, this also allows the merge tree Wasserstein distance to handle some instabilities which the path or branch mapping distance can not (and also none of the classic constrained edit distances): a constrained edit distance defined on the merge tree itself will never be able to handle saddle swaps.
Thus, the two classes of distances are actually orthogonal to each other, which has already been stated in~\cite{wetzels2022branch}.
This is visualized in Figure~\ref{fig:hierarchy2}.
It remains open whether there are techniques that allow for the definition of an edit distance that is stable against ordered horizontal swaps in addition to being stable against edge splits and vertical swaps.

% \textbf{Non-Symmetry of Hierarchy.}
\textbf{Asymmetry of Hierarchy.}
We considered five different distances based on the classic constrained edit distance: three on BDTs, two on merge trees directly.
On merge trees themselves, we looked at two adapted forms of the contour tree alignment distance, one constrained and one unconstrained.
In Figure~\ref{fig:hierarchy}, the constrained one looks symmetric to the constrained BDT-based distances in terms of stability.
As shown above, this is already a misrepresentation, since distances on unordered BDTs can, in fact, handle some horizontal swaps, which constrained distances on merge trees can not.
It remains open whether there are techniques that allow for the definition of an edit distance that is stable against ordered horizontal swaps and vertical swaps, but not edge splits.

Furthermore, we considered an unconstrained classic edit distance on merge trees directly ($\delta_G$), which is then able to handle all forms of horizontal swaps.
This is also shown in Figure~\ref{fig:hierarchy2}.
In contrast, such an unconstrained edit distance on BDTs seems impossible to define in a reasonable way, due to the nature of the branch hierarchy:
how to interpret the deletion of an inner branch (specifically the root) is highly unintuitive.
It remains open whether there are techniques that allow for the definition of an edit distance that is stable against both kinds of horizontal swaps and edge splits, but not against vertical swaps.

\textbf{Other Future Work.}
Apart from the extended hierarchy (Figure~\ref{fig:hierarchy2}), the initial hierarchy already gives rise to some interesting open questions that should be considered in future work.
Based on the quadratic gap between current lower and upper bounds for the path and branch mapping distances, we can ask the question of whether there are more efficient algorithms or if a quartic lower bound can be proven.
The same can be asked for heuristics, approximations or other distances of the same expressiveness.
More generally, one could study whether the increased complexity is inherent for certain stability properties (oblivious to a specific distance used).
A possible approach for such a result could be to prove a (conditional) lower bound for certain stability properties under mild assumptions about the structure of the distance.

Furthermore, it remains open whether more general stability properties can be shown, e.g.\ typical stability against the $L_\infty$-norm under mild assumptions (in the vein of the stability proof for the Wasserstein distance~\cite{edelsbrunner09}).

\begin{appendices}

\section{Case Distinction for Lemma~\ref{lemma:changes}}
\label{sec:case_distinction}

In this section, we provide the proof for Lemma~\ref{lemma:changes} through a case distinction on the possible changes in an augmented merge tree induced by a minimal vertex perturbation.
If we have a minimal vertex perturbations increasing the scalar value of vertex $x$ between $f$ and $f'$, we have to consider single transpositions as well as a perturbation without a change in the vertex ordering.
Both can induce meaningful changes in the augmented or unaugmented merge tree: while changes without transpositions do only change labels, transpositions can also change the actual tree structure.
Observations~\ref{obs:no_edge}-\ref{obs:children_y} define the way in which the tree structure can change: If vertex $x$ passes vertex $y$, we only have to consider which children of $y$ become children of $x$.
Furthermore, the tree structure only changes if the two vertices are connected (Observation~\ref{obs:no_edge}).
Hence, we do a case distinction over whether there is an edge between the involved vertices, their node types (defining the numbers of children in $T_f$), and the number of edges that are redirected.
We begin with the non-transposing perturbations, for which we also consider the critical types of $x$ separately.
Next, we consider perturbations with transpositions of non-neighboring vertices per type of shifted vertex.
Lastly, we consider transpositions of neighbors for all types of node pairs.

\paragraph*{Regular vertex moves.}
If we perturb a regular vertex without changes in the vertex order, the two augmented merge trees $\mathcal{T}_f$ and $\mathcal{T}_{f'}$ are fully isomorphic and likewise are $T_f$ and $T_{f'}$.
The branch decomposition cannot change either.
So we have $T_f = T_{f'}$ and $\bdt(T_f) = \bdt(T_{f'})$, a simple change.
More specifically, no critical values change and the labeling of $T_f$ and $T_{f'}$ are equal, too, so any meaningful distance should be $0$.

\paragraph*{Minimum vertex moves.}
If we perturb the global minimum without changes in the vertex order, the two augmented merge trees $\mathcal{T}_f$ and $\mathcal{T}_{f'}$ are fully isomorphic and likewise are $T_f$ and $T_{f'}$.
The branch decomposition cannot change either.
So we have $T_f = T_{f'}$ and $\bdt(T_f) = \bdt(T_{f'})$ and only a single edge or branch (the one containing the root) changes its label.
This is a simple change.

\paragraph*{Saddle vertex moves.}
If we perturb a saddle vertex without changes in the vertex order, we again have $T_f = T_{f'}$ and $\bdt(T_f) = \bdt(T_{f'})$.
Only one node in the tree changes it scalar value, but multiple edge or branch labels can change.
We have a simple change.

\paragraph*{Maximum vertex moves.}
If we perturb a maximum vertex without changes in the vertex order, we again have $T_f = T_{f'}$ and $\bdt(T_f) = \bdt(T_{f'})$.
Only a single edge or branch (containing the moved maximum) changes its label.
We again get a simple change.

\paragraph*{Regular vertex passes non-neighbor vertex.}
If we perturb a regular vertex $x$ such that it passes another vertex $y$ with $(y,x) \notin E(\mathcal{T}_{f})$, $\mathcal{T}_f$ and $\mathcal{T}_{f'}$ are fully isomorphic (Observation~\ref{obs:no_edge}), as are $T_f$ and $T_{f'}$.
% Furthermore, we know that $x$ is not the global minimum (it can only pass its direct neighbor).
Since no critical values change, the labeling of $T_f$ and $T_{f'}$ are equal, too.
Therefore, the branch decomposition cannot change either.
So we have a simple change with $T_f = T_{f'}$ and $\bdt(T_f) = \bdt(T_{f'})$ and additionally any meaningful distance should be $0$.

\paragraph*{Minimum vertex passes non-neighbor vertex.}
This case can never happen in a $d$-manifold with $d>1$, since the global minimum is always connected to the next vertex in the ordering.

\paragraph*{Saddle vertex passes non-neighbor vertex.}
If we perturb a saddle vertex $x$ such that it passes another vertex $y$ with $(y,x) \notin E(\mathcal{T}_{f})$, the two merge trees $T_f$ and $T_{f'}$ remain equal structurally (Observation~\ref{obs:no_edge}). 
Since all maxima stay equal the persistence-based branch decompositions $\bdt(T_f),\bdt(T_{f'})$ also do not change.
We get a simple change with $T_f = T_{f'}$ and $\bdt(T_f) = \bdt(T_{f'})$.

\paragraph*{Maximum vertex passes non-neighbor vertex.}
If we perturb a maximum vertex $x$ such that it passes another vertex $y$ with $(y,x) \notin E(\mathcal{T}_{f})$, the two merge trees $T_f$ and $T_{f'}$ remain equal structurally (Observation~\ref{obs:no_edge}).
However, this does not necessarily hold for the BDTs.

If $y$ is a maximum node with $\mathcal{B}_f(y)$ being an ancestor of $\mathcal{B}_f(x)$ in $\bdt(T_f)$, then $\mathcal{B}_f(y)$ is the parent of $\mathcal{B}_f(x)$. Suppose otherwise, then there is another maximum $z$ with $\mathcal{B}_f(z)$ being an ancestor of $\mathcal{B}_f(x)$ and a descendant of $\mathcal{B}_f(y)$, i.e.\ 
by construction of the persistence-based branch decomposition
% by the elder rule
$f(x) < f(z) < f(y)$, which contradicts the assumption that $x$ and $y$ are neighbors in the total ordering. 
The two branches swap, i.e.\ $\mathcal{B}_{f'}(x)$ is the parent of $\mathcal{B}_{f'}(y)$ in $\bdt(\mathcal{T}_{f'})$.
This case is illustrated in Figure~\ref{fig:cases_max_reg}a.

If $y$ is a maximum with $\mathcal{B}_f(x)$ and $\mathcal{B}_f(y)$ not in an ancestor relation, then neither $E(BDT(T_f))$ changes (the two branches do not originate in each other), nor the sibling ordering (only depends on the start vertices of the branches).
In this case, we get a simple change with $T_f=T_{f'}$ and $BDT(T_f) = BDT(T_{f'})$.

If $y$ is not a maximum, the nesting of branches stays exactly the same, so again we get a simple change with $T_f=T_{f'}$ and $BDT(T_f) = BDT(T_{f'})$.

\paragraph*{Minimum vertex passes neighbor.}
Consider a perturbation of the global minimum $x$ such that it passes another vertex $y$ with $(y,x) \in E(\mathcal{T}_{f})$.
Since we restrict to $d$-manifolds with $d>1$, we can conclude that all children of $y$ become children of $x$ (otherwise, the new global minimum would become a saddle).
In this case, we simply swap the two vertices, which gives us a simple change with $T_f=T_{f'}$ and $BDT(T_f) = BDT(T_{f'})$, up to a renaming of a single vertex ($y$ to $x$).

\paragraph*{Maximum vertex passes neighbor.}
This case can never happen, since a maximum does not have neighbors of higher scalar value.

\paragraph*{Regular vertex passes regular neighbor.}
If we perturb a regular vertex $x$ such that it passes another regular vertex $y$ with $(y,x) \in E(\mathcal{T}_{f})$, we have to consider two cases.
Either the unique child $c$ of $y$ stays a child of $y$ or it becomes a child of $x$ (Observation~\ref{obs:children_y}).

If $(c,y) \in E(\mathcal{T}_{f'})$, then $x$ becomes a maximum node and a child of $y$, see Figure~\ref{fig:cases_max_reg}b.
In this case, y becomes a saddle node.
Thus, the merge tree changes structurally: we have $x,y \notin V(T_{f})$ but $x,y \in V(T_{f'})$.
The edge in $T_f$ that contains $x$ and $y$ is split by the new saddle $y$ in $T_{f'}$ and a new edge $(x,y)$ is created.
Since $x$ is the closest neighbor of $y$ in $\mathcal{T}_{f'}$, the branch $yx$ becomes a new leaf in $\bdt(T_{f'})$.
The relation of all other branches does not change, meaning that $\bdt(T_f) \subset \bdt(T_{f'})$ with $V(\bdt(T_f)) = V(\bdt(T_{f'})) \cup \{yx\}$.
This is an edge split.

In the second case that $(c,x) \in E(\mathcal{T}_{f'})$ the only change is that $x$ and $y$ got swapped in the augmented merge trees, see Figure~\ref{fig:cases_max_reg}c.
The unaugmented merge tree remains unchanged.
$T_f$ and $T_{f'}$ are fully isomorphic, even in terms of scalar values, as are $\bdt(T_f)$ and $\bdt(T_{f'})$.
Again, any meaningful distance should be $0$ and we have a simple change.

\paragraph*{Regular vertex passes maximum neighbor.}
If we perturb a regular vertex $x$ such that it passes a maximum vertex $y$ with $(y,x) \in E(\mathcal{T}_{f})$, i.e.\ its unique child, then $x$ becomes a new maximum while $y$ becomes regular, see Figure~\ref{fig:cases_max_reg}d.
In essence, the maximum corresponding to $y$ is simply shifted, ignoring renaming to $x$.
The two merge trees $T_f$ and $T_{f'}$ remain equal structurally.
Since the maximum does not pass another critical point, the branch decompositions do not change either.
We get a simple change with $T_f = T_{f'}$ and $\bdt(T_f) = \bdt(T_{f'})$.
Note that this only holds up to isomorphism, since we rename the maximum node ($y$ to $x$).

\paragraph*{Regular vertex passes saddle neighbor.}
If we perturb a regular vertex $x$ such that it passes a saddle vertex $y$ with $(y,x) \in E(\mathcal{T}_{f})$, we have to consider four cases.
We make a case distinction over the number of children of $y$ that become children of $x$ (Observation~\ref{obs:children_y}).

If all children of $y$ become children of $x$, we again get a simple swap of the two vertices in the augmented merge tree: $x$ becomes a saddle, $y$ becomes a regular vertex, see Figure~\ref{fig:cases_max_reg}e.
In the unaugmented merge tree, no structural changes happen.
Only the saddle node corresponding to $y$ (and now to $x$) is shifted, also the branch decomposition remains unchanged.
So we get a simple change with $T_f = T_{f'}$ and $\bdt(T_f) = \bdt(T_{f'})$ (up to renaming of $x$ and $y$).

If no children of $y$ become children of $x$, $x$ becomes a new maximum node and thus a child of $y$ in $T_{f'}$.
The edge $(x,y)$ is created, otherwise the merge tree stays the same, see Figure~\ref{fig:cases_max_reg}f.
Note that the branch decomposition is also not changed other than a new persistence pair being added at the lowest level, as $x$ becomes the closest child of $y$.
Since in both the merge tree and the BDT, we only add a leaf, we have a simple change.

Next, we consider the case that exactly one child of $y$ becomes a child of $x$.
In this case, $x$ stays a regular vertex but moves into one of the superlevel set components of $y$, see Figure~\ref{fig:cases_max_reg}g.
The unaugmented merge tree remains unchanged, since no critical point is moved.
$T_f$ and $T_{f'}$ are fully isomorphic, even in terms of scalar values.
The same holds for the BDTs, so we have a simple change with $T_f = T_{f'}$ and $\bdt(T_f) = \bdt(T_{f'})$.
Again, any meaningful distance should be $0$.

In the last case, at least two but not all children of $y$ become children of $x$.
Here, $x$ becomes a new saddle node in $T_{f'}$ and $y$ stays a saddle with $(x,y) \in E(T_{f'})$, see Figure~\ref{fig:cases_max_reg}h.
Thus, neither $T_{f'} \subseteq T_f$ nor $T_{f} \subseteq T_{f'}$ holds.
Since $x\notin V(T_f)$, we have exactly the second case for horizontal swaps from Lemma~\ref{lemma:changes}.
Depending on the branch priorities, the branch decomposition can change only in terms of the ordering (if the main branch goes along $x$ in $T_{f'}$, the previous incomparable children of $y$ become ordered in $<_B$) or structurally (if the main branch of $T_{f'}$ does not go along $x$, some branches along $x$ are no longer children of the main branch).
%If it does not, it remains unchanged, since no new maximum is added and thus no new persistence pair.
The ordered branch decomposition always changes.
Thus, in both cases, we have a horizontal branch swap.

\paragraph*{Saddle vertex passes regular neighbor.}
If we perturb a saddle vertex $x$ such that it passes a regular vertex $y$ with $(y,x) \in E(\mathcal{T}_{f})$, we get three cases.
This time, we make a case distinction over whether the unique child $c_1$ of $y$ becomes a child of $x$ (Observation~\ref{obs:children_y}) and the number of children of $x$.
Let $c_2 \dots c_k$ be the other children of $x$ in $\mathcal{T}_f$.

If we have $(c_1,y) \in E(\mathcal{T}_{f'})$ and $x$ has more than one child in $\mathcal{T}_{f'}$ ($k>2$), then $x$ stays a saddle and $y$ becomes a saddle, see Figure~\ref{fig:cases_sad}a.
Since $(c_1',x) \in E(T_f)$ but $(c_1',y) \in T_{f'}$ for some $c_1'$ in the component of $c_1$, neither $T_{f'} \subseteq T_f$ nor $T_{f} \subseteq T_{f'}$ holds.
We get the second case for horizontal swaps from Lemma~\ref{lemma:changes}, up to renaming the remaining saddle.
Depending on the branch priorities, the branch decomposition can change structurally (if the main branch goes along $c_1$, all branches along $c_2\dots c_k$ are children of the main branch in $T_f$, but not in $T_{f'}$) or only in terms of ordering (if the main branch goes along $c_3$, the branches along $c_1$ and $c_2$ are incomparable in $T_f$ but strictly ordered in $T_{f'}$).
In both cases, we have a horizontal branch swap.

If we have $(c_1,y) \in E(\mathcal{T}_{f'})$ but $x$ has only one child in $\mathcal{T}_{f'}$ ($k=2$), then $x$ becomes regular and $y$ becomes a saddle, see Figure~\ref{fig:cases_sad}b.
In the unaugmented merge tree, the saddle corresponding to $x$ (and now to $y$) is shifted.
Everything else stays the same, including the BDTs.
Thus, we get a simple change with $T_f = T_{f'}$ and $\bdt(T_f) = \bdt(T_{f'})$ (up to renaming $x$ to $y$).

If we have $(c_1,x) \in E(\mathcal{T}_{f'})$, then $x$ stays a saddle and $y$ stays regular.
We simply move $y$ out of one of the superlevel set components of $x$, but their connectivity stays the same otherwise, see Figure~\ref{fig:cases_sad}c.
Thus, the two unaugmented merge trees are structurally equal and the branch decomposition stays the same, too.
We get a simple change with $T_f = T_{f'}$ and $\bdt(T_f) = \bdt(T_{f'})$.

\paragraph*{Saddle vertex passes maximum neighbor.}
If we perturb a saddle vertex $x$ such that it passes a maximum vertex $y$ with $(y,x) \in E(\mathcal{T}_{f})$, we consider two cases: either $x$ has exactly one other child or at least two.
In both cases, $y$ becomes a regular vertex below $x$ (Observations~\ref{obs:children_y} and~\ref{obs:flipped_edge}).

If at least two children of $x$ remain, $x$ stays a saddle vertex and only one leaf node is removed from the unaugmented merge tree (Figure~\ref{fig:cases_sad}d), i.e.\ $T_{f'} \subset T_f$.
Since $y$ was the closest child of $x$, the branch decomposition tree stays the same as well, other than removing the leaf corresponding to $y$, i.e.\ $\bdt(T_{f'}) \subset \bdt(T_f)$.
This is a simple change.

If only one child $c$ of $x$ remains, $x$ becomes a regular vertex, too (Figure~\ref{fig:cases_sad}e).
In this case, $y$ and $x$ are both removed from the unaugmented merge tree.
The edges $(c,x)$ and $(x,p)$ are merged into the edge $(c,p)$.
Thus, we have neither $T_{f} \subseteq T_{f'}$ nor $T_{f'} \subseteq T_f$.
Since $y$ was the closest child to $x$, it was a leaf in $\bdt(T_f)$, so we get $\bdt(T_{f'}) \subset \bdt(T_f)$ and thereby an edge split.

\paragraph*{Saddle vertex passes saddle neighbor.}
If we perturb a saddle vertex $x$ such that it passes a saddle vertex $y$ with $(y,x) \in E(\mathcal{T}_{f})$, we get three cases.
We do a case distinction over the number of children of $y$ (Observation~\ref{obs:children_y}) that become children of $x$ and the number of children of $x$.

If all children of $y$ become children of $x$, $y$ becomes a regular vertex below $x$ (Observations~\ref{obs:flipped_edge} and~\ref{obs:children_y}, see Figure~\ref{fig:cases_sad}f) and is removed from $T_{f'}$.
We have the inverse case of the second one from Lemma~\ref{lemma:changes}, up to renaming of $x$ and $y$.
Again, either the partial ordering of the branches changes (if the main branch goes along $y$ in $T_f$, the children of $x$ and $y$ become incomparable), or the actual branch nesting (if the main branch of $T_f$ does not go along $y$, the branches along children of $y$ become children of the main branch). 
We get at least an ordered horizontal branch swap.

If at least one child $c$ of $y$ stays a child of $y$ and $x$ has more than one child in $\mathcal{T}_ {f'}$, then both $x$ and $y$ stay saddles and swap their position, see Figure~\ref{fig:cases_sad}g.
We get exactly the first case for horizontal swaps from Lemma~\ref{lemma:changes}.
The tree structures are not in a subgraph relation.
Again, depending on the main branch, either the partial ordering changes (the children of $x$ pass the children of $y$) or the actual nesting of branches.
So we get a horizontal branch swap.
The same holds for the case where all children of $y$ stay children of $y$ and $x$ has more than one child in $\mathcal{T}_ {f}$/$\mathcal{T}_ {f'}$.

If all children of $y$ stay children of $y$ and $x$ has exactly one child in $\mathcal{T}_ {f'}$, then $y$ becomes a saddle and $x$ becomes regular, see Figure~\ref{fig:cases_sad}h.
This means that in $T_{f'}$ only one saddle remains and all subtrees of $x$ and $y$ are now connected to this saddle.
Ignoring renaming, we get the exact same situation as in Figure~\ref{fig:cases_sad}f, so a horizontal branch swap.

\section{Details on Further Distances}
\label{sec:other_dists}

In this section, we describe further details on distances that have been skipped in the main paper.
In particular, we provide remaining definitions and argue why certain adaptations have been made in specific cases.

\subsection{Saikia Distance}

We now consider $\delta_X$ in more detail.
The original paper by Saikia, Seidel and Weinkauf introduced a distance function on so-called extended branch decomposition graphs.
They are generalizations of BDTs with additional nodes representing the main branches of subtrees.
The resulting structure is a directed acyclic graph as opposed to an actual tree.
However, the defined distance measure can be applied to both types of structures equivalently, due to the nature of algorithms for constrained edit distances.

When applying their method to trees, it becomes the one-degree edit distance.
Extended branch decomposition graphs can be reduced to \emph{ordered} BDTs by removing additional nodes and keeping the orderings on neighbors.
Thus, we define $\delta_X$ as the classic one-degree edit distance on ordered BDTs.
The label functions considered in the original paper are persistence of branches or volume of branches, together with the absolute difference as cost function and $0$ as the empty symbol.
Of course, one could also use the Wasserstein metric on branches to obtain the ordered variant of $\delta_W$.

When using the Wasserstein metric or persistence, all arguments for $\delta_W$ from Section~\ref{sec:perturbation_stability} hold for $\delta_X$ as well, since they only argue about ordered BDTs.
We ignore the volume labeling, as it breaks the purely topological framework.
Differences between $\delta_X$ and $\delta_W$, even when using the same labels and cost function, are discussed in Section~\ref{sec:conclusion}.

\subsection{Branch Mapping Distance}

Next, we define the branch mapping distance.
Its stability is discussed in Section~\ref{sec:perturbation_stability_app}.
The branch mapping distance $\delta_B$ is the first real outlier in the considered distances.
While closely related to the path mapping distance algorithmically, it does not fit the framework of edit distances described in Section~\ref{sec:preliminaries}.
In fact, it does not even define a metric.
However, we will show that it behaves similar to the path mapping distance in terms of stability, too.

Formally, $\delta_B$ is defined as the cost of an optimal branch mapping between two abstract merge trees.
A mapping between two (arbitrary, not persistence-based) branch decompositions is branch mapping if
\begin{itemize}
    \item it is a one-to-one mapping,
    \item it maps the two root branches onto each other,
    \item if $a$ is mapped to $b$ then the parents $a'$ and $b'$ are mapped onto each other,
    \item the ancestor relation and sibling ordering in the corresponding BDTs are preserved.
\end{itemize}
For more details, see~\cite{wetzels2022branch}.
Note that the definition does not require a specific branch decomposition.
Thus, the algorithm in~\cite{wetzels2022branch} computes the two optimal branch decompositions as well as the optimal mapping to obtain minimal costs.
The costs are defined as the sum of costs for each mapped pair of branches.
We use the Wasserstein metric as the costs for a mapped pair, though the original paper defines several alternatives as well.
We say that two branches mapped onto each other are relabeled, whereas branches not present in the mapping are inserted/deleted.

\subsection{Sridharamurthy Distance}

Lastly, we give a definition of $\delta_S$, the original merge tree edit distance by Sridharamurthy et al.~\cite{DBLP:journals/tvcg/SridharamurthyM20}.
In contrast to all other distances, it works on merge trees directly while using labels derived from the persistence-based branch decomposition.
An edit sequence between trees $T_1,T_2$ is a constrained edit sequence if the induced mapping $M \subset V(T_1)\times V(T_2)$ between the nodes of the trees fulfills the following properties:
\begin{itemize}
    \item $M$ is a one-to-one mapping
    \item $M$ is ancestor-preserving
    \item For any triple $(v_1,u_1),(v_2,u_2),(v_3,u_3)$ in $M$, the lowest common ancestor of $v_1,v_2$ is not an ancestor or a descendant of $v_3$ if and only if the lowest common ancestor of $u_1,u_2$ is not an ancestor or a descendant of $u_3$.
\end{itemize}
$\delta_S$ is defined as the constrained edit distance~\cite{DBLP:journals/algorithmica/Zhang96} using the following labels and cost function.
Each node is labeled with the birth and death values of its corresponding branch.
For a maximum node $v$, this is the branch ending in $v$.
For a saddle or the global minimum $v$, it is the longest branch starting in $v$.
The authors of~\cite{DBLP:journals/tvcg/SridharamurthyM20} describe multiple cost functions.
For better comparability, we use the same one as for $\delta_W$, which was also proposed in the original paper. Note that in the case of no multi-saddles, for each branch there are two representing nodes.
If there are multi-saddles, then there are also branches only represented by one node.

\section{Remaining Stability Arguments}
\label{sec:perturbation_stability_app}

\subsection{Simple Changes}
\label{sec:stability_simple_app}

We now provide the arguments about stability against simple changes in more detail.
Let $f,g$ be two scalar fields with a minimal vertex perturbation of vertex $x$ by $\epsilon$ between them, such that $T_f \subseteq T_g$ and $\bdt(T_f) \subseteq \bdt(T_g)$.
If $x$ is not a node in one of the merge trees, all distances are $0$ and therefore stable.
Otherwise, we consider different types of edit distances separately.
We have two cases to consider: either $T_f = T_g$ or $T_f \neq T_g$.

\medskip
\noindent
\textbf{BDT-based distances.}
We begin with the BDT-based distances: the Saikia distance and the Wasserstein distance.
Let us first consider the case where $E(T_f) = E(T_g)$ and $E(\bdt(T_f)) = E(\bdt(T_g))$ (up to renaming of a single node), but the scalar value of a node changes.
If the changed node is a maximum or the global minimum, exactly one persistence pair or branch is changed: we can express the change through a single relabel operation.
For both distances, the cost of changing this branch is bounded by $\epsilon$: either the birth or death value is shifted by at most $\epsilon$, meaning that the point in the birth-death space is also shifted by at most $\epsilon$.
If the changed node is a saddle, multiple branches are changed.
However, for each such branch, only the death value is changed by at most $\epsilon$.
Thus, we perform a relabel operation for each changed branch, each cost is clearly bounded by $\epsilon$, yielding a total cost bound of $\deg(T_f)\cdot\epsilon$.

Next, we consider the case where $E(T_f) \neq E(T_g)$.
By Lemma~\ref{lemma:changes}, we know that a single branch is added to or removed from the BDTs.
All other branches remain unchanged.
Since the added or removed node in the BDT is a leaf (by definition of the subgraph relation), this change can be expressed as a single insertion or deletion of a leaf, which is a valid one-degree operation.
The length of the new branch and thereby the cost of the operation is clearly bounded by $\epsilon$.

\medskip
\noindent
\textbf{Classic edit distances.}
The next distances are the classic edit distances: the contour tree alignments and their generalization.
We first consider a simple shift of a node.
Depending on the critical type of the perturbed vertex, either one or several edges change their length.
In all cases, for each edge, only one node per edge is moved by $\epsilon$.
Thus, for each edge, it suffices to apply a relabeling of cost $\epsilon$.
If a node is added or removed, its distance to its parent is bounded by $\epsilon$, and so are the cost of the corresponding insertion or deletion.
Again, by definition of the subgraph relation or Lemma~\ref{lemma:changes}, the node is a leaf and the edit operation is valid in all constrained variants.
We get a total cost bound of $\deg(T_f)\cdot\epsilon$.

\medskip
\noindent
\textbf{Deformation-based distances.}
Here, all arguments for the classic edit distances hold as well, since for the considered cases, the same operations apply.

\medskip
\noindent
\textbf{Branch mapping distance.}
As for classic edit distances, we first consider a simple shift of a node.
Here, the tree and the branch persistence-based branch decomposition remain equal structurally.
Thus, we use the persistence-based branch decomposition for both trees and the identity mapping, which fulfills all branch mapping properties.
Depending on the critical type of the perturbed vertex, either one or several branches change their length, for any branch decomposition.
In all cases, for the amount of branches that are changed is bounded by $\deg(T_f)$ and each relabeling cost is bounded by $\epsilon$ (same argument as for the BDT-based distances).
If a node is added or removed, its distance to its parent is bounded by $\epsilon$, and so are the cost of the corresponding insertion or deletion.
Again, we choose the persistence-based branch decomposition where the inserted branch is a leaf branch.
The corresponding mapping induced by the identity and one leaf branch insertion is a valid branch mapping.
We get a total cost bound of $\deg(T_f)\cdot\epsilon$.

\medskip
\noindent
\textbf{Sridharamurthy distance.}
We again consider the two cases $T_f = T_g$ or $T_f \neq T_g$ separately.
If $T_f = T_g$ and a maximum or the global minimum changes its scalar value, only one single branch is changed.
Thus, at most two nodes change their label and this change can again be bound by $\epsilon$.
If $T_f = T_g$ and a saddle node $v$ changes its scalar value, exactly the $\deg(T_f)-1$ branches originating in $v$ are changed.
Thus, the corresponding $\deg(T_f)-1$ maxima and $v$ itself need to be relabeled.
Each relabel cost is bounded by $\epsilon$.
In both cases, the corresponding edit mapping is the identity, so clearly a valid constrained edit mapping.
If $T_f \neq T_g$, a single new leaf is added as a child of an existing saddle.
This also adds a new branch of length bounded by $\epsilon$ to the branch decomposition.
The new node is labeled by the birth and death values of the new branch, so the insertion cost is also bounded by $\epsilon$.
Overall, the costs can be bounded by $\deg(T_f)\cdot\epsilon$.
The corresponding mapping is the identity on all nodes except the new one, which clearly fulfills the first two properties of a constrained edit mapping.
Since the new node cannot be the lowest common ancestor of any two nodes, the third property is also trivially fulfilled.

\subsection{Edge Splits}

Next, we consider the remaining arguments for stability against edge splits.
Let $f,g$ be two scalar fields with a minimal vertex perturbation of $x$ by $\epsilon$ between them, such that $\obdt(T_f) \subset \obdt(T_g)$ (inclusion strict by Lemma~\ref{lemma:changes}) but neither $T_f \subseteq T_g$ nor $T_f \subseteq T_g$ hold.

\medskip
\noindent
\textbf{Branch mapping distance.}
By Lemma~\ref{lemma:changes}, only one type of change can happen: a single edge is split by a new saddle with a new child.
We can use the persistence-based branch decomposition, in which a leaf branch is inserted.
For all remaining branches, we use the identity mapping and the new branch is unmapped.
We obtain a valid branch mapping and the costs are bound by the insertion cost, in particular also bound by~$\epsilon$.

\medskip
\noindent
\textbf{Sridharamurthy distance.}
By Lemma~\ref{lemma:changes}, only one type of change can happen: a single edge is split by a new saddle with a new child.
In the persistence-based branch decomposition a single leaf branch is inserted.
In the constrained edit sequence, we insert the new saddle node $y$, then the new leaf $x$.
This gives us the identity mapping on $V(T_f)\setminus\{x,y\}$ while $x$ and $y$ are unmapped.
Since $x$ is not in the mapping and $y$ only has one other child, $y$ cannot be the lowest common ancestor of any two nodes in the mapping.
Hence, all three requirements for a constrained edit mapping are fulfilled.

\subsection{Vertical Swaps}

Next, we consider the remaining arguments for stability against vertical swaps.
Let $f,g$ be two scalar fields with a minimal vertex perturbation of $x$ by $\epsilon$ between them, such that $T_f \subseteq T_g$ but neither $\obdt(T_f) \subseteq \obdt(T_g)$ nor $\obdt(T_f) \subseteq \obdt(T_g)$ hold.

\medskip
\noindent
\textbf{BDT-based distances.}
For the BDT-based distances, we show instability.
By Lemma~\ref{lemma:changes}, we know that a branch $a$ is swapped with its unique parent branch $b$.
Constrained edit distances on BDTs cannot handle such changes, i.e.\ they will always match parent to parent and child to child, as the matching has to be ancestor-preserving.
This induces increased costs in case the two branches have other descendants.
As has been illustrated in~\cite{wetzels2022branch}, there are simple counter examples.
Figure~\ref{fig:vertical_instability} shows such an example in accordance with the case distinction in Appendix~\ref{sec:case_distinction} and Lemma~\ref{lemma:changes}.
Both BDT-based distances have to map the two roots, inducing cost of $\epsilon$.
To preserve the ancestor relation of an edit mapping, they cannot map the remaining branches onto each other.
Thus, at least one such branch has to be inserted and deleted, inducing costs of $2x$.
Since $x$ can be chosen freely and $\epsilon$ is fixed, there is no constant $c$ such that the costs are bound by $c \cdot \epsilon$.

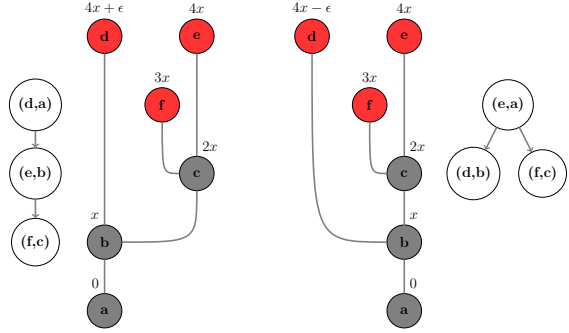
\begin{figure}
    \centering
    \resizebox{0.5\linewidth}{!}{
    \Large
    \bfseries
    \begin{tikzpicture}[xscale=0.8,yscale=0.6]
    
    \node[draw,circle,fill=gray!100,minimum width=1.2cm,label={[above left]$0$}] at (-2, 0) (root_1) {\textbf{a}};
    \node[draw,circle,fill=gray!100,minimum width=1.2cm,label={[above left]$x$}] at (-2, 4) (s1_1) {\textbf{b}};
    \node[draw,circle,fill=red!80,minimum width=1.2cm,label={$4x+\epsilon$}] at (-2, 16) (m1_1) {\textbf{d}};
    \node[draw,circle,fill=red!80,minimum width=1.2cm,label={$4x$}] at (2, 16) (m2_1) {\textbf{e}};
    \node[draw,circle,fill=gray!100,minimum width=1.2cm,label={[above right]$2x$}] at (2, 8) (s2_1) {\textbf{c}};
    \node[draw,circle,fill=red!80,minimum width=1.2cm,label={$3x$}] at (0.5, 12) (m3_1) {\textbf{f}};
    
    \draw[gray,ultra thick] (root_1) -- (s1_1);
    \draw[gray,ultra thick] (s1_1) -- (m1_1);
    \draw[gray,ultra thick] (s2_1) -- (m2_1);
    \draw[gray,ultra thick] (s1_1) .. controls (2,4) .. (s2_1);
    \draw[gray,ultra thick] (s2_1) .. controls (0.5,8) .. (m3_1);

    \begin{scope}[shift={(9,0)}]
    
    \node[draw,circle,fill=gray!100,minimum width=1.2cm,label={[above right]$0$}] at (2, 0) (root_1) {\textbf{a}};
    \node[draw,circle,fill=gray!100,minimum width=1.2cm,label={[above right]$x$}] at (2, 4) (s1_1) {\textbf{b}};
    \node[draw,circle,fill=red!80,minimum width=1.2cm,label={$4x-\epsilon$}] at (-2, 16) (m1_1) {\textbf{d}};
    \node[draw,circle,fill=red!80,minimum width=1.2cm,label={$4x$}] at (2, 16) (m2_1) {\textbf{e}};
    \node[draw,circle,fill=gray!100,minimum width=1.2cm,label={[above right]$2x$}] at (2, 8) (s2_1) {\textbf{c}};
    \node[draw,circle,fill=red!80,minimum width=1.2cm,label={$3x$}] at (0.5, 12) (m3_1) {\textbf{f}};
    
    \draw[gray,ultra thick] (root_1) -- (s1_1);
    \draw[gray,ultra thick] (s1_1) .. controls (-2,4) .. (m1_1);
    \draw[gray,ultra thick] (s2_1) -- (m2_1);
    \draw[gray,ultra thick] (s1_1) .. controls (2,4) .. (s2_1);
    \draw[gray,ultra thick] (s2_1) .. controls (0.5,8) .. (m3_1);

    \end{scope}

    \begin{scope}[shift={(-5,0)}]
        \node[draw,circle] at (0, 12) (n1) {(d,a)};
        \node[draw,circle] at (0, 8) (n2) {(e,b)};
        \node[draw,circle] at (0, 4) (n3) {(f,c)};
        \draw[->,gray,ultra thick] (n1) -- (n2);
        \draw[->,gray,ultra thick] (n2) -- (n3);
    \end{scope}

    \begin{scope}[shift={(14,0)}]
        \node[draw,circle] at (1.5, 12) (n1) {(e,a)};
        \node[draw,circle] at (0, 8) (n2) {(d,b)};
        \node[draw,circle] at (3, 8) (n3) {(f,c)};
        \draw[->,gray,ultra thick] (n1) -- (n2);
        \draw[->,gray,ultra thick] (n1) -- (n3);
    \end{scope}
    
    \end{tikzpicture}
    }
    \caption{Two merge trees (inner trees) together with their BDTs (outer trees). They can be transformed into each other by a minimal vertex perturbation of $d$.}
    \label{fig:vertical_instability}
\end{figure}

\medskip
\noindent
\textbf{Branch mapping distance.}
Since we can pick an arbitrary branch decomposition for both trees, we pick the same one for both trees and use the identity mapping, a valid branch mapping. Only a single branch is relabeled, meaning the costs are bounded by $\epsilon$.

\medskip
\noindent
\textbf{Sridharamurthy distance.}
Consider again the example in Figure~\ref{fig:vertical_instability}.
The labels used by $\delta_S$ are the following:
in $T_1$ we have $\ell_1(d)=\ell_1(a)=(4x+\epsilon,0), \ell_1(e)=\ell_1(b)=(4x,x), \ell_1(f)=\ell_1(c)=(3x,2x)$, in $T_2$ we have $\ell_2(d)=\ell_1(b)=(4x-\epsilon,x), \ell_2(e)=\ell_1(a)=(4x,0), \ell_2(f)=\ell_1(c)=(3x,2x)$.
While the label of $a,b,c,f$ remain almost unchanged, the labels of $d$ and $e$ essentially swap.
Any ancestor-preserving mapping has to map the three leaves onto each other as well as the three inner nodes.
Since the cost of $(d,d),(e,e),(d,f),(e,f)$ are at least $x$, there is exactly one such mapping that induces costs lower than $x$: $(d,e),(e,d)$ and $(v,v)$ $\forall v \notin\{d,e\}$.
However, this is not an edit mapping: mapping $c$ to $c$ but $e$ to $d$ breaks the ancestor-preservation.
Thus, any (constrained) edit mapping has costs of at least $x$ and with $x$ variable there is no constant $c$ such that the costs are bound by $c \cdot \epsilon$.

\subsection{Horizontal Swaps}
Let $f,g$ be two scalar fields with a minimal vertex perturbation of $x$ by $\epsilon$ between them, such that neither $\obdt(T_f) \subseteq \obdt(T_g)$ nor $\obdt(T_g) \subseteq \obdt(T_f)$ and also neither $T_f \subseteq T_g$ nor $T_f \subseteq T_g$ hold.
We now show that neither the branch mapping distance nor the Sridharamurthy distance are stable in this case.

\medskip
\noindent
\textbf{Branch mapping distance.}
We again consider the counter example in Figure~\ref{fig:horizontal_instability}.
Any branch mapping has to map the root branches onto each other.
To avoid costs larger than $x$, the maximum of the main branch has to be chosen equally in both trees.
If we choose $c$, then the ordering of $d$ and $f$ changes between the two trees (in the first tree, one has to branch from the other, while in the second tree the both branch from $(c,a)$), inducing costs of at least $x$.
The same holds for main branch $(d,a)$.
If we pick $f$, then in both trees the other two branches branch from $(f,a)$, but their orderings swaps, again inducing costs of at least $x$.
Thus, there is no constant $c$ such that the costs are bound by $c \cdot \epsilon$.

\medskip
\noindent
\textbf{Sridharamurthy distance.}
We again consider the counter example in Figure~\ref{fig:horizontal_instability}.
The labels used by $\delta_S$ are the following:
in $T_1$ we have $\ell_1(c)=\ell_1(a)=(4x,0), \ell_1(d)=\ell_1(b)=(3x,x-\epsilon), \ell_1(f)=\ell_1(e)=(2x,x)$, in $T_2$ we have $\ell_1(c)=\ell_1(a)=(4x,0), \ell_1(d)=\ell_1(e)=(3x,x), \ell_1(f)=\ell_1(b)=(2x,x+\epsilon)$.
Any bijective ancestor-preserving mapping has to map the three leaves onto each other as well as the three inner nodes.
Any non-bijective mapping induces costs of at least $x$.
Since the relabel costs between any non-equal branches are at least $x$, the only mapping of cost smaller than $x$ is the identity, which is not a valid edit mapping on this pair of trees.
Thus, any (constrained) edit mapping has costs of at least $x$ and with $x$ variable there is no constant $c$ such that the costs are bound by $c \cdot \epsilon$.

\end{appendices}

\bibliographystyle{plain}
\bibliography{main}

\begin{thebibliography}{10}

\bibitem{DBLP:journals/siamcomp/BackursI18}
Arturs Backurs and Piotr Indyk.
\newblock Edit distance cannot be computed in strongly subquadratic time (unless {SETH} is false).
\newblock {\em {SIAM} J. Comput.}, 47(3):1087--1097, 2018.

\bibitem{DBLP:conf/3dor/BauerFL16}
Ulrich Bauer, Barbara~Di Fabio, and Claudia Landi.
\newblock An edit distance for reeb graphs.
\newblock In Alfredo Ferreira, Andrea Giachetti, and Daniela Giorgi, editors, {\em 9th Eurographics Workshop on 3D Object Retrieval, 3DOR@Eurographics 2016, Lisbon, Portugal, May 8, 2016}. Eurographics Association, 2016.

\bibitem{DBLP:journals/focm/BauerLM21}
Ulrich Bauer, Claudia Landi, and Facundo M{\'{e}}moli.
\newblock The reeb graph edit distance is universal.
\newblock {\em Found. Comput. Math.}, 21(5):1441--1464, 2021.

\bibitem{BeketayevYMWH14}
Kenes Beketayev, Damir Yeliussizov, Dmitriy Morozov, Gunther~H. Weber, and Bernd Hamann.
\newblock Measuring the distance between merge trees.
\newblock In Peer{-}Timo Bremer, Ingrid Hotz, Valerio Pascucci, and Ronald Peikert, editors, {\em Topological Methods in Data Analysis and Visualization III, Theory, Algorithms, and Applications}, pages 151--165. Springer, 2014.

\bibitem{treeEditSurvey}
Philip Bille.
\newblock A survey on tree edit distance and related problems.
\newblock {\em Theoretical Computer Science}, 337(1-3):217--239, 2005.

\bibitem{DBLP:journals/corr/abs-2110-05631}
Brian Bollen, Erin~W. Chambers, Joshua~A. Levine, and Elizabeth Munch.
\newblock Reeb graph metrics from the ground up.
\newblock {\em CoRR}, abs/2110.05631, 2021.

\bibitem{DBLP:journals/tvcg/BollenTL23}
Brian Bollen, Pasindu Tennakoon, and Joshua~A. Levine.
\newblock Computing a stable distance on merge trees.
\newblock {\em {IEEE} Trans. Vis. Comput. Graph.}, 29(1):1168--1177, 2023.

\bibitem{DBLP:series/mcs/BronsteinBK09}
Alexander~M. Bronstein, Michael~M. Bronstein, and Ron Kimmel.
\newblock {\em Numerical Geometry of Non-Rigid Shapes}.
\newblock Monographs in Computer Science. Springer, 2009.

\bibitem{interleaving_distance}
Fr{\'{e}}d{\'{e}}ric Chazal, David Cohen{-}Steiner, Marc Glisse, Leonidas~J. Guibas, and Steve Oudot.
\newblock Proximity of persistence modules and their diagrams.
\newblock In John Hershberger and Efi Fogel, editors, {\em Proceedings of the 25th {ACM} Symposium on Computational Geometry, Aarhus, Denmark, June 8-10, 2009}, pages 237--246. {ACM}, 2009.

\bibitem{Cohen-Steiner2007}
David Cohen{-}Steiner, Herbert Edelsbrunner, and John Harer.
\newblock Stability of persistence diagrams.
\newblock {\em Discret. Comput. Geom.}, 37(1):103--120, 2007.

\bibitem{edelsbrunner09}
Herbert Edelsbrunner and John Harer.
\newblock {\em Computational Topology - an Introduction}.
\newblock American Mathematical Society, 2010.

\bibitem{DBLP:conf/focs/EdelsbrunnerLZ00}
Herbert Edelsbrunner, David Letscher, and Afra Zomorodian.
\newblock Topological persistence and simplification.
\newblock In {\em 41st Annual Symposium on Foundations of Computer Science, {FOCS} 2000, 12-14 November 2000, Redondo Beach, California, {USA}}, pages 454--463. {IEEE} Computer Society, 2000.

\bibitem{DBLP:journals/entcs/FabioL12}
Barbara~Di Fabio and Claudia Landi.
\newblock Stability of reeb graphs of closed curves.
\newblock In Lisbeth Fajstrup, Eric Goubault, and Martin Raussen, editors, {\em Proceedings of the workshop on Geometric and Topological Methods in Computer Science, {GETCO} 2010, Aalborg, Denmark, January 11-15, 2010}, volume 283 of {\em Electronic Notes in Theoretical Computer Science}, pages 71--76. Elsevier, 2010.

\bibitem{DBLP:journals/dcg/FabioL16}
Barbara~Di Fabio and Claudia Landi.
\newblock The edit distance for reeb graphs of surfaces.
\newblock {\em Discret. Comput. Geom.}, 55(2):423--461, 2016.

\bibitem{intrinsicMTdistance}
Ellen Gasparovic, Elizabeth Munch, Steve Oudot, Katharine Turner, Bei Wang, and Yusu Wang.
\newblock Intrinsic interleaving distance for merge trees.
\newblock {\em CoRR}, 1908.00063, 2019.

\bibitem{HilagaSKK01}
Masaki Hilaga, Yoshihisa Shinagawa, Taku Komura, and Tosiyasu~L. Kunii.
\newblock {Topology matching for fully automatic similarity estimation of 3D shapes}.
\newblock In {\em ACM SIGGRAPH}, 2001.

\bibitem{DBLP:journals/cgf/LohfinkWLWG20}
Anna~Pia Lohfink, Florian Wetzels, Jonas Lukasczyk, Gunther~H. Weber, and Christoph Garth.
\newblock Fuzzy contour trees: Alignment and joint layout of multiple contour trees.
\newblock {\em Comput. Graph. Forum}, 39(3):343--355, 2020.

\bibitem{morozov14}
Dmitriy Morozov, Kenes Beketayev, and Gunther~H. Weber.
\newblock Interleaving distance between merge trees.
\newblock In {\em TopoInVis}. 2014.

\bibitem{DBLP:conf/apvis/NarayananTN15}
Vidya Narayanan, Dilip~Mathew Thomas, and Vijay Natarajan.
\newblock Distance between extremum graphs.
\newblock In Shixia Liu, Gerik Scheuermann, and Shigeo Takahashi, editors, {\em 2015 {IEEE} Pacific Visualization Symposium, PacificVis 2015, Hangzhou, China, April 14-17, 2015}, pages 263--270. {IEEE} Computer Society, 2015.

\bibitem{oesterling_time_varying_mergetrees}
Patrick Oesterling, Christian Heine, Gunther~H. Weber, Dmitriy Morozov, and Gerik Scheuermann.
\newblock Computing and visualizing time-varying merge trees for high-dimensional data.
\newblock In Hamish Carr, Christoph Garth, and Tino Weinkauf, editors, {\em Topological Methods in Data Analysis and Visualization IV}, pages 87--101, Cham, 2017. Springer International Publishing.

\bibitem{pegoraro2024finitelystableeditdistance}
Matteo Pegoraro.
\newblock A finitely stable edit distance for merge trees, 2024.

\bibitem{pegoraro2024functionaldatarepresentationmerge}
Matteo Pegoraro and Piercesare Secchi.
\newblock Functional data representation with merge trees, 2024.

\bibitem{DBLP:journals/tvcg/PontT24}
Mathieu Pont and Julien Tierny.
\newblock Wasserstein auto-encoders of merge trees (and persistence diagrams).
\newblock {\em {IEEE} Trans. Vis. Comput. Graph.}, 30(9):6390--6406, 2024.

\bibitem{DBLP:journals/tvcg/PontVDT22}
Mathieu Pont, Jules Vidal, Julie Delon, and Julien Tierny.
\newblock Wasserstein distances, geodesics and barycenters of merge trees.
\newblock {\em {IEEE} Trans. Vis. Comput. Graph.}, 28(1):291--301, 2022.

\bibitem{DBLP:journals/tvcg/PontVT23}
Mathieu Pont, Jules Vidal, and Julien Tierny.
\newblock Principal geodesic analysis of merge trees (and persistence diagrams).
\newblock {\em {IEEE} Trans. Vis. Comput. Graph.}, 29(2):1573--1589, 2023.

\bibitem{DBLP:journals/corr/abs-2404-05879}
Yu~Qin, Brittany~Terese Fasy, Carola Wenk, and Brian Summa.
\newblock Rapid and precise topological comparison with merge tree neural networks.
\newblock {\em CoRR}, abs/2404.05879, 2024.

\bibitem{DBLP:journals/cgf/SaikiaSW14}
Himangshu Saikia, Hans{-}Peter Seidel, and Tino Weinkauf.
\newblock Extended branch decomposition graphs: Structural comparison of scalar data.
\newblock {\em Comput. Graph. Forum}, 33(3):41--50, 2014.

\bibitem{DBLP:journals/ipl/Selkow77}
Stanley~M. Selkow.
\newblock The tree-to-tree editing problem.
\newblock {\em Inf. Process. Lett.}, 6(6):184--186, 1977.

\bibitem{DBLP:journals/tvcg/SridharamurthyM20}
Raghavendra Sridharamurthy, Talha~Bin Masood, Adhitya Kamakshidasan, and Vijay Natarajan.
\newblock Edit distance between merge trees.
\newblock {\em {IEEE} Trans. Vis. Comput. Graph.}, 26(3):1518--1531, 2020.

\bibitem{DBLP:journals/tvcg/ThomasN11}
Dilip~Mathew Thomas and Vijay Natarajan.
\newblock Symmetry in scalar field topology.
\newblock {\em {IEEE} Trans. Vis. Comput. Graph.}, 17(12):2035--2044, 2011.

\bibitem{ThomasN13}
Dilip~Mathew Thomas and Vijay Natarajan.
\newblock Detecting symmetry in scalar fields using augmented extremum graphs.
\newblock {\em {IEEE} Trans. Vis. Comput. Graph.}, 19(12):2663--2672, 2013.

\bibitem{DBLP:journals/tvcg/TiernyFLGM18}
Julien Tierny, Guillaume Favelier, Joshua~A. Levine, Charles Gueunet, and Michael Michaux.
\newblock The topology toolkit.
\newblock {\em {IEEE} Trans. Vis. Comput. Graph.}, 24(1):832--842, 2018.

\bibitem{vidal_vis19}
Jules Vidal, Joseph Budin, and Julien Tierny.
\newblock Progressive wasserstein barycenters of persistence diagrams.
\newblock {\em {IEEE} Trans. Vis. Comput. Graph.}, 26(1):151--161, 2020.

\bibitem{taming}
Florian Wetzels, Markus Anders, and Christoph Garth.
\newblock Taming horizontal instability in merge trees: On the computation of a comprehensive deformation-based edit distance.
\newblock In {\em 2023 Topological Data Analysis and Visualization (TopoInVis)}, pages 82--92, 2023.

\bibitem{wetzels2022path}
Florian Wetzels and Christoph Garth.
\newblock A deformation-based edit distance for merge trees.
\newblock In {\em 2022 Topological Data Analysis and Visualization (TopoInVis)}, pages 29--38, 2022.

\bibitem{wetzels2022branch}
Florian Wetzels, Heike Leitte, and Christoph Garth.
\newblock Branch decomposition-independent edit distances for merge trees.
\newblock {\em Computer Graphics Forum}, 41(3):367--378, 2022.

\bibitem{DBLP:journals/corr/abs-2501-05529}
Florian Wetzels, Heike Leitte, and Christoph Garth.
\newblock Accelerating computation of stable merge tree edit distances using parameterized heuristics.
\newblock {\em CoRR}, abs/2501.05529, 2025.

\bibitem{10.2312:evs.20241069}
Florian Wetzels, Talha~Bin Masood, Nanna~Holmgaard List, Ingrid Hotz, and Christoph Garth.
\newblock {Exploring Electron Density Evolution using Merge Tree Mappings}.
\newblock In Christian Tominski, Manuela Waldner, and Bei Wang, editors, {\em EuroVis 2024 - Short Papers}. The Eurographics Association, 2024.

\bibitem{wetzels24evolution}
Florian Wetzels, Talha~Bin Masood, Nanna~Holmgaard List, Ingrid Hotz, and Christoph Garth.
\newblock {Exploring Electron Density Evolution using Merge Tree Mappings}.
\newblock In Christian Tominski, Manuela Waldner, and Bei Wang, editors, {\em EuroVis 2024 - Short Papers}. The Eurographics Association, 2024.

\bibitem{DBLP:journals/tvcg/WetzelsPTG24}
Florian Wetzels, Mathieu Pont, Julien Tierny, and Christoph Garth.
\newblock Merge tree geodesics and barycenters with path mappings.
\newblock {\em {IEEE} Trans. Vis. Comput. Graph.}, 30(1):1095--1105, 2024.

\bibitem{Yan_geometry_aware}
Lin Yan, Talha Bin~Masood, Farhan Rasheed, Ingrid Hotz, and Bei Wang.
\newblock Geometry aware merge tree comparisons for time-varying data with interleaving distances.
\newblock {\em IEEE Transactions on Visualization and Computer Graphics}, pages 1--1, 2022.

\bibitem{surveyComparison2021}
Lin Yan, Talha~Bin Masood, Raghavendra Sridharamurthy, Farhan Rasheed, Vijay Natarajan, Ingrid Hotz, and Bei Wang.
\newblock Scalar field comparison with topological descriptors: Properties and applications for scientific visualization.
\newblock {\em Comput. Graph. Forum}, 40(3):599--633, 2021.

\bibitem{DBLP:journals/algorithmica/Zhang96}
Kaizhong Zhang.
\newblock A constrained edit distance between unordered labeled trees.
\newblock {\em Algorithmica}, 15(3):205--222, 1996.

\bibitem{DBLP:journals/siamcomp/ZhangS89}
Kaizhong Zhang and Dennis~E. Shasha.
\newblock Simple fast algorithms for the editing distance between trees and related problems.
\newblock {\em {SIAM} J. Comput.}, 18(6):1245--1262, 1989.

\bibitem{DBLP:journals/ipl/ZhangSS92}
Kaizhong Zhang, Richard Statman, and Dennis~E. Shasha.
\newblock On the editing distance between unordered labeled trees.
\newblock {\em Inf. Process. Lett.}, 42(3):133--139, 1992.

\end{thebibliography}

\end{document}